\documentclass[11pt]{article}
\usepackage{fullpage}
\parindent = 0 pt
\parskip = 8 pt
\addtolength{\textheight}{0.4in}

\pdfoutput=1

\usepackage{amsmath, amsthm, amssymb, bm}
\usepackage{cite}
\usepackage{graphicx}
\usepackage{epstopdf}
\usepackage{multirow}
\usepackage{url}
\usepackage{hyperref, color} 
\usepackage{subfigure} 

\newtheorem{defn}{Definition}
\newtheorem{thm}{Theorem}
\newtheorem{lem}{Lemma}
\newtheorem{cor}{Corollary}
\newtheorem{prop}{Proposition}

\newcommand{\vm}[1]{\boldsymbol{#1}}
\renewcommand{\P}[1]{\operatorname{P}\left\{#1\right\}}
\newcommand{\E}{\operatorname{E}}
\newcommand{\tr}[1]{\operatorname{Tr}\left(#1\right)}
\newcommand{\rank}[1]{\operatorname{rank}\left(#1\right)}
\DeclareMathOperator{\vc}{vec}
\newcommand{\argmin}{\operatorname{argmin}}

\newcommand{\T}{{\operatorname{T}}}
\newcommand{\F}{{\operatorname{F}}}

\newcommand{\er}{\mathrm{e}}
\renewcommand{\j}{\mathrm{j}}
\newcommand{\vct}[1]{\bm{#1}}

\def\y{\vm{Y}}
\def\x{\vm{X}}
\def\u{\boldsymbol{U}}
\def\v{\boldsymbol{V}}
\def\f{\vm{F}}

\def\A{\vm{A}}
\def\z{\vm{Z}}
\def\w{\vm{W}}

\def\k{\kappa}
\def\ff{\vm{f}}
\def\O{\Omega}
\def\o{n}
\def\d{\vm{d}}
\def\op{\nu}

\newcommand{\R}{\mathbb{R}}
\newcommand{\C}{\mathbb{C}}
\newcommand{\uu}{\vct{u}}
\newcommand{\vv}{\vct{v}}

\def\PT{\mathcal{P}_T}
\def\PTc{\mathcal{P}_{T^\perp}}
\def\cA{\mathcal{A}}
\def\cB{\mathcal{B}}
\def\cL{\mathcal{L}}
\def\reals{\mathbb{R}}
\def\comps{\mathbb{C}}

\newcommand{\<}{\langle}
\renewcommand{\>}{\rangle}
 
\begin{document}

\title{Compressive Multiplexing of Correlated Signals}

\author{Ali Ahmed and Justin Romberg\thanks{School of Electrical and Computer Engineering, Georgia Tech, Atlanta, GA.  Email: aliahmed@uet.edu.pk, jrom@ece.gatech.edu.  This work was supported by ONR grant N00014-11-1-0459 and NSF grant CNS-0910592.  Submitted to the IEEE Transactions on Information Theory on August 22, 2013.}}

\date{\today}
\maketitle
\begin{abstract}

We present a general architecture for the acquisition of ensembles of correlated signals.  The signals are multiplexed onto a single line by mixing each one against a different code and then adding them together, and the resulting signal is sampled at a high rate.  We show that if the $M$ signals, each bandlimited to $W/2$ Hz, can be approximated by a superposition of $R < M$ underlying signals, then the ensemble can be recovered by sampling at a rate within a logarithmic factor of $RW$ (as compared to the cumulative Nyquist rate of $MW$).  This sampling theorem shows that the correlation structure of the signal ensemble can be exploited in the acquisition process even though it is unknown a priori.

The reconstruction of the ensemble is recast as a low-rank matrix recovery problem from linear measurements.  The architectures we are considering impose a certain type of structure on the linear operators.  Although our results depend on the mixing forms being random, this imposed structure results in a very different type of random projection than those analyzed in the low-rank recovery literature to date.
\end{abstract}

\section{Introduction}
In this paper, we propose and analyze two multiplexing architectures for the sub-Nyquist acquisition of ensembles of correlated signals. The problem is illustrated in Figure~\ref{fig:CM-M-Mux}: $M$ signals, each of which is bandlimited to $W/2$ radians/sec, are outputs from different sensors.  Our goal is to combine this ensemble into a single signal 
which is then sampled with a standard analog-to-digital converter (ADC).  A conventional way of combining the signals is to use a frequency multiplexer: the signals are modulated to different frequency bands of size $W$ by pre-multiplying them by sinusoids at different frequencies before they are combined.  The signals occupy disjoint bands inside of this combination, so they can be easily separated, and the combined signal has a total bandwidth of $MW/2$ so it can be sampled at  $MW$ samples per second.  Alternatively,  the signals might be time multiplexed at the input of the ADC, again resulting in an overall sampling rate  of $MW$.

\begin{figure}[ht]
  \begin{center}
    \includegraphics[trim=5.5cm 5.5cm 5cm 5cm, clip = true, scale = 0.7]{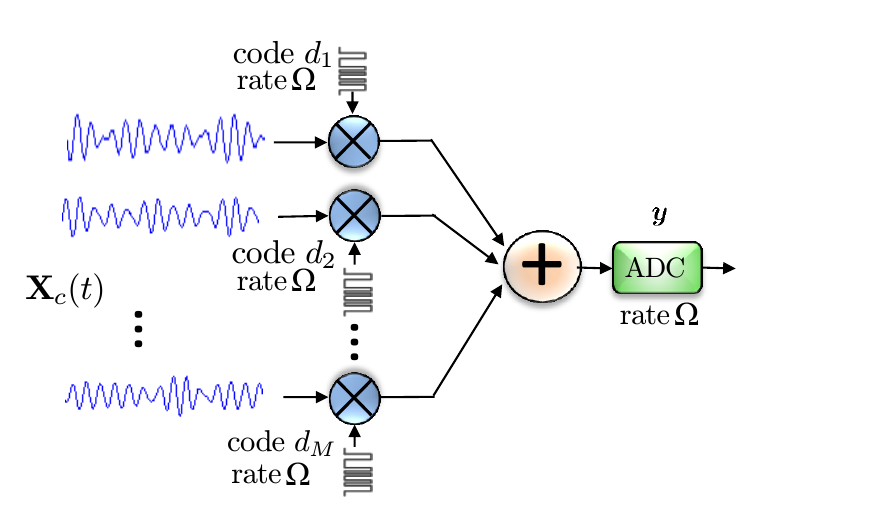}
    \end{center}
    \caption{\small\sl The M-Mux for the efficient acquisition of correlated ensembles.  Signals $\{x_m(t)\}_{1 \leq m \leq M}$ in the ensemble $\x_c(t)$ are multiplied by independently generated random binary waveform $d_1(t), d_2(t), \ldots, d_M(t)$, respectively. The binary waveforms alternate at rate $\Omega$. After the modulation the signals are added and sampled at rate $\Omega$. The reconstruction algorithm uses the nuclear-norm minimization.}
  \label{fig:CM-M-Mux}
\end{figure}

We will show that if the signals are correlated, meaning that the ensemble can be written as (or closely approximated by) distinct linear combinations of $R\ll M$ latent signals, then this net sampling rate can be reduced considerably using random modulators, where the signals are pre-multiplied against random binary waveforms before they are combined.  The multiplexed sampling architectures, we propose are blind to the {\em correlation structure} of the signals; this structure is discovered as the signals are reconstructed.

We recast the problem of recovering the signal ensemble as recovering a low-rank matrix from an incomplete set of linear measurements.  Over the course of one second, we want to acquire an $M\times W$ matrix comprised of samples of the ensemble taken at the Nyquist rate (see Figures~\ref{fig:CM-Fig1} and \ref{fig:CM-Fig2}), and each sample the ADC outputs in this time frame can be written as a different linear combination of the entries in this matrix.  The conditions (on the signals and the acquisition system) under which this type of recovery is effective have undergone intensive study in the recent literature \cite{fazel02ma,recht10gu,candes09ex,keshavan10ma,fazel2008compressed,gross11re,candes10ma}.  
The main contribution of this paper is to show that similar recovery guarantees can be made for measurements with the type of {\em structured randomness} imposed by our multiplexing architecture.  In the context of signal processing, Theorems~\ref{thm:CM-exactrec-CM1},~\ref{thm:CM-stablerec-CM1}, and \ref{thm:CM-exactstablerec-CM2} in Sections~\ref{sec:sampling-thm-m-mux} and \ref{sec:sampling-thm-fm-mux} below provide new sampling theorems for ensembles of correlated signals; in the context of linear algebra, they demonstrate that a low-rank matrix can be recovered from a new kind of low-dimensional random projection whose structure allows efficient computation.

\begin{figure}[htp]
  \begin{center}
	\includegraphics[height=2in]{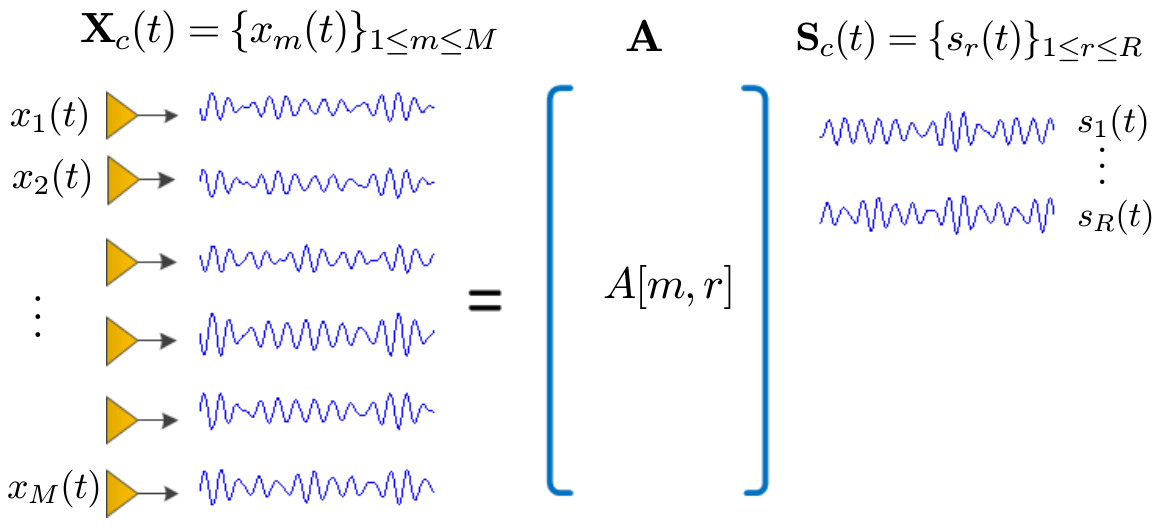}
    \end{center}
    \caption{\small\sl Acquire an ensemble of $M$ signals, each bandlimited to $W/2$ radians per second. The signals are {\em correlated}, i.e., $M$ signals can be well approximated by the linear combination of $R$ underlying signals. Therefore, we can write $M$ signals in  ensemble $\vm{X}_c(t)$ (on the left) as a tall matrix (a correlation structure) multiplied by an ensemble of $R$ underlying independent signals.}
  \label{fig:CM-Fig1}
\end{figure} 

\begin{figure}[htp]
  \begin{center}
	\includegraphics[height=2in]{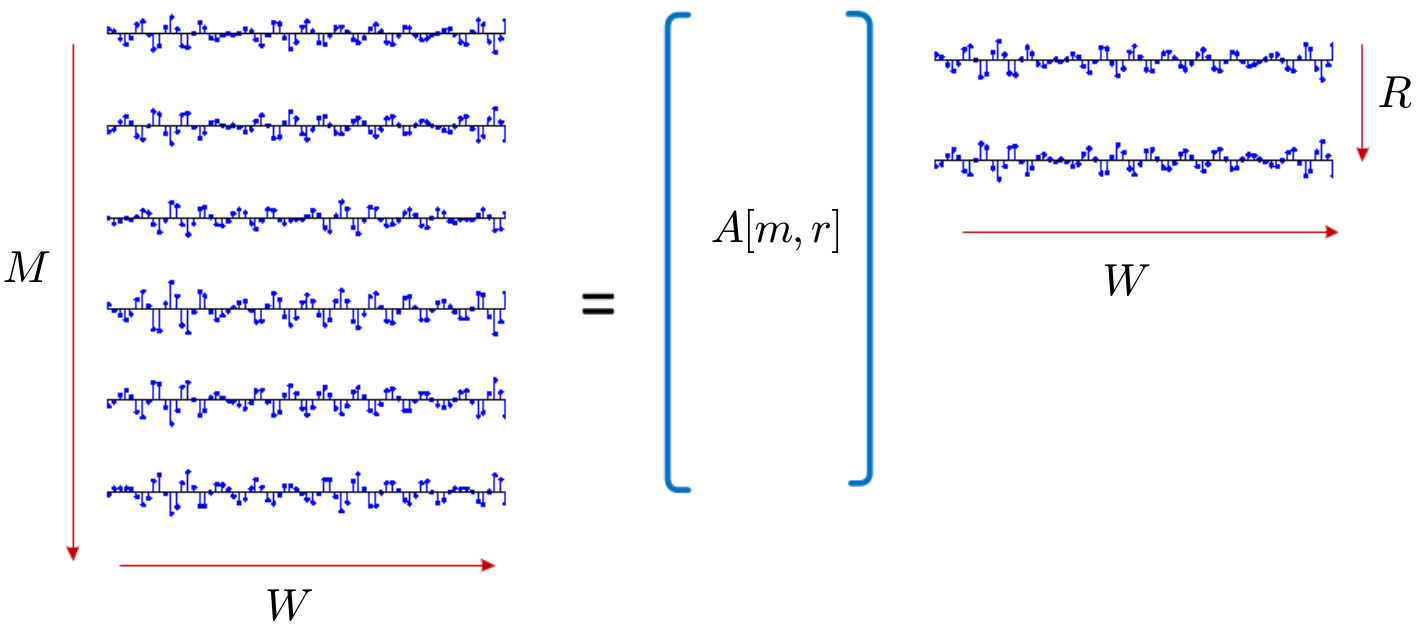}
    \end{center}
    \caption{\small\sl Samples $\vm{X}$ of ensemble $\vm{X}_c(t)$ inherit the low-rank property. Therefore, the problem of recovering $\vm{X_c(t)}$ from samples at a sub-Nyquist rate can be recast as a low-rank matrix recovery problem from partial-generalized measurements.}
  \label{fig:CM-Fig2}
\end{figure} 

This paper analyzes the two compressive multiplexing architectures illustrated in Figures~\ref{fig:CM-M-Mux} and \ref{fig:CM-FM-Mux}.  The first architecture, which we call M-Mux (for Modulated Multiplexing), can be broken into two parts. First, the $M$ input signals $\{x_m(t)\}_{ 1\leq m \leq M}$ are modulated against binary waveforms $\{d_m(t)\}_{ 1\leq m \leq M}$.  The minimum distance between polarity changes in the $d_m(t)$ is $1/\O$.  Second, the signals are added together and then sampled uniformly at rate $\O$ to produce measurements $y[n]$.  Theorem~\ref{thm:CM-exactrec-CM1} below shows that if the input ensemble can be written as a linear combination of $R$ latent signals (as in Figure~\ref{fig:CM-Fig1}),
\[
	\x_c(t) = \{x_m(t): x_m(t) = \sum_{ r = 1}^R A[m,r]s_r(t),~ 1\leq m \leq M\},
\]
and the energy in the signals is not too concentrated in a short interval of time, then they can be recovered when $\O\sim R(M+W)\log^3(MW)$.  When $R\ll M$, then this improves on the cumulative Nyquist rate of $MW$.  The second architecture, shown in Figure \ref{fig:CM-FM-Mux}, adds a linear time-invariant filter in front of the modulators whose purpose is to ensure that the signals are spread out in time --- we call this FM-Mux (Filtered and Modulated Multiplexing).  If the impulse responses of these filters are long and diverse, then the signal ensemble can be recovered when $\O\sim R(M+W)\log^5(MW)$ regardless of its structure in time; this is codified in Theorem~\ref{thm:CM-exactstablerec-CM2}.

\begin{figure}[htp]
  \begin{center}
    \includegraphics[trim = 2cm 11cm 2cm 0cm, scale = 0.6]{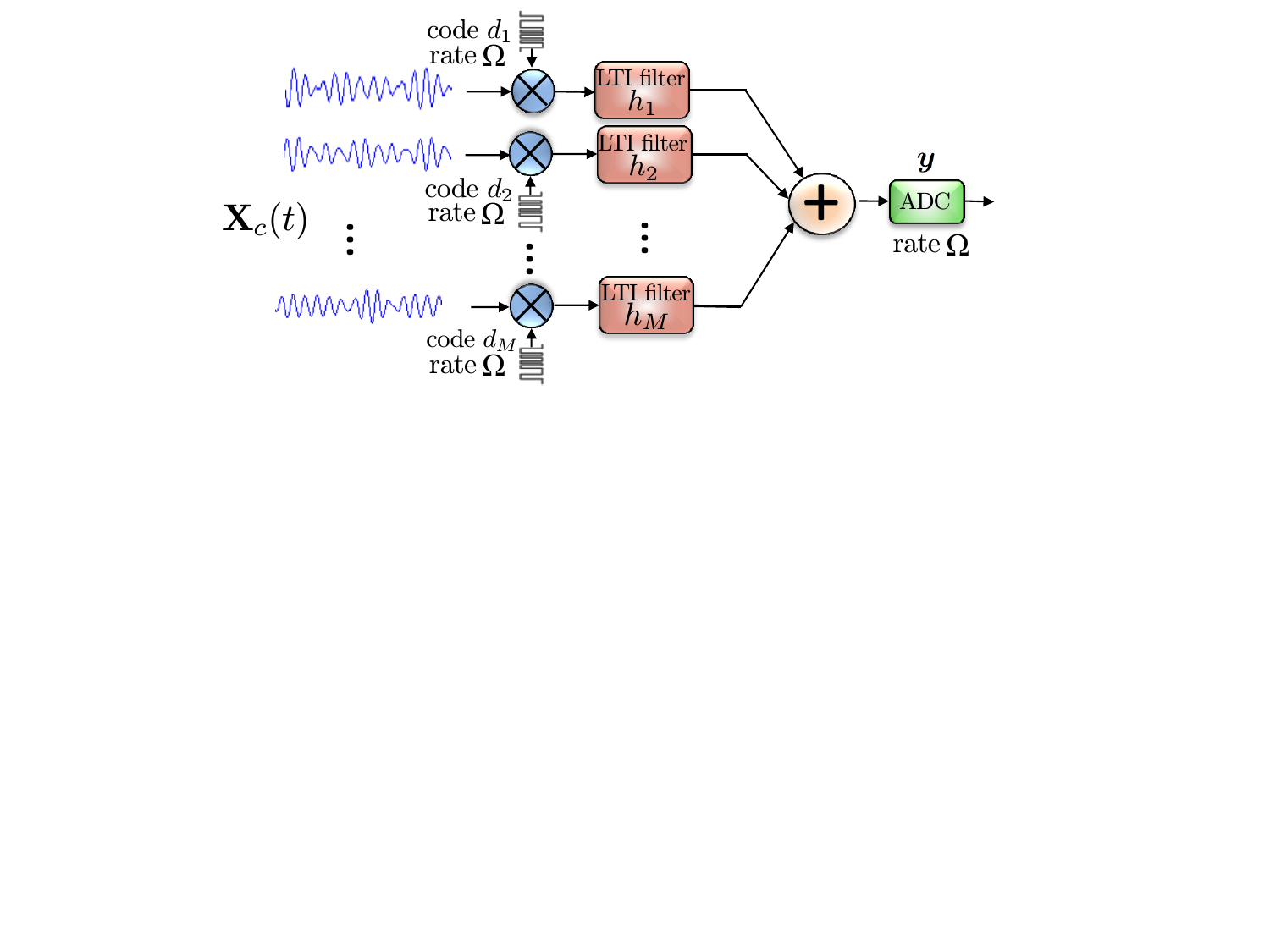}
    \end{center}
    \caption{\small\sl The FM-Mux for the efficient acquisition of correlated signals. Each of the input signal $\{x_m(t)\}_{ 1\leq m \leq M}$ is modulated separately with $\pm 1$-binary waveform $\{d_m(t)\}_{ 1 \leq m \leq M}$ alternating at rate $\O$. Afterward, the signals are convolved with diverse waveforms using random LTI filters in each channel. The resultant signals are then combined and sampled at a rate $\O$ using a single ADC.}
  \label{fig:CM-FM-Mux}
\end{figure}

We will use different mathematical tools to analyze these two multiplexing architectures.  The arguments for the FM-Mux (Figure~\ref{fig:CM-FM-Mux}) are more straightforward, and this architecture is more powerful in that it is universal (i.e.\ it is effective for any type of correlation structure and signal energy distribution).  However, it is probably the case that the M-Mux (Figure~\ref{fig:CM-M-Mux}) is more practical; in fact, this type of multichannel random modulator has been implemented previously for applications in radar signal processing and communications \cite{laska07th,yoo12a100,yoo12co,mishali11xasub,murray11de}.

The paper is organized as follows.  In the remainder of this section, we present some applications and the related work. Section \ref{sec:CM-MainResults} illustrates main results and sampling theorems for each of the multiplexing architecture. Section \ref{sec:CM-Exps} contains some illustrative numerical simulations.  Sections~\ref{sec:CM-Theory1}, \ref{sec:CM-Theory2}, and \ref{sec:CM-Theory3} provide the proofs of the sampling theorems.

\subsection{Notation}
Unless specified otherwise, we use uppercase bold, lowercase bold, and not bold letters for matrices, vectors, and scalars, respectively. For example, $\vm{X}$ denotes a matrix, $\vm{x}$ represents a vector, and $x$ refers to a scalar. Calligraphic letters such as $\mathcal{A}$ specify linear operators. The letter $c$ refers to a constant number, which may
not refer to the same number every time it is used. The notations $\|\cdot\|$, $\|\cdot\|_*$, and $\|\cdot\|_{\F}$ denote the
operator, nuclear, and Frobenius norms of the matrices,
respectively. Furthermore, we will use $\|\cdot\|_2$, and $\|\cdot\|_1$
to represent the vector $\ell_2$, and $\ell_1$ norms.

\subsection{Example application: Micro-sensor arrays}
\label{sec:CM-app}

In many applications in array processing, wavefronts incident on a large number of closely located antenna arrays generate signals that are highly correlated. This is especially true for micro-sensor arrays found, for example, in modern on-chip radars, tactile sensors in robotics, and microelectrode arrays (MEAs) used to study neural activity. 
In several of these array processing applications, we want to estimate signal parameters, such as angle of arrival, and frequency offsets. The first step towards achieving this is to estimate the covariance matrix of the input signal ensemble, and then use this to further estimate particular parameters (one example of this is the MUSIC algorithm \cite{schmidt86mu} for multiple emitter direction of arrival estimation).
The rank of the covariance matrix of a correlated signal ensemble
composed of $R$ latent independent signals is always $R$. 
In an on-chip radar, and other micro-sensor array applications, where limiting the number of samples might help meet design constraints (by reducing power, etc), compressive multiplexers can be used to estimate the covariance matrix from a smaller number total of samples on a single line than sampling each signal at the Nyquist rate directly. 

As multiplexing is a particular challenge in several biosensing applications, we will briefly discuss some motivating details of one such application, where the task is to monitor neural activity in brain tissues.

Neuronal recordings are used to study how different stimuli are encoded and processed by the firing of neurons.  The recordings are made by inserting an array of electrodes into the brain of an animal, and measuring the electrical activity.  Figure~\ref{fig:CM-NeuralApp} illustrates a typical geometry for such a device, and contains plots of a recording in an actual experiment performed as part of an effort to understand neuronal activity resulting from certain types of a visual stimuli.  This particular experiment\footnote{The data used in this figure comes from crcns.org, an open database for brain experimental data; the particular dataset can be found at \cite{dataset}.} used a microelectrode array containing 54 recording sites, and the plots in Figure~\ref{fig:CM-NeuralApp}(b) make it clear that subsets of the signals are highly correlated.

\begin{figure*}[htp]
  \begin{center}
    \includegraphics[scale =0.5 ]{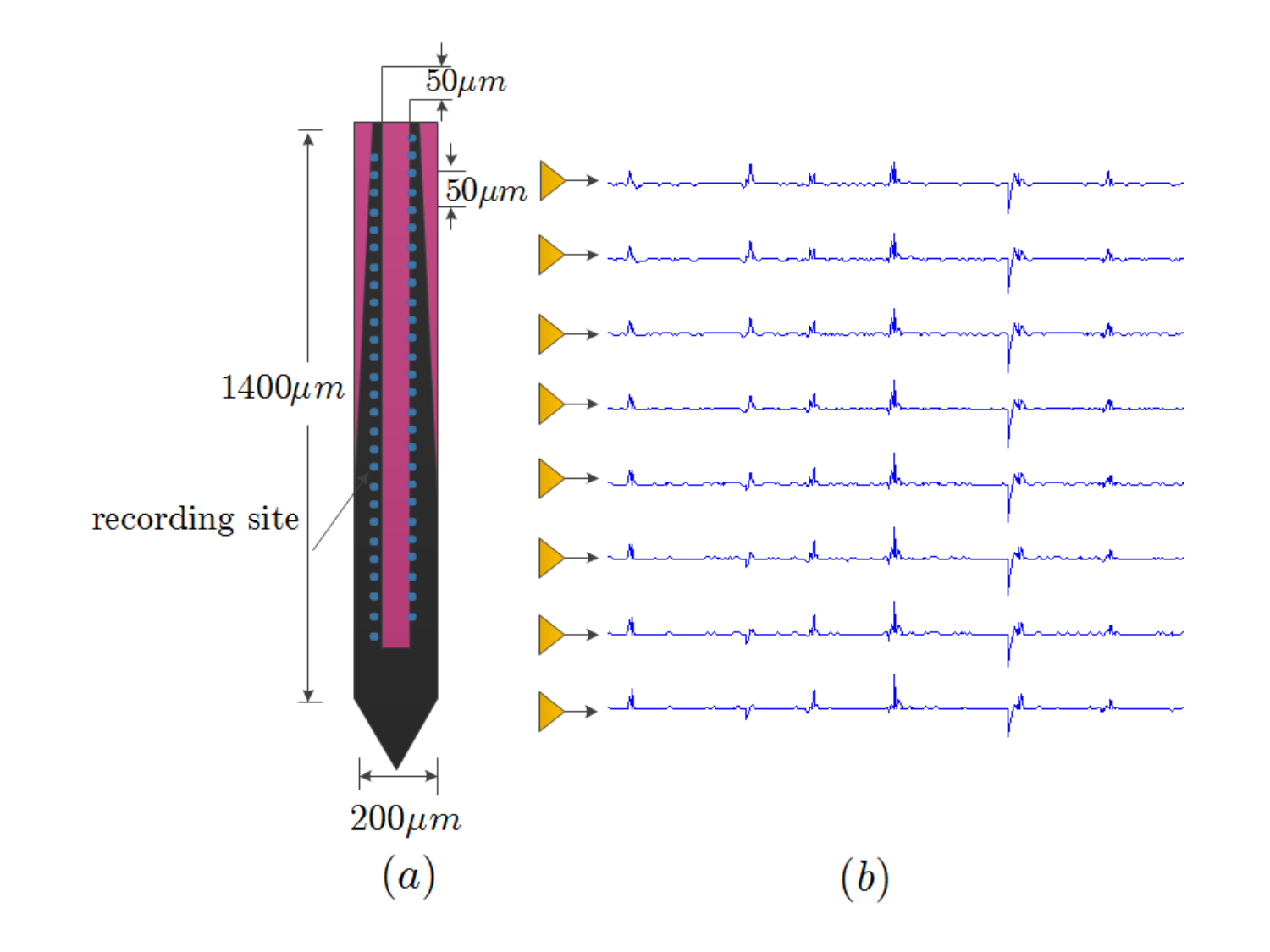}
    \end{center}
    \caption{\small\sl Application in neuronal recordings from brain tissues. (a) A microelectrode array with fifty-four recording sites, shown as blue dots, arranged in two columns 50$\mu m$ apart. Arrays with dense recording sites provide detailed field recordings and span roughly 1$mm$ of the brain tissue\cite{blanche2005polytrodes}. (b) The signals recorded by sensors in a real experiment. The data is taken from \cite{dataset}.}
  \label{fig:CM-NeuralApp}
\end{figure*}

In general, high density MEAs containing tens of thousands of recording sites; see, for example, \cite{frey07ce,imfeld08la,haas08pr,gray04di}, are used to record measurements at a high spatial and temporal resolutions in various biosensing applications. The thousands of signals recorded are multiplexed, continuously sampled by ADCs, and streamed to a hard disk at a high quantization resolution. This process generates massive amounts; on the orders of several gigabits per second (Gbps), of data. In particular, \cite{imfeld08la} describes a data acquisition platform for a microelectrode array containing 4096 recording sites. The signals are multiplexed onto fewer channels and then acquired using ADCs. The rate at which ADCs operate is determined by the acquisition requirement of 12-bit quantization resolution with a sampling rate of 20,000 samples per second for each of the 4096 recorded signals. This generates data roughly at 0.5 Gbps. It is clear that the sampling burden on the ADCs increases with increasing density of the recording sites on the MEAs, and so does the amount of the data generated; especially, for experiments lasting over many hours.   This calls for more proactive acquisition strategies for data acquisition, transfer, and management. The proposed compressive multiplexers use the correlation in the signal ensemble to acquire the signals with fewer samples to effectively use the sampling resources, and to minimize the amount of data generated over the course of an experiment. 

Another design consideration in MEAs  is that the number of electrodes on an array is limited by the number of conductors, carrying the signal from each electrode, that can pass through its shank. If we can perform an on-chip multiplexing then the signals can be combined before passing through the shank. This reduces the number of conductors, which may assist in increasing the density of recording sites for a given thickness.  Since the multiplexing architecture uses simple modulators, it may be possible to built these devices on chip. Additionally, the reduction in the sampling rate reduces the power dissipation of the ADC, which is an important factor in applications in biosensing.

\subsection{Related work}

The modulated multiplexer (M-Mux) has been proposed previously in the literature \cite{slavinsky11co} for the compressive acquisition of multiple spectrally sparse signals.  Using the notation of this paper, the main results suggest that if the Fourier spectrum of the input signals can be approximated by  active frequency components $S \ll MW$, then \cite{romberg2010sparse} shows that for the successful reconstruction of the signal ensemble, the ADC is required to operate at rate $\Omega \approx S\log^q MW$, where $q>1$ is a small constant. A simple implementation of the M-Mux using a passive averager is also discussed in \cite{slavinsky11co}. 

Compressive sampling of spectrally sparse signals using random modulators has also been explored previously in the literature  \cite{tropp2010beyond,mishali2009blind} and have been implemented in hardware for multiple applications \cite{laska07th,yoo12a100,yoo12co,mishali11xasub,murray11de} --- the existence of these prototypes was one of the primary factors that lead us to consider the M-Mux.  Instead of considering the acquisition of a single sparse signal, this paper considers the joint acquisition of an ensemble of signals.  Structure is imposed on this ensemble not by imposing structure on each of the signals individually, but rather on the {\em relationships} between the signals.  This requires a completely different recovery technique, and a new set of analytical tools.

It will be shown in detail in Section~\ref{sec:sampling-thm-m-mux} that the $n$th sample $y[n]$ taken using the ADC of the M-Mux can be written as the trace inner product of an unknown  rank-$R$ matrix $\vm{C}_0$ against a rank-1 measurement matrix $\vm{A}_n$, i.e., $y[n] = \tr{\vm{C}_0\vm{A}_n^*}$, where $\vm{A}_n$ is formed by the outer product of a random vector with a Fourier vector --- Theorem \ref{thm:CM-exactrec-CM1} proves that the low-rank matrices can be successfully recovered using such rank-1 measurement matrices. Similar results showing the recovery of low-rank matrices using rank-1 measurement matrices have been the subject of some interesting recent literature; for example, \cite{candes2013phaselift,candes2012solving,demanet2014stable}. In these articles, the measurement matrices are rank-1 but are formed by the outer product of a random vector with itself. The measurement matrices in this paper also differ from the measurement model in \cite{recht10gu}, where each of the measurement matrix is an i.i.d. Gaussian random matrix and it is shown that RIP based \textit{stronger} recovery results are possible. It is also instructive to compare the results in this paper with the results in \cite{gross11re} that state that it is possible to recover a low-rank matrix by observing its random samples in an incoherent orthonormal basis $\{\vm{A}_n\}$. The measurement matrices in our case do not form an orthonormal basis and owing to their special structure, we only require incoherence on one set of the singular vectors of the unknown low-rank matrix $\vm{C}_0$.

As will be shown in Section~\ref{sec:CM-M-Mux}, the samples taken by the ADC in Figure~\ref{fig:CM-M-Mux} can be mathematically modeled as a multi-Toeplitz matrix acting on a vectorized version of the collection of Fourier coefficients for the signals in the ensemble.  For ensembles with just one independent component ($R=1$), the analysis is a special case of the main results in the recent paper \cite{ahmed2012blind}.  That reference is a study of a very different application, namely, blind deconvolution of two unknown signals.  The mathematics presented here extends the analysis of that paper to the recovery of rank $R$ matrices.

One of the compressive multiplexing architectures we consider in this paper involves pre-filtering the signals using filters with long, diverse impulse responses (which we generate randomly).  Previous work has shown that a low-rate sampling preceded by a convolution with a random waveform is an effective strategy for compressive sampling acquisition of sparse signals \cite{tropp2006random, romberg2009compressive, haupt10to,rauhut12re}. Results in these references show that a signal with $S$ active components in a fixed basis can be acquired using a random filter plus an ADC operating at a rate that scales linearly in $S$ and logarithmically in ambient dimension $W$. 

\section{Main results: Sampling theorems for compressive multiplexers}
\label{sec:CM-MainResults} 

In this section, we present the mathematical models for the signal ensemble and for the samples taken by each of the proposed compressive multiplexer architectures.  The signal ensemble is characterized by a $M\times W$ low-rank matrix, while the mapping from the ensemble to the sample at the output of the ADC is a linear operator acting on this matrix.  With the model in place, we state our sampling theorems in Sections~\ref{sec:sampling-thm-m-mux} and \ref{sec:sampling-thm-fm-mux}.

\subsection{Signal model}

We will use $\vm{X_c}(t)$ to denote a signal ensemble of interest and $x_1(t),\ldots,x_M(t)$ to denote the individual signals within that ensemble.  Conceptually, we may think of $\vm{X_c}(t)$ as a ``matrix'' with finite $M$ number of rows, but each row contains a bandlimited signal.  Our underlying assumption is that the signals in the ensemble are {\em correlated} in that 
\begin{equation}
	\label{eq:CM-lowrankensemble}
	\vm{X}_c(t) \approx \vm{A}\vm{S}_c(t),
\end{equation}
where $\vm{S}_c(t)$ is a smaller signal ensemble with $R$ rows and $\vm{A}$ is a $M\times R$ matrix with entries $A[m,r]$.  We will use the convention that fixed matrices operating to the left of the signal ensembles simply ``mix'' the signals point-by-point, and so \eqref{eq:CM-lowrankensemble} is equivalent to
\[
	x_m(t) \approx \sum_{r=1}^R A[m,r]s_r(t).
\]

The only structure we will impose on individual signals is that they are real-valued, bandlimited, and periodic.  This provides us with a natural way to discretize the problem, as each signal lives in a finite-dimensional linear subspace.  The periodicity assumption is made mostly to keep the mathematics clean; in Section~\ref{sec:CM-nonperiodic}, we discuss how our results can be adapted to more realistic signal models in which non-periodic signals are windowed into overlapping sections and reconstructed jointly. Each bandlimited periodic signal in the ensemble can be written as a Fourier series
\[
	x_m(t) = \sum_{\omega =-B}^B\alpha_m[\omega]\, \er^{\j2\pi \omega t},
\]
where $\alpha_m[\omega]$ are complex but have symmetry $\alpha_m[-\omega]=\alpha_m[\omega]^*$ to ensure that $x_m(t)$ is real.  The signals are equally well represented by the $W=2B+1$ Fourier coefficients $\alpha_m$, or by $W$ equally spaced time-domain samples.

The modulation codes $d_m(t)$ will in general be changing polarity at a rate $\Omega > W$. We can generate an  $M \times \O$ matrix $\vm{X}_0$ of samples of the signals at this rate by taking
\begin{equation}
	\label{eq:CM-C0-def}
	\vm{X}_0 = \vm{C}_0\tilde{\vm{F}},
\end{equation}
where $\tilde{\vm{F}}$ is a $W\times \O$ matrix formed by taking first $W$ rows of the normalized discrete Fourier matrix $\vm{F}$ with entries
\begin{equation}
	\label{eq:Fdef}
	F[\omega,n] = \frac{1}{\sqrt{\O}} \er^{-\j2\pi\omega n/\O},\quad 0\leq\omega,n\leq \O-1,
\end{equation}
and $\vm{C}_0$ is an $M\times W$ matrix whose rows contain Fourier series coefficients for the signals in $\vm{X}_c(t)$. 
\[
	C_0[m,\omega] = 
	\begin{cases} 
		\alpha_m[\omega] & \omega = 0,1,\ldots,(W-1)/2 \\
		\alpha_m[\omega-W]^* & \omega = (W+1)/2,\ldots,W-1
	\end{cases}.
\]
The matrix $\vm{F}$ is orthonormal, while $\vm{C}_0$ (and hence $\vm{X}_0$) inherits the correlation structure of the original ensemble.  Our efforts will be geared towards recovering the matrix $\vm{C}_0 \in \comps^{M \times W}$ which uniquely specifies the signal ensemble. 

We will consider both the case in which $\vm{C}_0$ is exactly rank $R$, and the case in which $\vm{C}_0$ is technically full rank  but can be closely approximated by a low-rank matrix (i.e., the spectrum of singular values decays rapidly). 

\subsection{M-Mux: Compressive multiplexing of time-dispersed correlated signals}
\label{sec:CM-M-Mux}

In this section, we develop the mathematical model for the samples taken by the ADC in the M-Mux, shown in Figure~\ref{fig:CM-M-Mux}.  The end result will be to write the samples as a discrete linear transformation of the discretized input signals.

The multiplexer contains $M$ input channels carrying signals $x_m(t)$ which it modulates against different binary $\pm 1$ waveforms $d_m(t)$.  The $d_m(t)$ have higher bandwidth than the input signals; the spacing between the possible transition points is $1/\Omega$, where $\Omega > W$.  Since sampling the signals commutes with their addition, we can equivalently add the rate $\Omega$ samples of modulator outputs $\{d_m(t)x_m(t)\}_{1\leq m \leq M}$ to produce the samples.  We can write the $\Omega$ samples $\vm{y}_m$ of $d_m(t)x_m(t)$ on $[0,1)$ as 
\[
	\vm{y}_m = \vm{D}_m\tilde{\vm{F}}^*\vm{c}_m,
\]
where $\vm{c}_m$ is the $W$-vector containing the Fourier coefficients of $x_m(t)$, $\tilde{\vm{F}}^*$ is the $\O\times W$ (oversampled) inverse Fourier matrix as in \eqref{eq:Fdef}, and $\vm{D}_m$ is an $\Omega\times\Omega$ diagonal matrix constructed from the $\O$ samples $\vm{d}^{(m)} = \{d_1[m],\ldots,d_\O[m]\}$ of $d_m(t)$. The ``tall'' Fourier matrix $\tilde{\vm{F}}^*$ is an interpolation matrix that produces samples of the signals at the same rate $\Omega$ as the switching times of the $d_m(t)$.

The modulation signals $d_m(t)$ are generated from random sign sequences, which means $\vm{D}_m$ is a random matrix of the following form:
\begin{equation}
\label{eq:CM-modulatorD}
	\vm{D}_m = 
	\begin{bmatrix}
		d_1[m] & & & \\
		& d_2[m] & & \\
		& & \ddots & \\
		& & & d_{\O}[m]
	\end{bmatrix}
	\quad 
	\text{where $d_{\o}[m]= \pm 1$ with probability $1/2$}, 
\end{equation}
and the $d_{\o}[m]$ are independent $\forall (\o,m) \in \{1,\ldots,\O\}\times\{1,\ldots,M\}$.  In the sequel, we use the superscript notation $\vm{d}^{(m)}$ to specify $[d_1[m],\ldots,d_\O[m]]^{\T}$, the collection of samples of the modulation waveforms across all channels at a fixed time; we use the subscript notation $\vm{d}_{\o}$ for $[d_\o[1],\ldots,d_\o[M]]^{\T}$, the collection of samples of a single modulation waveform over the entire time interval.  

Conceptually, the modulators are  embedding each of the $x_m(t)$ into different (but overlapping) subspaces of $\R^\O$ --- this is what allows us to ``untangle'' them after they have been added together.

The ADC takes $\Omega$ samples of $\sum_{m = 1}^M x_m(t)d_m(t)$ on $[0,1)$. We can write the vector of samples $\vm{y}$ as
\begin{align}
	\label{eq:CM-M-Mux-meas}
	\vm{y} &= \sum_{m = 1}^M \vm{D}_m\tilde{\f}^*\vm{c}_m = 
	[\vm{D}_1\tilde{\f}^*, \vm{D}_2\tilde{\f}^*, \cdots, \vm{D}_M\tilde{\f}^*]\cdot\mbox{vec}(\vm{C}_0^*)\notag\\
	&= \cA(\vm{C}_0),
\end{align}
where $\vm{C}_0$ is the $M \times W$  matrix with $\vm{c}_m^*$ as its rows and vec$(\cdot)$ takes a matrix and returns a vector obtained by stacking its columns.  In the last equality,  we combines all of these actions into a single linear operator $\cA: \comps^{M \times W} \rightarrow \reals^{\O}$ which takes as input the matrix of Fourier coefficients $\vm{C}_0$ of the input signals, and outputs the $\Omega$ samples.

Looking at the architecture in Figure~\ref{fig:CM-M-Mux}, we expect that the M-Mux will perform better for signals ensembles which are not too concentrated in time.  Although the fact that the mixers and the ADC are operating at a rate above $W$ means that we will get multiple ``looks'' at a signal no matter what, it also true that if all of the signals are concentrated in the same subinterval instead of being spread out in time, we are getting fewer effective samples to distinguish between them.  This intuition is supported by our theoretical analysis for the M-Mux.  As we will see later that the sampling performance of the M-Mux depends on a mild incoherence condition, which quantifies the dispersion of the input signal ensemble across time.

\subsection{FM-Mux: A universal compressive multiplexer for correlated signals}
\label{sec:CM-FM-Mux}

In this section, we present a modified version of the M-Mux which is {\em universal} in that it is effective no matter how the energy in the signals is dispersed in time, or how they are correlated.  The architecture, shown in Figure~\ref{fig:CM-FM-Mux}, adds a set of linear time-invariant (LTI) filters in between the modulators and the signal summation.  Their effect is to spread the signals out in time.  We call this filtered modulated multiplexer the FM-Mux.

The FM-Mux preprocesses the input signals as follows.  First, the signals are modulated against a $\pm 1$-binary waveform with switching rate $\Omega >W$; this disperses the frequency spectrum of the signals over a larger bandwidth roughly proportional to $\O$.  Second, the signals are convolved with impulse responses $h_m(t)$ that are long and diverse, diffusing the signal across time.  Finally, the signals are added together and sampled uniformly at rate $\Omega$. 

As before, the modulators in the FM-Mux take the input signals $x_1(t), \ldots, x_M(t)$ and multiply them with $d_1(t), \ldots, d_M(t)$, where the $d_m(t)$ have the same properties as the M-Mux described in the previous section. 
The filters in the $m$-th channel takes the modulated signals $x_m(t)d_m(t)$, which are bandlimited to $\Omega/2$, and convolves them with an impulse response $h_m(t)$ which we will specify.  We will assume that we have complete control over this impulse response, putting practical implementation issues aside.  We write the action of the LTI filter $h_m(t)$ as an $\Omega \times \Omega$ circular matrix $\vm{H}_m$ (the first row of $\vm{H}$ consists of samples $\vm{h}_m$ of $h_m(t)$) operating on the Nyquist rate samples $\vm{D}_m\tilde{\vm{F}}^*\vm{c}_m$ in $[0,1)$  of $x_m(t)d_m(t)$.  The circulant matrix $\vm{H}_m$ is diagonalized by the discrete Fourier transform:
\begin{equation*}
\vm{H}_m = \vm{F}^*\hat{\vm{H}}_m\vm{F},
\end{equation*}
where $\hat{\vm{H}}_m$ is a diagonal matrix whose entries are $\hat{\vm{h}}_m = \sqrt{\Omega}\vm{F}\vm{h}_m$.  The vector $\hat{\vm{h}}_m$ is a scaled version of the non-zero Fourier series coefficients of $h_m(t)$.

To generate the impulse response, we will use a random unit-magnitude sequence in the Fourier domain\cite{tropp2006random,romberg2009compressive}.  In particular, we will take 
\begin{equation*}
	\hat{\vm{H}}_m =
	\begin{bmatrix}
		\hat{h}_m(0) & & & \\
		& \hat{h}_m(1) & & \\
		& & \ddots & \\
		& & & \hat{h}_m(\Omega-1)
	\end{bmatrix},
\end{equation*}
where
\begin{equation*}
	\hat{h}_m(\omega) = 
	\begin{cases}
		\pm 1,\text{with prob.\ $1/2$}, & \omega = 0 \\
		\er^{j\theta_\omega}, ~\text{where} ~\theta_\omega\sim\mathrm{Uniform}([0,2\pi]),  & 1\leq \omega \leq (\Omega-1)/2 \\ 
		\hat{h}_m(\Omega-\omega)^*, & (\Omega+1)/2\leq\omega\leq \Omega-1
	\end{cases}.
\end{equation*}
These symmetry constraints are imposed so that $\vm{h}_m$ (and hence, $h_m(t)$) is real-valued.  Conceptually, convolution with $h_m(t)$ disperses a signal over time while maintaining fixed energy (note that $\vm{H}_m$ is an orthonormal matrix).

Given the discussion above, the Nyquist samples of $(x_m(t)d_m(t))*h_m(t)$ are given by the $\O$-vector $\vm{H}_m\vm{D}_m \tilde{\vm{F}}^*\vm{c}_m$, and the samples $\vm{y}$ in $[0,1)$ of the signal $y(t) = \sum_{m = 1}^M (x_m(t)d_m(t))*h_m(t)$ are
\begin{align}
	\vm{y} &= \sum_{m = 1}^M \vm{H}_m\vm{D}_m\tilde{\vm{F}}^*\vm{c}_m\notag \\
       	&=  [\vm{H}_1\vm{D}_1\tilde{\vm{F}}^*, \vm{H}_2\vm{D}_2\tilde{\vm{F}}^*, \cdots ,
 	\vm{H}_M\vm{D}_M\tilde{\vm{F}}^*]\cdot\mbox{vec}(\vm{C}_0^*)\notag\\	
       & = \cB(\vm{C}_0)
	\label{eq:CM-PhiD}
\end{align} 
where we have used $\cB: \comps^{M \times W} \rightarrow \reals^{\O}$ to denote the linear transformation encapsulating all of the steps above. The linear operator $\cB$ is a random block-circulant matrix with columns modulated by random signs, i.e., the randomness appears in a  structured form.

The positions of the modulators and filters can be swapped, as illustrated in Figure~\ref{fig:CM-FM-Mux2}.  In this case, it will be sufficient to use filters of bandwidth $W$ rather than the bandwidth of $\O$ bandwidth used in Figure~\ref{fig:CM-FM-Mux}. The theoretical analysis for this swapped architecture is very similar to the FM-Mux in Figure \ref{fig:CM-FM-Mux}; for simplicity we only state the formal result for the first architecture, but we discuss how the analysis of the second architecture is related at the end of Section \ref{sec:CM-Theory3}.

\begin{figure}[htp]
  \begin{center}
    \includegraphics[trim = 2cm 11cm 2cm 0cm, scale = 0.6]{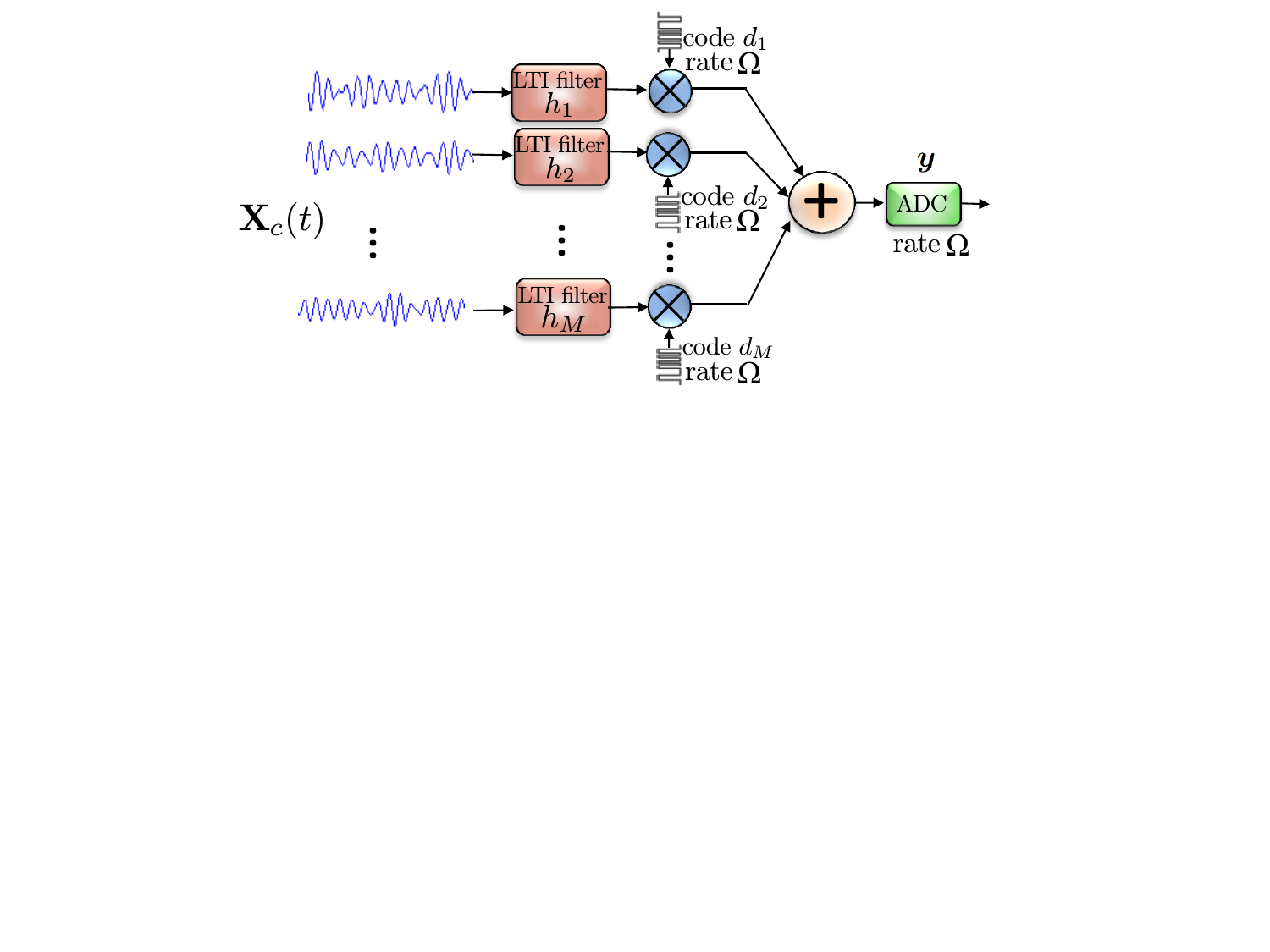}
    \end{center}
    \caption{\small\sl An equivalent FM-Mux obtained by reversing the order of filters and modulators. The modulators operates exactly as before, however, the random filters operate in a bandwidth $W$ instead of operating in a larger bandwidth $\O$ as in the previous FM-Mux architecture.}
  \label{fig:CM-FM-Mux2}
\end{figure}

The presence of the filters $h_m(t)$ with random impulse responses makes the FM-Mux slightly less practical than the M-Mux.  However, its universality makes it a more powerful architecture, and as we will see in Sections~\ref{sec:sampling-thm-fm-mux} and \ref{sec:CM-Theory3} below, it is much easier to analyze mathematically.
 
\subsection{Methodology for signal reconstruction}

The samples $\vm{y}$ taken by the ADC in the M-Mux \eqref{eq:CM-M-Mux-meas} and in the FM-Mux \eqref{eq:CM-PhiD} are different linear transformations of the low-rank matrix $\vm{C}_0$ which we denote by $\cA$ and $\cB$, respectively.  The discussion in this section applies equally to both architectures, so we will use $\mathcal{T}$ to denote a generic linear measurement operator from $\comps^{M\times W}$ to $\reals^{\O}$.  We are given measurements 
\begin{equation}
	\label{eq:CM-M-Mux-meas1}
	\vm{y} = \mathcal{T}(\vm{C}_0),
\end{equation}
from which we wish to recover the original signal ensemble $\vm{C}_0$.

It is instructive to first consider the case when the correlation structure $\vm{A}$ in \eqref{eq:CM-lowrankensemble} is known. The matrix $\vm{C}_0$ in \eqref{eq:CM-C0-def} inherits the low-rank structure of $\x_0$, and can be decomposed as 
\[
	\vm{C}_0 = \vm{A}\vm{C}_s,
\]
where $\vm{C}_s \in \comps^{R \times W}$ is a coefficient matrix that contains the Fourier coefficients of the underlying signals $\{s_r(t)\}_{1 \leq r \leq R}$ as its columns. Define an operator $\mathcal{T}_{\vm{A}}:\comps^{W \times R} \rightarrow \reals^{\O}$ obtained by absorbing the known correlation structure $\vm{A}$ into the measurement process,  
\begin{align*}
\mathcal{T}_{\vm{A}} &= \mathcal{T}\circ\vec{\vm{A}}\\
& = \vm{T}\vec{\vm{A}}\\
&= \vm{T}\begin{bmatrix}
		 A[1,1]\vm{I} & A[1,2]\vm{I} & \hdots & A[1,R]\vm{I} \\
		 A[2,1]\vm{I} & A[2,2]\vm{I} & \hdots & A[2,R]\vm{I} \\
		\vdots & \vdots & \ddots & \vdots \\
		 A[M,1]\vm{I} & A[M,2]\vm{I} & \hdots & A[M,R]\vm{I}
	\end{bmatrix}, 
\end{align*}
where $\mathbf{T}$ is the $\O \times MW$ matrix representation of linear operator $\mathcal{T}$, $\vec{\vm{A}}$ is the $MW \times RW$ matrix on the right above, and $\vm{I}$ is the $W \times W$ identity matrix.  With the measured samples now written as
\begin{align*}
	\vm{y} &= \mathcal{T}_{\vm{A}}(\vm{C}_s),
\end{align*}
and given that we are not making any structural assumptions about the $s_r(t)$, we can search for a coefficient matrix that is consistent with these samples by solving the least-squares program
\begin{align}
	\label{eq:CM-LeastSquares}
	\min_{\vm{C} \in \comps^{R \times W}} \quad \|\vm{y} - \mathcal{T}_{\vm{A}}(\vm{C})\|_2^2,
\end{align} 
the solution to which is given 
\[
	\tilde{\vm{C}}_s = (\mathcal{T}_{\vm{A}}^*\mathcal{T}_{\vm{A}})^{-1}\mathcal{T}_{\vm{A}}^*(\vm{y}).
\]
An argument similar to the proof of, for example, Lemma \ref{lem:CM-injectivity}, involving matrix Chenoff bounds can be used to show that $(\mathcal{T}_{\vm{A}}^*\mathcal{T}_{\vm{A}})^{-1}$ is well-conditioned with exceedingly high probability when the sampling rate $\O$ obeys
\begin{equation}
	\label{eq:CM-optimal-rate}
	\O \gtrsim c\beta R(W+M)\log^2(MW).
\end{equation}
Since the focus of this paper is on unknown correlation structure, we will not make this conditioning argument explicit.  
The estimate $\tilde{\vm{C}}$ of the unknown is given by $\vm{C}_0$ is then $\tilde{\vm{C}} = \vm{A}\tilde{\vm{C}}_s$. 

We are primarily interested in the case where the correlation structure $\vm{A}$ is unknown.  In this case, we would require on the order of $MW$ samples to recover the ensemble using least-squares.  But by explicitly taking advantage of the fact that $\rank{\vm{C}_0}$ is low rank in the recovery, we can recover the ensemble from a number of samples comparable to \eqref{eq:CM-optimal-rate} even when $\vm{A}$ is unknown.  Given $\vm{y}$, we solve for $\vm{C}_0$ using the nuclear-norm minimization program:
\begin{align}
	\label{eq:CM-nuclearnorm_min}
	&\min \quad \|\vm{C}\|_*\\
	&\mbox{subject to} \quad \vm{y} = \mathcal{T}(\vm{C}),\notag
\end{align}
where $\left\|\vm{C}\right\|_*$ is the nuclear norm; the sum of the singular values of $\vm{C}$. Alternatively, when the measurements are contaminated by noise,
\[
	\vm{y} = \mathcal{T}(\vm{C}_0)+\vm{\xi},
\]
we solve the relaxed program
\begin{align}
	\label{eq:CM-nuclearnorm_minnoisy}
	&\min \quad \|\vm{C}\|_*\\
	&\mbox{subject to} \quad \|\vm{y} - \mathcal{T}(\vm{C})\|_2 \leq \eta.\notag
\end{align}
These programs can be solved efficiently for matrices with $\sim 10^6$ entries using any one of a number of existing software packages \cite{becker2012tfocs,becker2010templates,MINFUNC,recht11pa,Lee10}. Further research on algorithms to minimze the nuclear norm efficiently and to make the real-time reconstruction of wideband signals possible at a resonable computing cost will be an important challenge in the future research in this direction.

The number of degrees of freedom in the unknown-coefficient matrix $\vm{C}_0$ is approximately $R(W+M)$.  It is known that if $\mathcal{A}$ is a random projection, then we can obtain a stable recovery of matrix $\vm{C}_0$ in noise when the number of measurements $\Omega$ exceeds $cR(W+M)$ for a fixed constant $c$ \cite{recht10gu,candes11ti}. In addition, it is also known that if we directly observe a randomly selected subset of the entries of low-rank matrix $\vm{C}_0$ at random, then we can recover $\vm{C}_0$ exactly when the number of measurements roughly exceed $c\mu^2_0R(W+M)\log W$, where $\mu_0^2$ is the coherence of matrix $\vm{C}_0$; for details, see \cite{candes09ex,gross11re,recht11si}. In contrast, the measurements in \eqref{eq:CM-M-Mux-meas} and in \eqref{eq:CM-PhiD} are obtained as a result of structured-random operations. There are no matrix recovery results from such specialized linear measurements. This paper develops low-rank matrix recovery results for such structured-random measurement operations.

\subsection{Sampling Theorems for the M-Mux}
\label{sec:sampling-thm-m-mux}

Each entry $y[\o]$ of the measurement vector $\vm{y}$ in \eqref{eq:CM-M-Mux-meas} can be written as a trace inner product against a different $M\times W$ matrix $\A_\o$:
\begin{align*}
	y[\o] &= \<  \vm{C}_0,\A_\o\> = \tr{\vm{C}_0\A_\o^*} , \quad \o = 1, \ldots, \Omega,
\end{align*}
where
\begin{equation}
	\label{eq:CM-Aomega-def}
	\A_\o = \d_\o\ff_\o^*,
\end{equation}
is the rank-1 matrix formed by the outer product of $\vm{d}_\o =  [d_\o[1], \ldots, d_\o[M]]^\T$ (first defined in \eqref{eq:CM-modulatorD}), and the columns $\ff_\o$ of the partial Fourier matrix $\tilde{\vm{F}}$.  Let 
\begin{equation*}
	\vm{C}_0 = \vm{U}\vm{\Sigma}\vm{V}^*,
\end{equation*}
be the SVD of the rank-$R$ coefficient matrix $\vm{C}_0$, and so $\vm{U}:M\times R$ and $\vm{V}: W \times R$ have orthonormal columns, and $\vm{\Sigma}: R \times R$ is diagonal.  We quantify the signal dispersion across time using the coherence parameter
\begin{equation}
	\label{eq:CM-coherence}
	\mu^2(V) := \frac{\Omega}{R}\max_{1 \leq \o \leq \Omega}\|\vm{V}^*\vm{f}_\o\|_2^2.
\end{equation} 
A lower bound for $\mu^2(V)$ follows from summing both sides of \eqref{eq:CM-coherence} over $\o$, 
\begin{equation*} 
	\sum_{\o = 1}^\O \mu^2(V) ~\geq~ \frac{\Omega}{R} \sum_{\o = 1}^\O \|\vm{V}^*\vm{f}_\o\|_2^2 
	~=~ \frac{\Omega}{R} \|\v^*\|_{\F}^2,
\end{equation*}
and so $\mu^2(V) \geq 1$. The coherence $\mu^2(V)$ achieves this lower bound when  $\|\v^*\ff_\o\|_2^2 = \frac{R}{\O}$ for each $\o \in \{1,\ldots, \O\}$, meaning that the $\O$-point inverse Fourier transforms of the columns of $\v$ are flat.  In other words, the signals are well dispersed across time.  An upper bound for $\mu^2(V)$ is given by 
\begin{equation*} 
	\mu^2(V) \leq \frac{\Omega}{R}\max_{1 \leq \o \leq \Omega} \|\v^*\|^2\|\ff_\o\|_2^2 \leq \frac{W}{R}.
\end{equation*}
The coherence achieves this upper bound for signal ensembles that are as concentrated in time as possible (e.g.\ sinc functions).

The following theorem guarantees the exact recovery of the ensemble $\x_c(t)$ at a sub-Nyquist sampling rate, when $\x_c(t)$, and hence $
\vm{C}_0$, is \textit{exactly} rank-$R$, that is, instead of \eqref{eq:CM-lowrankensemble}, we have $\x_c(t) = \vm{A}\vm{S}_c(t)$.
\begin{thm}
	\label{thm:CM-exactrec-CM1}
	Let $\vm{C}_0 \in \comps^{M \times W}$ be a matrix of rank $R$ defined in \eqref{eq:CM-C0-def} with coherence $\mu^2(V) \leq \mu^2_0$. Suppose $\Omega$ measurements $\vm{y}$ of $\vm{C}_0$ are taken using the M-Mux as in \eqref{eq:CM-M-Mux-meas1}. If
	\[
		\Omega \geq c\beta \left(\mu^2_0 M+W\right)R \log^3(MW),
	\]
	for some constant $\beta > 1$, then the minimizer of \eqref{eq:CM-nuclearnorm_min} is unique and equal to $\vm{C}_0$ with probability at least $1- O(MW)^{1-\beta}$. 
\end{thm}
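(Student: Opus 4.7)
The plan is to follow the standard dual-certificate recipe for nuclear-norm recovery (as pioneered by Recht-Fazel-Parrilo and Gross), but specialized to the rank-one, asymmetric measurement matrices $\A_\o = \d_\o\ff_\o^*$. Writing $\vm{C}_0 = \u\vm{\Sigma}\v^*$ and letting $T$ denote the tangent space at $\vm{C}_0$ to the manifold of rank-$R$ matrices, it suffices to prove (i) that $\cA$ is injective on $T$, quantified by $\|\PT \cA^*\cA \PT - \PT\| \le 1/2$, and (ii) the existence of a dual certificate $\vm{Y} \in \mathrm{range}(\cA^*)$ with $\PT \vm{Y}$ close to $\u\v^*$ and $\|\PTc\vm{Y}\| < 1$. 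These two ingredients, combined with a standard inexact-duality argument, imply that $\vm{C}_0$ is the unique minimizer of \eqref{eq:CM-nuclearnorm_min}.

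For step (i), I would write $\cA^*\cA = \sum_\o (\d_\o\d_\o^*) \otimes (\ff_\o\ff_\o^*)$ (in a suitable operator sense) and apply a matrix Bernstein/Chernoff inequality to the independent summands $\PT\big[(\d_\o\d_\o^*)\otimes(\ff_\o\ff_\o^*)\big]\PT - \mathbb{E}[\cdots]$. The deterministic part $\sum_\o \ff_\o\ff_\o^*$ equals the identity on the frequency side, while on the spatial side $\mathbb{E}[\d_\o\d_\o^*] = \vm{I}$. The per-term operator norm scales like $\|\PT\A_\o\|_\F^2 \lesssim \mu_0^2 R M/\O + R W/\O \cdot \|\u^*\d_\o\|_2^2$; since the $\d_\o$ have bounded entries, Hoeffding gives $\|\u^*\d_\o\|_2^2 \lesssim R\log(MW)$ with high probability, yielding the sampling rate $\O \gtrsim (\mu_0^2 M + W)R \log^2(MW)$.

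For step (ii), I would use Gross's golfing scheme: partition the $\Omega$ measurements into $L \asymp \log(MW)$ independent batches $\cA_1,\ldots,\cA_L$ (each with $\O/L$ samples), set $\vm{W}_0 = \u\v^*$, and iteratively form $\vm{Y}_k = \vm{Y}_{k-1} + \cA_k^*\cA_k(\vm{W}_{k-1})$, $\vm{W}_k = \u\v^* - \PT \vm{Y}_k$. The near-isometry from step (i) applied to each batch makes $\|\vm{W}_k\|_\F$ shrink geometrically, so after $L$ iterations $\PT \vm{Y}_L \approx \u\v^*$ in Frobenius norm. The harder bound is $\|\PTc \vm{Y}_L\| \le \sum_k \|\PTc \cA_k^*\cA_k(\vm{W}_{k-1})\|$; each summand must be controlled in operator norm by a small multiple of some ``spikiness'' quantity (the $\ell_\infty$ or $(2,\infty)$ norm) of $\vm{W}_{k-1}$, and this quantity must itself be shown to shrink geometrically along the iterations. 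This is the main technical obstacle.

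The hard part will be obtaining this operator-norm bound of the form $\|\PTc(\cA_k^*\cA_k - \mathcal{I})(\vm{Z})\| \lesssim \varepsilon\, \|\vm{Z}\|_{\mathrm{mix}}$, where $\|\cdot\|_{\mathrm{mix}}$ is a hybrid norm (typically involving row-wise $\ell_2$ norms and entrywise $\ell_\infty$ norms) that both (a) bounds $\|\vm{Z}\|_\F$ from below by a polylog factor and (b) contracts under $\PT\cA_k^*\cA_k$. The asymmetric structure of $\A_\o = \d_\o\ff_\o^*$ --- random on the row side, Fourier on the column side --- means the required bounds must be tailored: one uses the non-commutative Bernstein inequality (as in Ahmed-Romberg \cite{ahmed2012blind}, which handles the $R=1$ case) together with Hoeffding-type control of inner products $\u^*\d_\o$ to pay only for the column-side coherence $\mu^2(V)$ and not for any coherence on $\u$. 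Stitching these contraction bounds through all $L$ golfing rounds, with a union bound over the batches, yields the stated sampling rate $\O \gtrsim (\mu_0^2 M + W)R\log^3(MW)$ and the claimed failure probability.
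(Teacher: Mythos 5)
Your proposal follows essentially the same route as the paper's proof: an inexact-duality argument with a golfing-scheme dual certificate, per-batch injectivity $\max_{1\le k\le \k}\|\k\PT\cA_k^*\cA_k\PT-\PT\|\le 1/2$ via an Orlicz-norm matrix Bernstein inequality, and operator-norm control of $\|(\k\cA_k^*\cA_k-\mathcal{I})(\w_{k-1})\|$ that pays only for the column-side coherence $\mu^2(V)$ exactly as you anticipate. The ``mixed/spikiness norm'' you flag as the main obstacle is, in the paper, simply the Fourier-side coherence of the golfing iterates, $\mu_k^2=(\O/R)\max_{\o\in\Gamma_k}\|\w_k\ff_\o\|_2^2$, which Lemma~\ref{lem:CM-coherence-iterates} shows halves at each round; the one structural detail your sketch omits is that the partitions $\Gamma_k$ must be chosen as arithmetic progressions (with $|\Gamma_k|\ge W$) so that $\sum_{\o\in\Gamma_k}\ff_\o\ff_\o^*=\k^{-1}\vm{I}$, which is what makes each batch isotropic in expectation and the iteration \eqref{eq:CM-iteration} unbiased.
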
 
The sampling theorem above indicates that the time dispersed correlated signals ($\mu_0^2 \approx O(1)$) can be acquired at a sampling rate close (to within a $\log$ factor) to the optimal sampling rate $R(W+M)$. This is a significant improvement over the cumulative Nyquist rate $MW$ especially when $R \ll \min(M,W)$. The above result is also important as it is a low-rank matrix recovery result from a linear transformation $\cA$, which can be applied more efficiently compared to the dense, \textit{completely} random linear operators such as i.i.d. Gaussian linear operators. 

The recovery can be made stable in the presence of noise.  Now say we observe
\begin{equation}
	\label{eq:CM-noisy-meas}
	\vm{y} = \cA(\vm{C}_0) + \vm{\xi} 
\end{equation}
where $\vm{\xi}\in\R^\O$ is a noise vector, and $\vm{C}_0$ is \textit{exactly} rank-$R$.  One option is to solve the relaxed nuclear norm problem in  \eqref{eq:CM-nuclearnorm_minnoisy}, and indeed the numerical experiments shown in Section~\ref{sec:CM-Exps} show that this seems to recover the ensemble effectively.  Unfortunately, our efforts to analyze this program have resulted in only very weak stability results.  In this paper, we will consider the simpler recovery strategy from \cite{koltchinskii10nu}, which sets
\begin{align}
	\label{eq:CM-KLT-est}
	\tilde{\vm{C}} &= \argmin_{\vm{C}}\left[\|\vm{C}\|_{\F}^2 - 
					2\<\vm{y},\cA(\vm{C})\>+\lambda\|\vm{C}\|_*\right],
\end{align}
for a fixed value of the regularization parameter $\lambda >0$.  The program above, which we will call the {\em KLT estimator}, does not perform empirically as well as \eqref{eq:CM-nuclearnorm_minnoisy}, but its analysis proves far less elusive; in the end, we will show through Theorem~\ref{thm:CM-stablerec-CM1} below that near-optimal recovery from noisy measurements is possible with a nuclear norm penalized estimator.  The essential difference between the KLT estimator and \eqref{eq:CM-nuclearnorm_minnoisy} is that $\cA$ is explicitly treated as being random in the formulation.  The solution to \eqref{eq:CM-KLT-est} is found by soft thresholding the singular values of $\cA^*(\vm{y})$:
\[
	\tilde{\vm{C}} = \sum_i (\sigma_i(\cA^*(\vm{y}))-\lambda/2)_+ \uu_i(\cA^*(\vm{y}))\vv_i(\cA^*(\vm{y})),
\]
where $x_+ = \max(x,0)$, the vectors $\uu_i(\cA^*(\vm{y}))$, and $\vv_i(\cA^*(\vm{y}))$ are the left and right singular vectors of $\cA^*(\vm{y})$, respectively, and the  $\sigma_i(\cA^*(\vm{y}))$ are the corresponding singular values.

We will quantify the strength of the noise vector $\vm{\xi}$ through its Orlicz-2 norm.  For a random vector $\vm{z}$, we define
\[
	\|\vm{z}\|_{\psi_2} = \inf\left\{u>0: \E \left[\er^{\|\vm{z}\|_2^2/u^2}\right] \leq 2 \right\},
\]
and for scalar random variables we simply take $\vm{z} \in \reals^1$ in the expression above.  The Orlicz-2 norm is finite if the entries of $\vm{z}$ are subgaussian, and is proportional to the variance if the entries are Gaussian.  Our results treat the noise $\vm{\xi} \in \reals^\O$ as a random vector with iid entries that obey
\begin{align}
	\label{eq:CM-noise-stats}
	\|\xi[n]\|_{\psi_2} \leq \frac{\eta}{\O},\quad\text{and}\quad \|\vm{\xi}\|_{\psi_2} \leq c\eta. 
\end{align}
The following theorem states the stable recovery results for the KLT estimate. 
\begin{thm}
	\label{thm:CM-stablerec-CM1}
	Let $\vm{C}_0 \in \comps^{M \times W}$ be the rank-$R$ matrix of Fourier coefficients of an unknown signal ensemble, and let $\vm{y} = \cA(\vm{C}_0) + \vm{\xi}$ be noisy measurements taken by the M-Mux, where $\vm{\xi}$ obeys \eqref{eq:CM-noise-stats}.  If $\O \geq c\beta R(W+\mu_0^2 M)\log^2(MW)$ for some constant $\beta >1$, then the solution $\tilde{\vm{C}}$ to \eqref{eq:CM-KLT-est} will obey  
\begin{equation}
	\label{eq:CM-optimal-bound}
	\|\tilde{\vm{C}}-\vm{C}_0\|_{\F} \leq c\eta.
\end{equation} 
with probability at least $1-O(MW)^{-\beta}$.
\end{thm}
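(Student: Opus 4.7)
The plan is to recognize $\tilde{\vm{C}}$ as a singular-value soft-thresholding of $\cA^*(\vm{y})$, reduce the Frobenius error $\|\vm{H}\|_{\F}$, with $\vm{H} := \tilde{\vm{C}}-\vm{C}_0$, to the spectral norm of $\vm{E} := \cA^*(\vm{y}) - \vm{C}_0$ via a Koltchinskii--Lounici--Tsybakov basic inequality, and then to control $\|\vm{E}\|$ by two matrix Bernstein arguments. Completing the square in \eqref{eq:CM-KLT-est} rewrites its objective as $\|\vm{C}-\cA^*(\vm{y})\|_{\F}^2 + \lambda\|\vm{C}\|_*$ up to an additive constant, so the optimizer is the singular-value soft-thresholding of $\cA^*(\vm{y})$ at level $\lambda/2$. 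A direct computation using $\E[\d_\o\d_\o^*] = \vm{I}_M$ and $\sum_\o \ff_\o\ff_\o^* = \tilde{\vm{F}}\tilde{\vm{F}}^* = \vm{I}_W$ gives $\E[\cA^*\cA] = \mathcal{I}$, and hence
\[
\vm{E} = (\cA^*\cA - \mathcal{I})\vm{C}_0 + \cA^*\vm{\xi}.
\]

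Combining the optimality of $\tilde{\vm{C}}$ against the feasible point $\vm{C}_0$ with the nuclear-norm subgradient $\u\v^* + \vm{W}$ (for $\vm{W}\in T^\perp$ with $\|\vm{W}\|\leq 1$) at $\vm{C}_0 = \u\vm{\Sigma}\v^*$ yields
\[
\|\vm{H}\|_{\F}^2 + \lambda\|\PTc\vm{H}\|_* \leq 2|\<\vm{E},\vm{H}\>| + \lambda\sqrt{R}\|\vm{H}\|_{\F}.
\]
Decomposing $\vm{H} = \PT\vm{H} + \PTc\vm{H}$, using the rank bound $\|\PT\vm{H}\|_* \leq \sqrt{2R}\|\vm{H}\|_{\F}$, and enforcing $\lambda \geq 4\|\vm{E}\|$ causes the $\PTc\vm{H}$ terms to cancel and collapses the inequality to the deterministic bound $\|\vm{H}\|_{\F}\leq c\lambda\sqrt{R}$. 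Setting $\lambda = c\eta/\sqrt{R}$ then delivers $\|\vm{H}\|_{\F}\leq c\eta$, reducing the theorem to showing $\|\vm{E}\| \lesssim \eta/\sqrt{R}$ with the stated probability.

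For this probabilistic step, the two summands of $\vm{E}$ are handled by separate matrix Bernstein arguments. The noise term $\cA^*\vm{\xi} = \sum_\o \xi[\o]\d_\o\ff_\o^*$ is a sum of independent mean-zero rank-one matrices whose two one-sided variance proxies evaluate to $(\eta^2 W/\Omega^2)\vm{I}_M$ and $(\eta^2 M/\Omega^2)\vm{I}_W$; matrix Bernstein combined with the subgaussian tail of $\xi[\o]$ yields $\|\cA^*\vm{\xi}\|\lesssim \eta\sqrt{(M+W)\log(MW)}/\Omega$, which is $\leq c\eta/\sqrt{R}$ under the sampling hypothesis. The bias term $(\cA^*\cA - \mathcal{I})\vm{C}_0 = \sum_\o[(\d_\o^*\vm{C}_0\ff_\o)\d_\o\ff_\o^* - \vm{C}_0\ff_\o\ff_\o^*]$ is likewise a sum of independent mean-zero matrices; the coherence estimate $\|\v^*\ff_\o\|_2^2 \leq \mu_0^2 R/\Omega$ controls the deterministic direction, while Hoeffding in $\d_\o$ controls $\|\u^*\d_\o\|_2 \lesssim \sqrt{R\log(MW)}$. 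A Bernstein argument after an Orlicz-2 truncation of the scalar $\d_\o^*\vm{C}_0\ff_\o$ then gives $\|(\cA^*\cA-\mathcal{I})\vm{C}_0\|\lesssim \|\vm{C}_0\|\sqrt{R(\mu_0^2 M + W)\log^2(MW)/\Omega}$, again small enough under the sampling rate.

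The main obstacle is the bias term. Each summand is a product of two factors depending on the same $\d_\o$, so its tail is subexponential and a direct subgaussian matrix Bernstein does not apply; either a truncation-then-Bernstein or a subexponential matrix Bernstein inequality must be invoked. The nontrivial bookkeeping is to ensure the coherence $\mu_0^2$ appears only in the $M$-dependent part of the sampling rate, which requires computing the two one-sided variance proxies separately: the one folded through $\ff_\o\ff_\o^*$ picks up the factor $\mu_0^2 R/\Omega$, while the one folded through $\E[\d_\o\d_\o^*] = \vm{I}_M$ picks up $M/\Omega$. The $\log^2(MW)$ factor in the final sampling rate is the combined price of the Orlicz-2 truncation and the matrix Bernstein dimension factor.
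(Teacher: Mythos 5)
Your proposal is correct and follows essentially the same route as the paper: the paper likewise reduces the theorem to bounding $\|\cA^*(\vm{y})-\E\cA^*(\vm{y})\| \leq \|(\cA^*\cA-\mathcal{I})(\vm{C}_0)\| + \|\cA^*(\vm{\xi})\|$, controls the noise term and the bias term by the Orlicz-norm matrix Bernstein inequality with exactly the variance bookkeeping you describe (the coherence $\mu_0^2$ entering only through the $M$-dependent proxy), and then invokes the Koltchinskii--Lounici--Tsybakov oracle inequality, which you re-derive from the subgradient basic inequality rather than citing. The only substantive difference is in the bias term, where the paper exploits that $\|\A_\o\| = \sqrt{MW/\O}$ is deterministic, so the summands' operator norms are subgaussian (linear in the signs $\d_\o$) and the $\psi_2$ version of Bernstein applies directly --- your subexponential/truncation route also works but is slightly more conservative than necessary.
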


In contrast, we note that the result in \cite{candes10ma} could easily be adapted to show that under essentially the same conditions as Theorem~\ref{thm:CM-exactrec-CM1}, the solution $\tilde{\vm{C}}$ of \eqref{eq:CM-nuclearnorm_minnoisy} obeys
\[
	\|\tilde{\vm{C}}-\vm{C}_0\|_{\F}\leq c\sqrt{\min(W,M)} \eta.
\]
The above result is derived by only assuming that the noise $\vm{\xi}$ is bounded (i.e., $\|\vm{\xi}\|_2 \leq \eta$) with no statistical assumptions; see Lemma 1 in \cite{ahmed2012blind} for the proof.  Note that the result in \eqref{eq:CM-optimal-bound} is smaller by a factor of $1/\sqrt{\min(W,M)}$.

\subsection{Sampling Theorem for the FM-Mux}
\label{sec:sampling-thm-fm-mux}

As shown in Section~\ref{sec:CM-FM-Mux}, we can express the measurements taken by the FM-Mux in Figure~\ref{fig:CM-FM-Mux} as a linear operator $\cB: \comps^{M \times W} \rightarrow \reals^{\O}$ that maps the matrix of coefficients $\vm{C}_0$ to the samples $\vm{y}$. In this section, we present theory which demonstrated that a low rank $\vm{C}_0$ can be stably recovered using \eqref{eq:CM-nuclearnorm_minnoisy}.  We will establish this by showing that the linear operator $\cB$ satisfies the {\em restricted-isometry property} (RIP) for low-rank matrices.  The definition below is from \cite{recht10gu}:
\begin{defn}
	\label{defn:CM-mtx-RIP}
	A linear map $\cB: \comps^{M \times W} \rightarrow \reals^\Omega$ is said to satisfy the $R$-restricted isometry property if for every integer $1 \leq R \leq M$, we have a smallest constant  $\delta_R(\cB)$ such that 
	\[
	(1-\delta_R(\cB))\left\|\vm{C}\right\|_{\F} \leq \|\cB(\vm{C})\|_2 \leq (1+\delta_R(\cB))\left\|\vm{C}\right\|_{\F}
	\]
	for all matrices of  rank$(\vm{C}) \leq R$. 
\end{defn}
If $\delta_{2R}(\cB)<1$, then every rank-$R$ matrix $\vm{C}$ has a unique image through $\cB$.  If $\delta_{2R}(\cB)\leq 0.3$, then results from \cite{mohan10ne} show that given noisy measurements of an arbitrary matrix $\vm{C}_0$
\begin{equation}
 	\label{eq:CM-cB-meas-noisy}
 	\vm{y} = \cB(\vm{C}_0)+\vm{\xi},
\end{equation}
where $\left\|\vm{\xi}\right\|_2 \leq \eta$, the solution $\tilde{\vm{C}}$ to \eqref{eq:CM-nuclearnorm_minnoisy} satisfies
\begin{equation}
	\label{eq:matrix-stable}
	\|\tilde{\vm{C}}-\vm{C}_0\|_{\F} \leq c_*\frac{\|\vm{C}_0-\vm{C}_{0,R}\|_*}{\sqrt{R}} + c_{**} \eta.
\end{equation}
The matrix $\vm{C}_{0,R}$ above is the best rank-$R$ approximation to $\vm{C}_0$.  On contrary to our results for M-Mux in Theorem \ref{thm:CM-exactrec-CM1} and Theorem \ref{thm:CM-stablerec-CM1} that applied to the exact and stable recovery of strictly rank-$R$ matrix $\vm{C}_0$, the result in \eqref{eq:matrix-stable} applies to a general full-rank matrix $\vm{C}_0$ that could ideally be well approximated by a rank-$R$ matrix $\vm{C}_{0,R}$. In other words, the results apply to the recovery of a more general approximately correlated signal ensemble in \eqref{eq:CM-lowrankensemble}. An exact recovery result also follows from \eqref{eq:matrix-stable} by taking $\eta=0$ and $\vm{C}_0$ to be strictly rank $R$.

The following theorem, which we prove in Section~\ref{sec:CM-Theory3}, establishes the matrix RIP for the FM-Mux (and hence the accuracy in \eqref{eq:matrix-stable}) when the sampling rate $\O$ is within a logarithmic factor of $R(W+M)$.
\begin{thm}
	\label{thm:CM-exactstablerec-CM2}
	Let $\cB$ be the sampling operator for the FM-Mux, defined as in \eqref{eq:CM-PhiD} with sampling rate
	\[
		\Omega \geq c\beta R(M+W) \log^5(\Omega M)
	\]
	for a fixed constant $ c > 0$. Then $\delta_{2R}(\cB)\leq 0.3$ with probability at least $1-O(MW)^{-\beta}$, where $\beta > 0$ is a parameter that depends on $\delta_{2R}(\cB)$.
\end{thm}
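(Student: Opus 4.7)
The plan is to establish the matrix RIP by recognizing the deviation $\|\cB(\vm{C})\|_2^2 - \|\vm{C}\|_{\F}^2$ as a second-order chaos in the Rademacher signs $\{d_\o[m]\}$ and unit-modulus phases $\{\hat h_m[\omega]\}$ that define $\cB$, and then applying a Krahmer--Mendelson--Rauhut (KMR) type bound on its supremum over the rank-$2R$ Frobenius unit sphere. Let
\[
\mathcal{K}_{2R} = \{\vm{C}\in\comps^{M\times W}:\rank{\vm{C}}\leq 2R,\ \|\vm{C}\|_{\F}=1\}.
\]
Since $\delta_{2R}(\cB)=\sup_{\vm{C}\in\mathcal{K}_{2R}}\bigl|\|\cB(\vm{C})\|_2^2-\|\vm{C}\|_{\F}^2\bigr|$, it suffices to show this supremum is at most $0.3$ with probability at least $1-(MW)^{-\beta}$.

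The first step is isotropy. Using $\vm{H}_m=\vm{F}^*\hat{\vm{H}}_m\vm{F}$ and the unitarity of $\vm{F}$, one writes
\[
\|\cB(\vm{C})\|_2^2 = \sum_{\omega=0}^{\O-1}\Bigl|\sum_{m=1}^M \hat h_m[\omega]\,g_m[\omega]\Bigr|^2,\qquad g_m[\omega] := (\vm{F}\vm{D}_m\tilde{\vm{F}}^*\vm{c}_m)[\omega].
\]
The independence and zero mean of the $\hat h_m[\omega]$ across $m$ kill the off-diagonal-in-$m$ cross terms, and $\sum_\omega|g_m[\omega]|^2=\|\vm{c}_m\|_2^2$ since $\vm{F}\vm{D}_m\tilde{\vm{F}}^*$ is an isometry from $\comps^W$ into $\comps^\O$ (using $\tilde{\vm{F}}\tilde{\vm{F}}^*=\vm{I}_W$ and $\vm{D}_m^2=\vm{I}$). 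Thus $\E\|\cB(\vm{C})\|_2^2=\sum_m\|\vm{c}_m\|_2^2=\|\vm{C}\|_{\F}^2$ for every realization of the sign masks $\vm{d}$.

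Next, I would reformulate the deviation as a chaos process. Because $\|\cB(\vm{C})\|_2^2$ is bilinear in each of $\vm{d}$ and $\hat{\vm{h}}$ separately, I would condition on $\vm{d}$ and view the remaining randomness in $\hat{\vm{h}}$ as a Hermitian quadratic form, giving
\[
\|\cB(\vm{C})\|_2^2-\|\vm{C}\|_{\F}^2 = \vm{\epsilon}^*\vm{V}_{\vm{C},\vm{d}}\,\vm{\epsilon}-\E_{\vm{\epsilon}}\vm{\epsilon}^*\vm{V}_{\vm{C},\vm{d}}\,\vm{\epsilon},
\]
where $\vm{V}_{\vm{C},\vm{d}}$ has zero diagonal and depends linearly on $\vm{C}\vm{C}^*$. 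The KMR bound (or Dirksen's tail-refined version) then yields, with matching concentration,
\[
\E\sup_{\vm{C}\in\mathcal{K}_{2R}}\!\bigl|\vm{\epsilon}^*\vm{V}_{\vm{C},\vm{d}}\vm{\epsilon}-\E\vm{\epsilon}^*\vm{V}_{\vm{C},\vm{d}}\vm{\epsilon}\bigr|\lesssim \gamma_2(\mathcal{V},\|\cdot\|)\bigl(\gamma_2(\mathcal{V},\|\cdot\|)+d_{\F}(\mathcal{V})\bigr)+d_{\F}(\mathcal{V})\,d_{\mathrm{op}}(\mathcal{V}),
\]
where $\mathcal{V}=\{\vm{V}_{\vm{C},\vm{d}}:\vm{C}\in\mathcal{K}_{2R}\}$ and $d_{\F},d_{\mathrm{op}}$ are its Frobenius/operator-norm radii. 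A second application of matrix-Bernstein type concentration in the outer randomness $\vm{d}$ closes the loop once $\vm{V}_{\vm{C},\vm{d}}$ is shown to be stable in $\vm{d}$. For $\gamma_2(\mathcal{V},\|\cdot\|)$ I would feed the standard rank-$2R$ covering bound $\log N(\mathcal{K}_{2R},\|\cdot\|_{\F},\varepsilon)\lesssim R(M+W)\log(1/\varepsilon)$ into Dudley's integral, combined with Lipschitz estimates for $\vm{C}\mapsto\vm{V}_{\vm{C},\vm{d}}$ in the operator norm. This should produce $\gamma_2(\mathcal{V},\|\cdot\|)\lesssim\sqrt{R(M+W)/\O}\,\log^{5/2}(\O M)$ together with radii of matching order up to polylog factors, so that choosing $\O\gtrsim R(M+W)\log^5(\O M)$ drives the overall deviation below $0.3$.

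The main obstacle will be establishing the operator-norm Lipschitz control of $\vm{C}\mapsto\vm{V}_{\vm{C},\vm{d}}$. Because $\vm{V}_{\vm{C},\vm{d}}$ involves products of the partial Fourier matrix $\tilde{\vm{F}}^*$, the diagonal sign masks $\vm{D}_m$, and random-phase Fourier multipliers $\hat{\vm{H}}_m$, the induced metric on $\mathcal{K}_{2R}$ splits into a Frobenius-like piece (cheap to cover) and a spectral piece reflecting the Fourier/circulant structure, which must be controlled through operator-norm tail bounds for random partial Fourier blocks. Each of these contributions typically costs one logarithm, and the combined chaining argument accumulates the $\log^5(\O M)$ in the statement --- roughly one log from Dudley's integral, one from the spectral norms of the random partial Fourier blocks, one from the union bound over the $\O M$ frequency/channel pairs, and two from the subgaussian concentration of the joint $(\vm{d},\hat{\vm{h}})$ randomness. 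Keeping this bookkeeping under control, and verifying that the residual coupling between $\vm{d}$ and $\hat{\vm{h}}$ inside $\vm{V}_{\vm{C},\vm{d}}$ is mild enough to allow the conditional argument to close, is where the bulk of the technical effort lies.
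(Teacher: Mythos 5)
Your reduction to a chaos process is plausible in outline, and your isotropy computation (conditioning on $\vm{d}$, using independence of the $\hat{h}_m(\omega)$ across $m$ and the isometry $\vm{F}\vm{D}_m\tilde{\vm{F}}^*$) is correct. But the pivotal quantitative step --- the claim that $\gamma_2(\mathcal{V},\|\cdot\|)\lesssim \sqrt{R(M+W)/\O}\,\log^{5/2}(\O M)$ follows from ``the standard rank-$2R$ covering bound fed into Dudley's integral, combined with Lipschitz estimates'' --- is a genuine gap, and the natural execution of it fails by a polynomial factor. Writing $\cB(\vm{C})=\vm{F}^*\vm{A}_{\vm{C},\vm{d}}\,\hat{\vm{h}}$ with $\vm{A}_{\vm{C},\vm{d}}=[\operatorname{diag}(\vm{g}_1),\ldots,\operatorname{diag}(\vm{g}_M)]$, the relevant operator norm is $\|\vm{A}_{\vm{C},\vm{d}}\|=\max_\omega\bigl(\sum_m |g_m[\omega]|^2\bigr)^{1/2}$. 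Even on the high-probability event in $\vm{d}$ where the sign scrambling flattens every vector $\tilde{\vm{F}}\vm{D}_m\bar{\vm{f}}^{(\omega)}$ uniformly over the $M\O$ channel/frequency pairs, the best operator-norm Lipschitz constant in $\|\vm{C}-\vm{C}'\|_{\F}$ available uniformly over the Frobenius ball is of order $\sqrt{W\log(M\O)/\O}$, so plain Dudley over $\mathcal{K}_{2R}$ yields $\gamma_2\lesssim\sqrt{WR(M+W)/\O}\cdot\mathrm{polylog}(M\O)$, forcing $\O\gtrsim WR(M+W)$ up to logs --- an extra factor of $W$ relative to the theorem. This is not mere bookkeeping: in the sparse analogue (the Krahmer--Mendelson--Rauhut RIP for partial random circulant matrices) plain entropy-plus-Lipschitz chaining loses in exactly the same way, and the actual proof required a hand-crafted admissible sequence exploiting the Fourier structure. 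Your sketch supplies no analogous mechanism in the low-rank setting, and the proposed ``second application of matrix-Bernstein in $\vm{d}$'' cannot fill the hole, since fixed-$\vm{C}$ concentration does not give the uniformity over $\mathcal{K}_{2R}$ that the chaining needs; as written, your $\gamma_2$ bound is reverse-engineered from the theorem rather than derived.

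For contrast, the paper sidesteps all chaining by assembling three black-box results: (i) the Cand\`es--Plan reduction, by which a concentration inequality $\P{\bigl|\|\cB(\vm{C})\|_2^2-\|\vm{C}\|_{\F}^2\bigr|>\delta\|\vm{C}\|_{\F}^2}\leq 2\er^{-\O/t}$ for each \emph{fixed} $\vm{C}$ already implies $\delta_{2R}(\cB)\leq 0.3$ once $\O\gtrsim tR(M+W)$; (ii) the Krahmer--Ward theorem, by which a matrix $\vm{\Phi}$ satisfying the \emph{sparse} RIP at level $K$, post-multiplied by the random sign diagonal $\vm{D}$, satisfies that fixed-vector concentration with exponent $\er^{-K/c_1}$, applied to $\vm{x}=\vc(\vm{C}\tilde{\vm{F}})$ (using row-orthonormality of $\tilde{\vm{F}}$ so that $\|\vc(\vm{C}\tilde{\vm{F}})\|_2=\|\vm{C}\|_{\F}$); and (iii) the known sparse RIP for the random block-circulant matrix $\vm{\Phi}=[\vm{H}_1,\ldots,\vm{H}_M]$ at sparsity $K\lesssim \O/\log^4(M\O)$. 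Chaining the exponents gives the $\log^5$ in the sampling rate. In effect, the hard probabilistic work you defer to ``the bulk of the technical effort'' is precisely the content of ingredients (ii) and (iii); to make your direct chaos argument close, you would essentially have to re-prove both simultaneously, so either import them as the paper does or confront the refined chaining head-on.
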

As a consequence of this theorem, we can recover an ensemble of correlated signals $\vm{X}_c(t)$ by filtering, modulating, and sampling at a rate that scales linearly with $R$ and is within a constant and logarithmic factors of the optimal sampling rate.

\subsection{Non-periodic signals}
\label{sec:CM-nonperiodic}

The analysis in this paper depends on representing each signal in the ensemble using a Fourier series over the time interval $[0,1]$.  However, the recovery techniques (and most likely the analysis as well) can be extended to signals which are not periodic by windowing the input, and representing each interval of time using something akin to a short time Fourier transform.  For example, we might use a lapped orthogonal transform \cite{malvar89lo} to represent a non-periodic signal $x_m(t)$ for $t\in\R$:
\begin{equation}
	\label{eq:lot}
	x_m(t) = \sum_{n=-\infty}^\infty \sum_{\omega=0}^{W-1} \alpha_{m,k}[\omega]\psi_{n,\omega}(t), 
	\quad \text{where} \quad \psi_{n,\omega}(t) = g(t-n)\cos(\omega_k t).
\end{equation}
For a careful choice of (equally spaced) frequencies $\omega_k$ and smooth window $g(\cdot)$, the $\psi_{n,\omega}$ are orthonormal, and the notion of bandlimitedness corresponds roughly to choosing an $\Omega$.  The windows $g(\cdot-n)$ will overlap each other for consecutive $n$, meaning that some of the samples will be measuring multiple time-windows.  As such, the signals should be reconstructed over multiple time frames simultaneously, meaning the sum in \eqref{eq:lot} runs over a finite set of $n$ which includes every interval involved in a batch of samples.  We can then using a sliding window for the reconstruction, adding in the basis function representing the signal ensemble over are new interval of time, and removing intervals falling outside the window.  The solution inside the sliding window is updated constantly, with the previous solution serving as a ``warm start'' for the new optimization problem.

A framework similar to this for sparse recovery is described in detail in \cite{asif13sp}.

\section{Numerical Experiments}
\label{sec:CM-Exps}

This section presents a number of numerical experiments that illustrate the sampling performance of both compressive multiplexing architectures.  The experiments below measure the compression factor which can be achieved as a function of rank and accuracy.  We also run a stylized experiment using a data set obtained from an actual neural experiment. 

\subsection{Sampling performance}
\label{sec:CM-samp-perf}

In the experiments in this subsection and the next, the unknown-rank-$R$ matrix $\vm{C}_0$ is generated at random by the multiplication of a tall $M \times R$ and a fat $R \times W$ matrix, each with i.i.d. Gaussian entries.  This type of random matrix $\vm{C}_0$ of Fourier coefficients will correspond to a signal ensemble which is dispersed in time.  For these types of signals, we expect the M-Mux and the FM-Mux to have identical performance; as such, we will limit our simulation to the M-Mux architecture. We will call a reconstruction successful when its relative error is sufficiently small, specifically 
\[
	\text{Relative error} = \frac{\|\tilde{\vm{C}}-\vm{C}_0\|_{\F} }{\|\vm{C}_0\|_{\F}} \leq 10^{-3}.
\]
We will evaluate the sampling performance by trading off the sampling efficiency $\eta := R(W+M-R)/\O$ (or the oversampling factor $1/\eta$) against the compression factor $\gamma : = \O/(MW)$.  The success rate is computed over 100 iterations with different random instances of $\vm{C}_0$ in each iteration. 

In the first set of experiments, we take $M = 100$ signals, each bandlimited to $W/2 = 512$Hz.  The phase transition in Figure \ref{fig:CM-CompvsEff} relates the sampling efficiency with the compression factor. The shade represents the empirical probability of success. It is clear that the efficiency is high and improves further with increasing sampling rate. The phase transition in Figure \ref{fig:CM-RankvsOversamp} depicts the trend of the sampling rate for the successful recovery against the increasing rank. Interestingly, the sampling efficiency increases with the increasing values of $R$.  Under the same conditions, the plot in Figure \ref{fig:CM-RankvsComp} depicts the relationship between the lowest sampling rate $\O$, required for the $99\%$ success rate, and the number $R$ of independent signals. For clarity, the vertical axis shows the values of the compression factor instead of showing the plane sampling rate. It is evident that the sampling rate scales linearly with $R$ and is actually with in a small constant of the optimal sampling rate.
\begin{figure}[ht]
	\centering
	\subfigure[]{
        \includegraphics[trim=2.5cm 7.5cm 2.5cm 7.5cm,scale = 0.45]{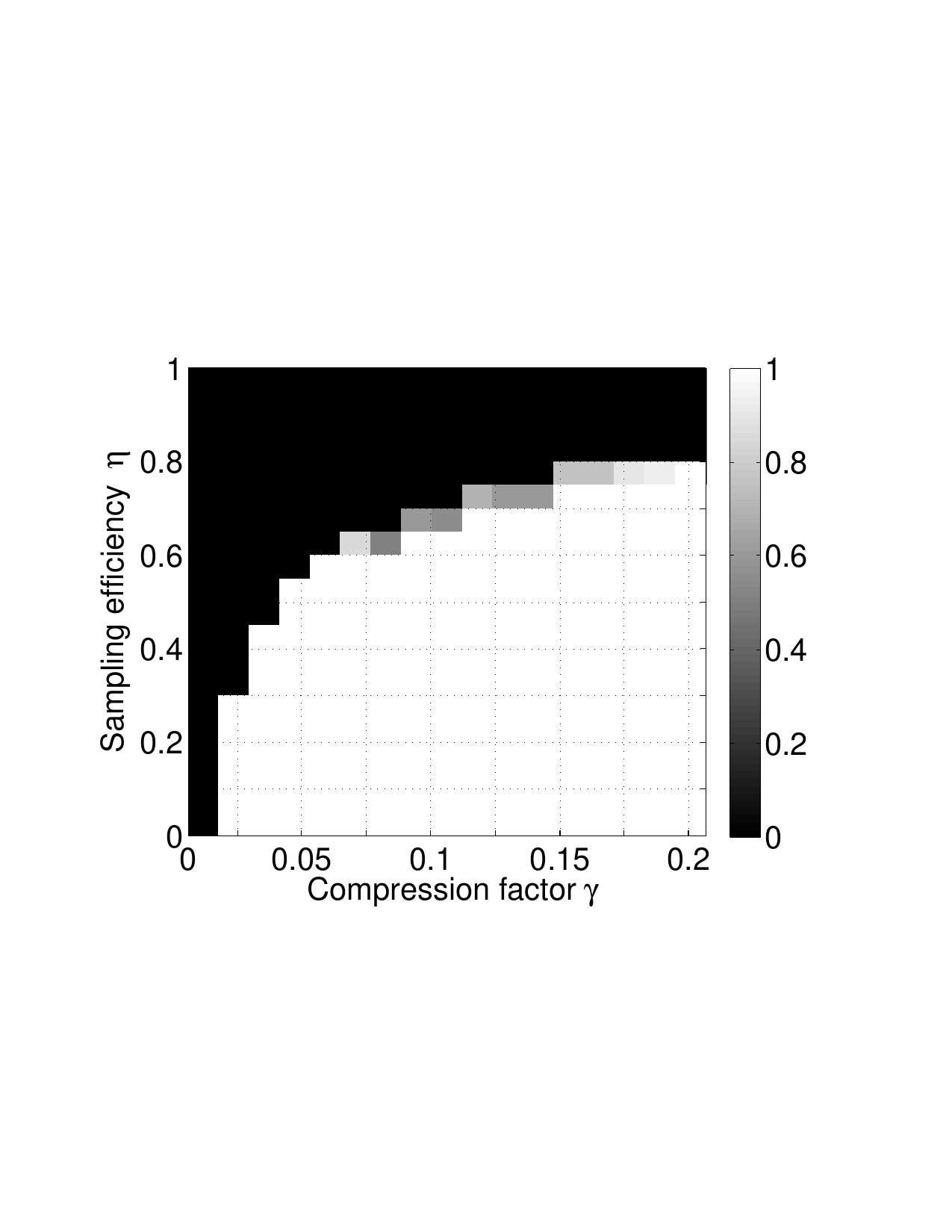}
        \label{fig:CM-CompvsEff}}
  	\subfigure[]{
	      \includegraphics[trim=2.5cm 7.5cm 2.5cm 7.5cm,scale = 0.45]{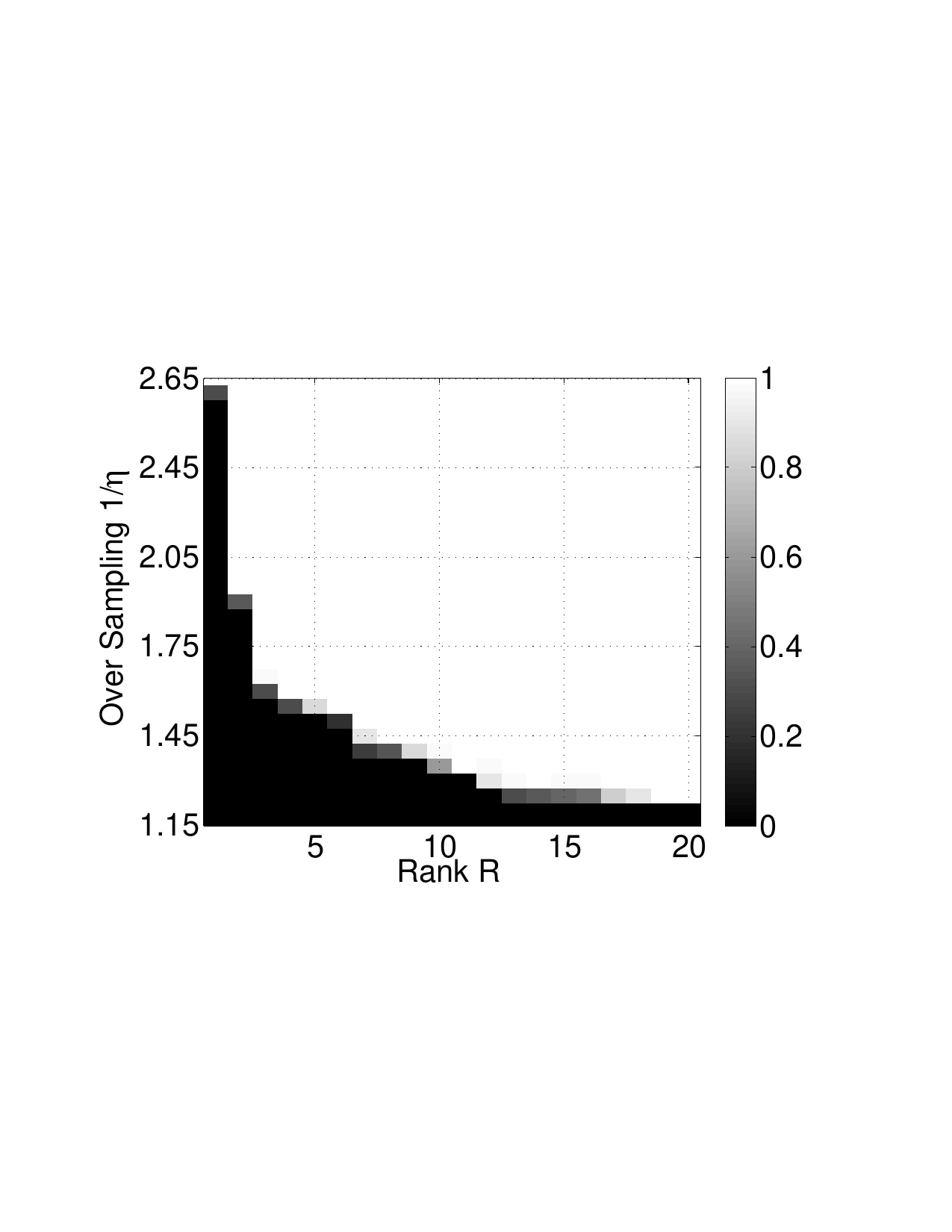}
	      \label{fig:CM-RankvsOversamp}}
        \caption{\small\sl Empirical probability of success for the compressive signal acquisition using the simulated M-Mux. In these experiments, we take an ensemble with $100$ signals, each bandlimited to $512$Hz. The shade shows the probability of success. (a) Success rate as a function of the compression factor and the sampling efficiency. (b) Success rate as a function of number of independent signals and the oversampling.} 
\label{fig:CM-phasetransitions}
\end{figure}

In the final experiment, we take $M = 20\alpha, W = 200\alpha$, and $R = 15$. The blue line in Figure \ref{fig:CM-AlphsvsSamples} illustrates the effect of varying the number of signals, and their bandwidth (by varying $\alpha$) on the minimum sampling rate required using the M-Mux for the successful reconstruction, while keeping fixed number $R$ of independent signals. For reference, the red line plots the corresponding cumulative Nyquist rate for each value of $\alpha$. The graph $\Omega$ depends linearly on $\alpha$, while cumulative Nyquist rate, of course, scales with $\alpha^2$. That is, the gap between $\O$ and the cumulative Nyquist rate widens very rapidly with increasing $M$ and $W$. The graph also shows that the sampling efficiency does not decrease much with the increasing $M$ and $W$. Hence, the sampling efficiency only depends on $R$. 
\begin{figure}[ht]
	\centering
	\subfigure[]{
    \includegraphics[trim=2.5cm 6.8cm 2.5cm 7.5cm,scale = 0.42]{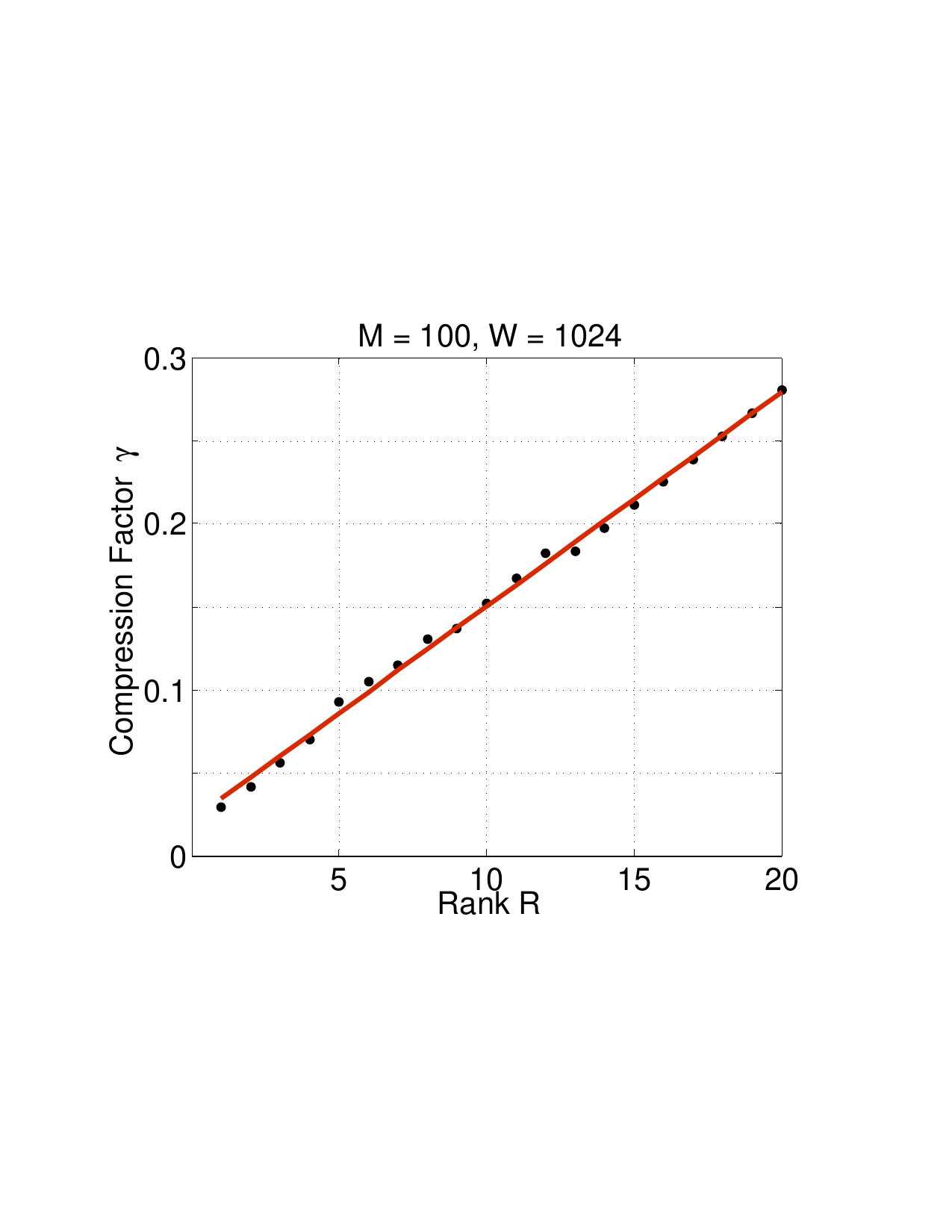}
    \label{fig:CM-RankvsComp}}
  \subfigure[]{
  \includegraphics[trim=2.5cm 7.5cm 2.5cm 7.5cm,scale = 0.45]{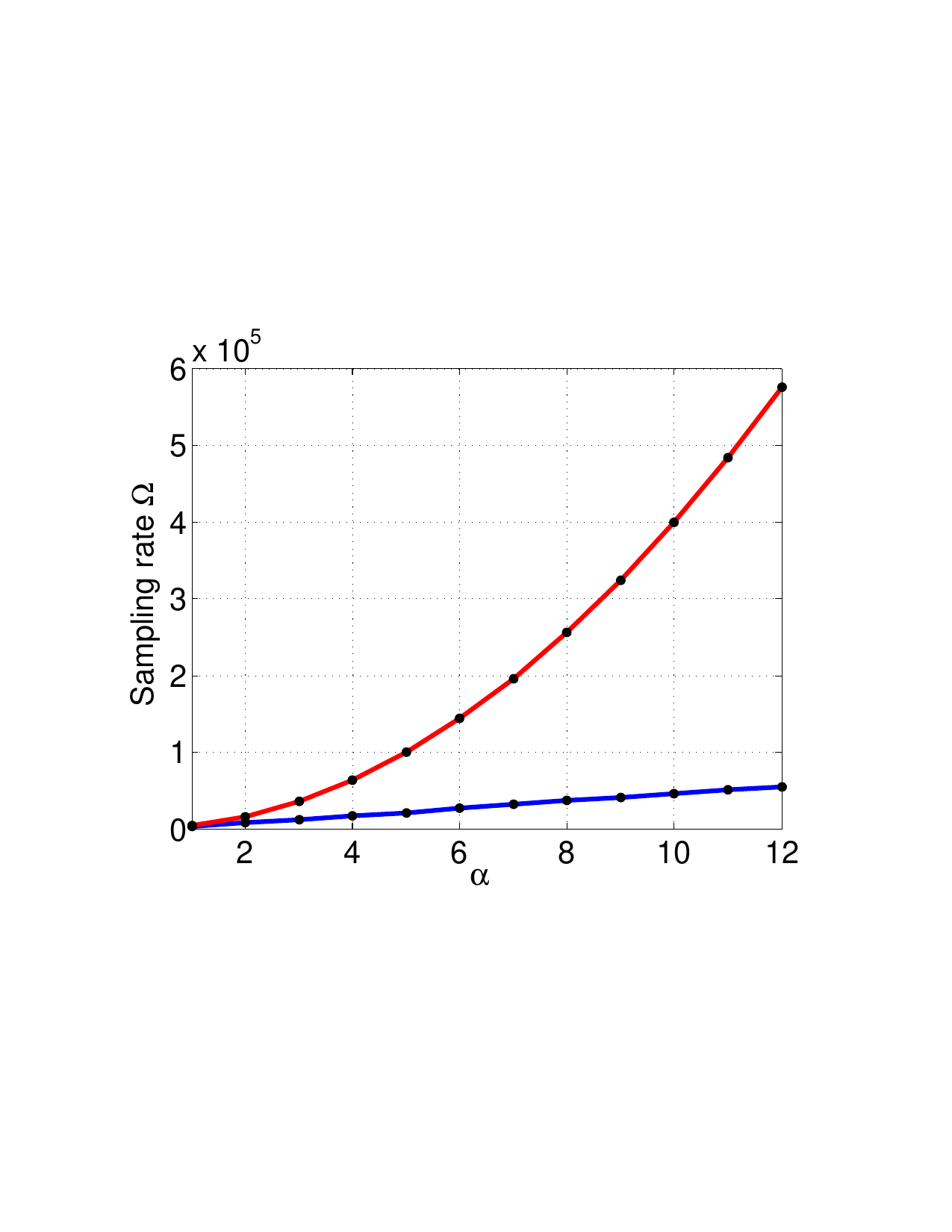}
    \label{fig:CM-AlphsvsSamples}}
\caption{\small\sl (a) Sampling as a function of number of independent signals. The simulated M-Mux takes an ensemble of $100$ signals, each bandlimited to $512$Hz. The discs mark the lowest sampling rate for the signal reconstruction with empirical success rate of 99$\%$. The vertical axis corresponds to $\gamma = \O/(MW)$, the ratio of the sampling rate to the cumulative Nyquist rate. The red line is the linear least squares fit of the data points. (b) Sampling rate as a function of $M$, and $W$. The simulated M-Mux takes an ensemble of $M = 20\alpha$ signals, each bandlimited to $W/2 = 100\alpha$Hz with number of underlying independent signals fixed at $R = 15$. The discs in the blue line mark the lowest sampling rate for the signal reconstruction with empirical success rate of 99$\%$. The red line shows the corresponding cumulative Nyquist rate. }
\end{figure}

\subsection{Recovery in the presence of noise}
\label{sec:CM-stable-rec}

This section simulates the performance of the multiplexer when are contaminated with additive noise $\vm{\xi} \sim \mathcal{N}(0,\sigma^2\vm{I})$ as in \eqref{eq:CM-noisy-meas}. For the signal reconstruction, we solve the optimization program \eqref{eq:CM-nuclearnorm_minnoisy} with $\delta = (\O + \sqrt{\O})^{1/2}\sigma$, a natural choice as $\|\vm{\xi}\|_2 \leq \delta$ holds with high probability. In all of the experiments in this section, we select $M = 100$, $W = 1024$, and $R = 15$.

Figure \ref{fig:CM-StableRec1} shows the signal-to-noise ratio (SNR) in dBs $(10\log_{10}( \|\vm{C}_0\|_{\F}^2/ \|\vm{\xi}\|_2^2))$ versus the relative error in dBs $(10\log_{10}( \mbox{(relative error)}^2 ))$. Each data point is generated by averaging over ten iterations, each time with independently generated matrices $\vm{C}_0$, and noise vector $\vm{\xi}$. The graph shows that the error increases gracefully as the SNR decreases. Figure \ref{fig:CM-StableRec2} depicts the decay of relative error with increasing sampling rate. 

\begin{figure}[ht]
	\centering
	\subfigure[]{
	      \includegraphics[trim=2.5cm 7.5cm 2.5cm 7.5cm,scale = 0.4]{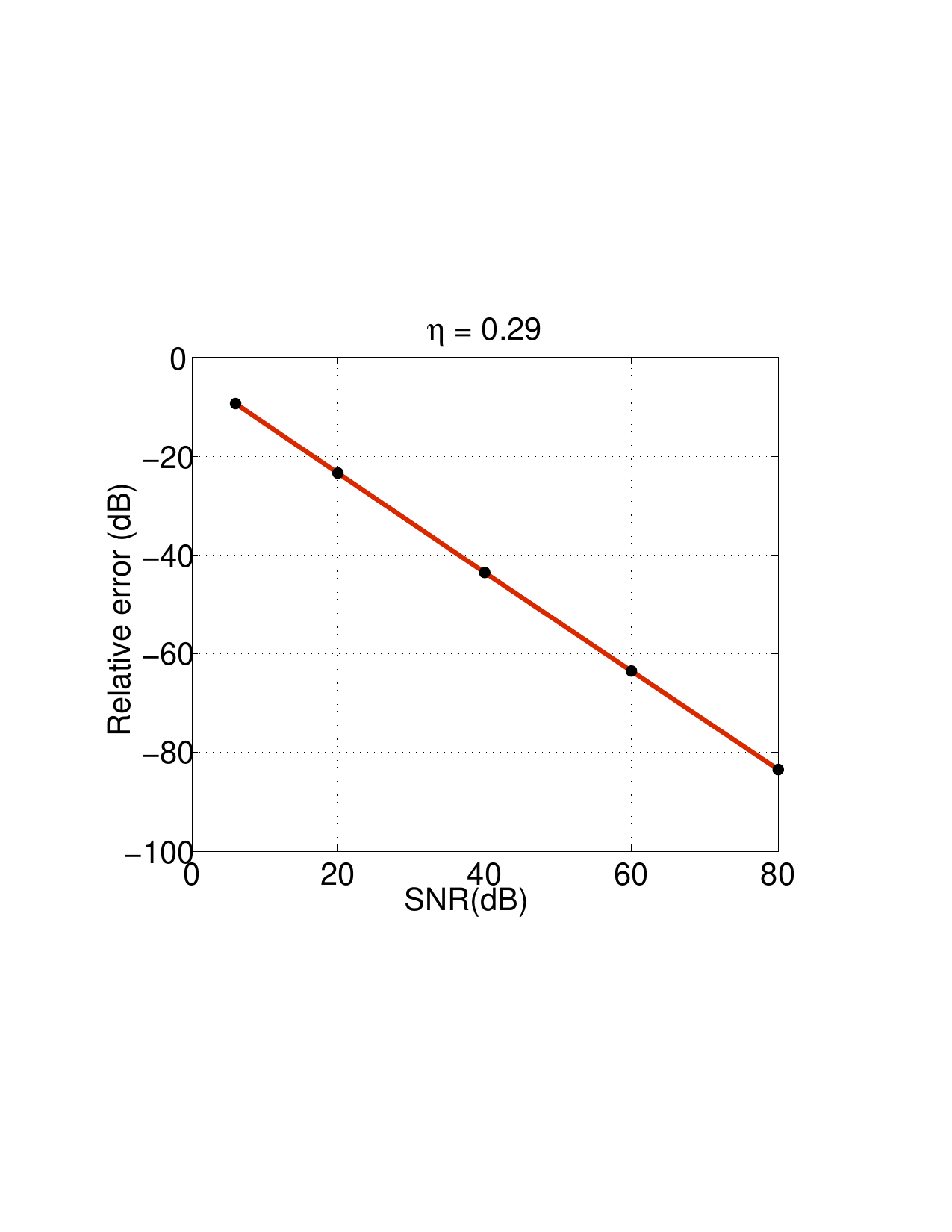}
	      \label{fig:CM-StableRec1}}
	\subfigure[]{
        \includegraphics[trim=2.5cm 7.5cm 2.5cm 7.5cm,scale = 0.4]{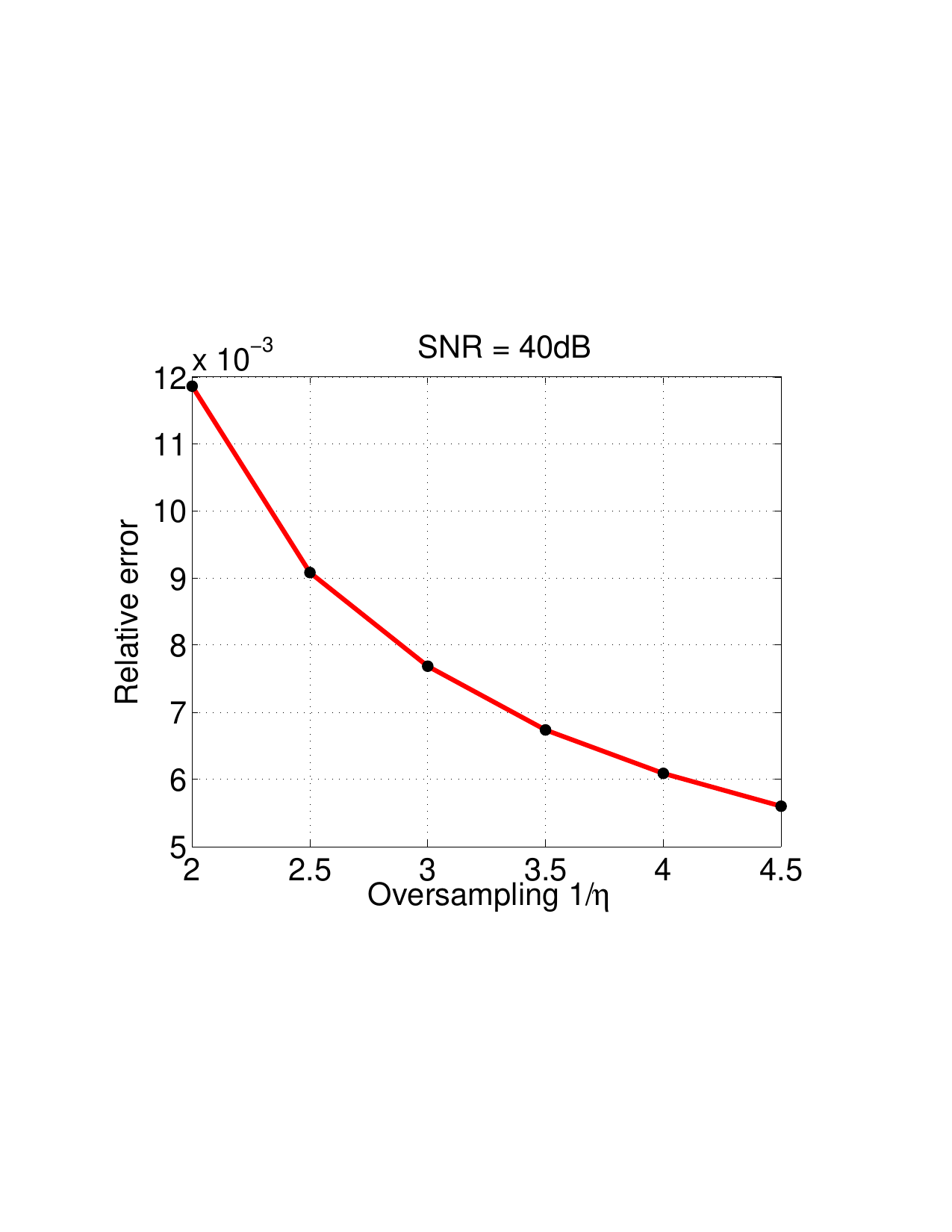}
        \label{fig:CM-StableRec2}}
\caption{\small\sl Recovery using the matrix Lasso in the presence of noise.  The input ensemble to the simulated M-Mux consists of $100$ signals, each bandlimited to $512$Hz with number $R = 15$ of latent independent signals.(a) The SNR in dB versus the relative error in dB.  The sampling rate is fixed and is given by the parameter $\eta = 0.29$. (b) Relative error as a function of the sampling rate. The SNR is fixed at 40dB.}
	\label{fig:CM-StableRec}
\end{figure}

The second set of experiments in this section, shown in Figure \ref{fig:CM-Comparison-KLT-LASSO}, depict the comparison between the performance of the matrix Lasso in \eqref{eq:CM-nuclearnorm_minnoisy}, and the one step thresholding KLT estimator in \eqref{eq:CM-KLT-est}. The first plot compares the two techniques for at an SNR = 40dB, meaning that there is very little noise contaminating the measurements.  It is clear that in this case the matrix Lasso outperforms the KLT estimator by considerable margin. The second plot shows that the reconstruction results are at least comparable in the presence of large (SNR = 6dB, 10dB) noise.  We see that while we can establish that the KLT estimator gives near-optimal results in theory, it is outperformed by the matrix Lasso in practice.

\begin{figure}[ht]
	\centering
	\subfigure[]{
        \includegraphics[trim=2.5cm 7.5cm 2.5cm 7.5cm,scale = 0.4]{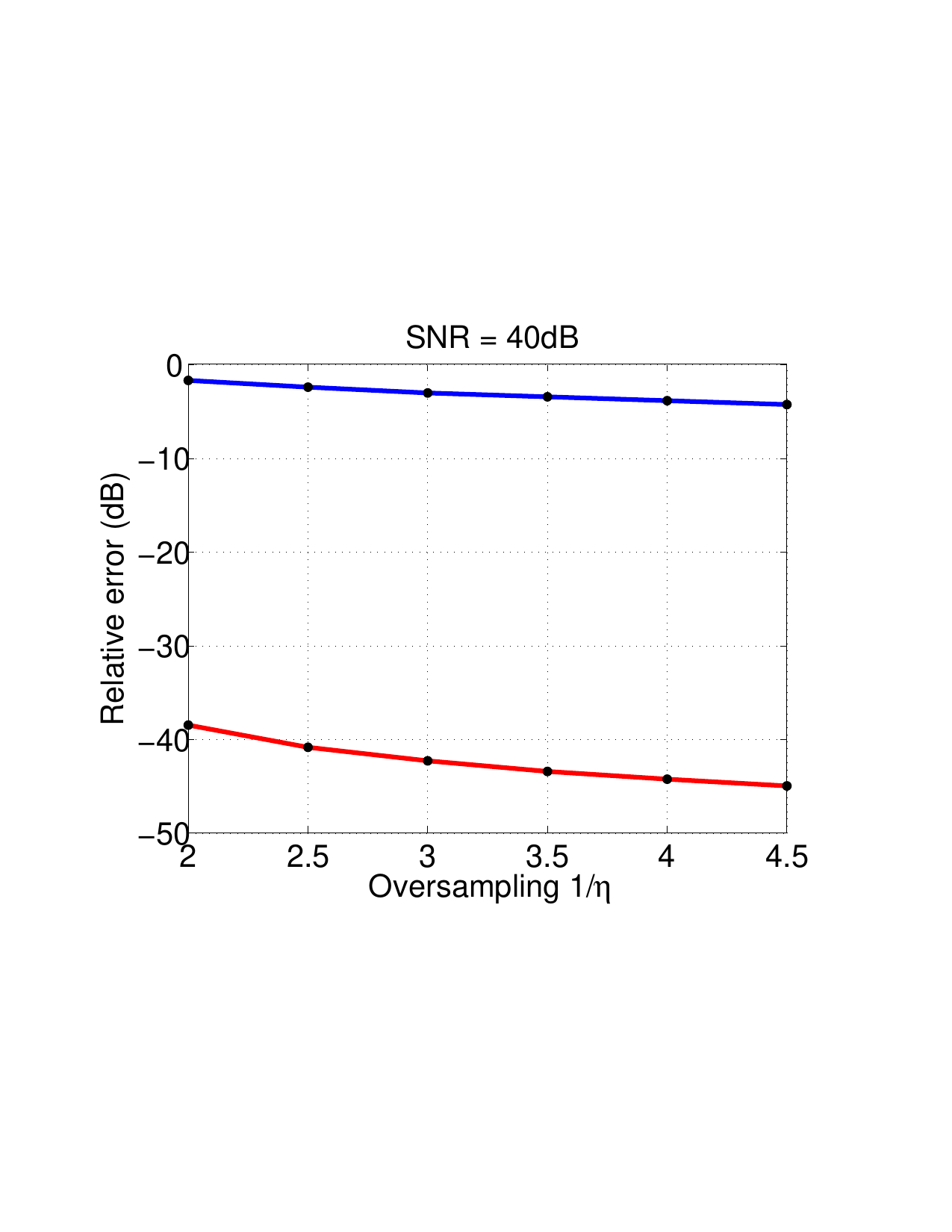}
        \label{fig:CM-StableRec3}}
        \subfigure[]{
        \includegraphics[trim=2.5cm 7.5cm 2.5cm 7.5cm,scale = 0.4]{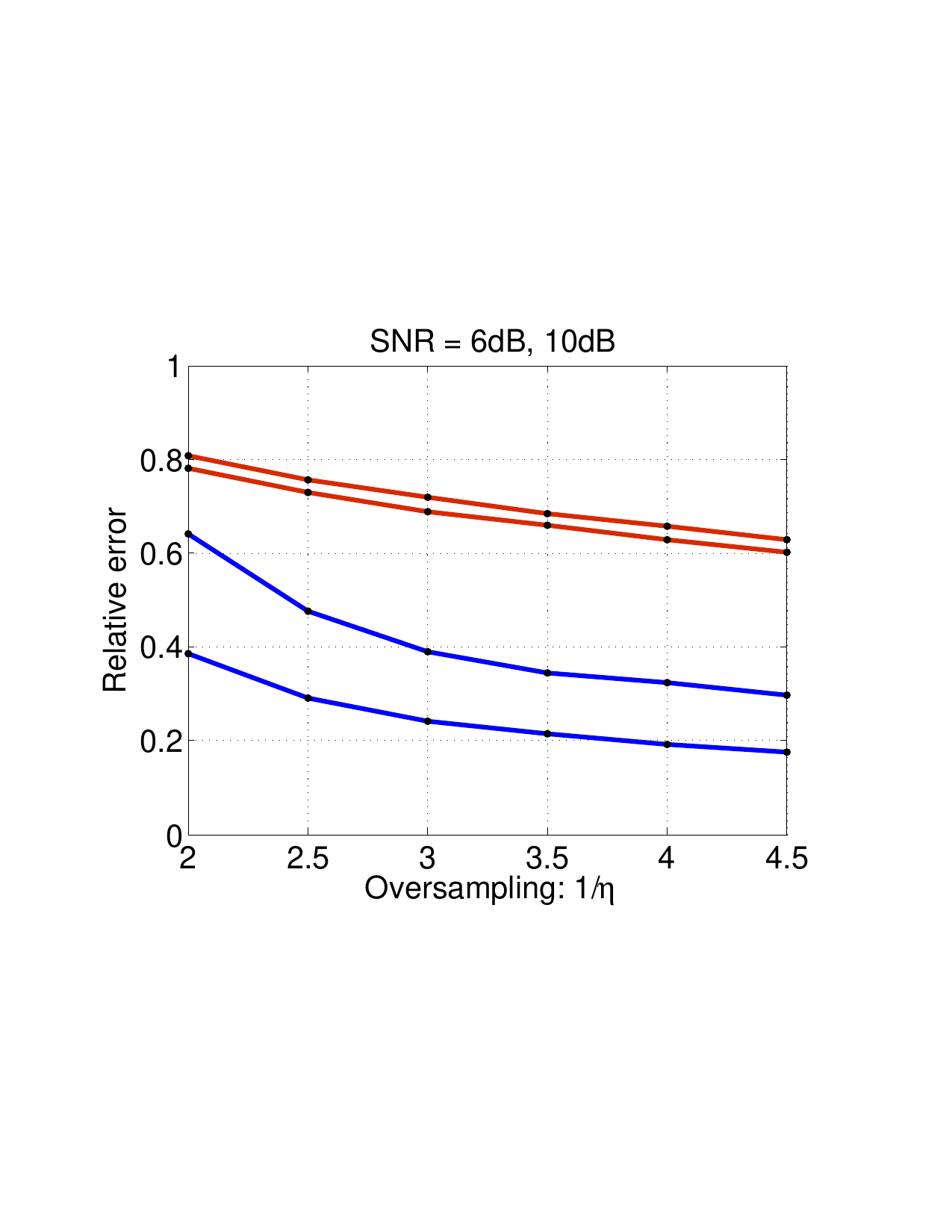}
        \label{fig:CM-StableRec4}}
\caption{\small\sl Comparison of the effectiveness of the matrix Lasso in \eqref{eq:CM-nuclearnorm_minnoisy} with KLT estimator in \eqref{eq:CM-KLT-est} for the signal reconstruction in the presence of noise.  The input ensemble to the simulated M-Mux consists of $100$ signals, each bandlimited to $512$Hz with number $R = 15$ of latent independent signals.(a) Relative error in dB versus oversampling factor; the red, and blue lines depict the performance of matrix Lasso, and the KLT estimator, respectively. The SNR is fixed at 40dB. (b) Relative error versus oversampling factor; the red and blue lines depict the performance of matrix Lasso and the KLT estimator, respectively. The plots are for the SNRs of 6dB and 10dB.}
	\label{fig:CM-Comparison-KLT-LASSO}
\end{figure}

\subsection{Neuronal experiment}

\begin{figure}[ht]
	\centering
  \includegraphics[trim=2.5cm 7.5cm 2.5cm 7.5cm,scale = 0.45]{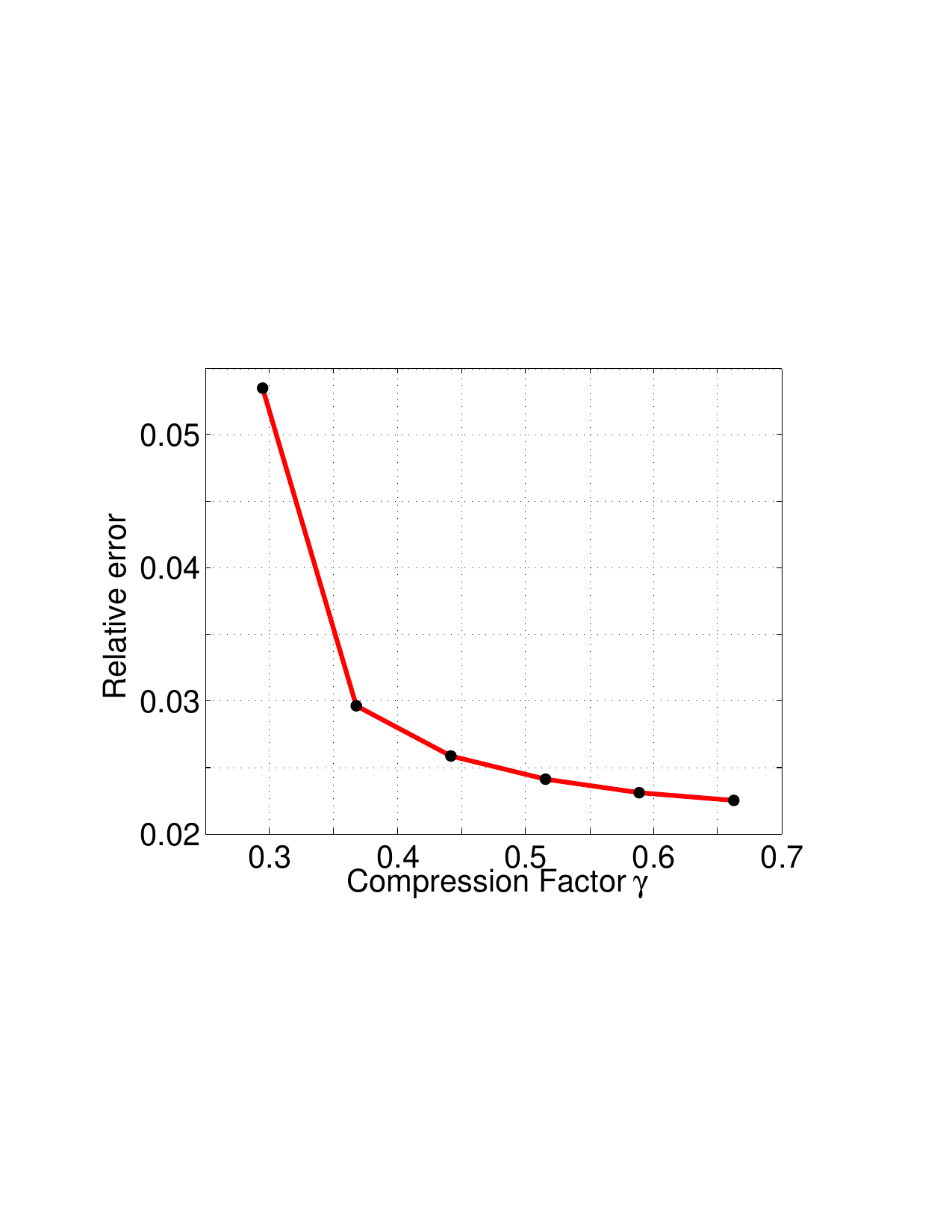}
  \label{fig:CM-NeuralMMux}
	\caption{\small\sl The performance of the M-Mux in an actual neural experiment. Compression factor as a function of the relative error. An ensemble of 108 signals recorded using polytrodes each required to be sampled at 100KHz is acquired using the M-Mux. Even by cutting the sampling rate in half the ensemble can be acquired with $97\%$ accuracy}
\end{figure}

In this subsection, we evaluate the performance of the M-Mux on the data set obtained from an actual neural experiment \cite{dataset} described in Section \ref{sec:CM-app}. We take neural signals recorded by two polytrodes containing a total of 108 recording sites. The signals recorded at each site are required to be sampled at 100,000 samples per second. That is, the Nyquist sampling rate for the acquisition of entire ensemble is 10.8 million samples per second. As mentioned earlier, the signals recorded from such micro sensor arrays are correlated, in particular, the $108 \times 1000$ matrix of samples over a window of 10ms can be approximated by a rank $R = 22$ matrix (to within a relative error of $0.018$). The result in Figure \ref{fig:CM-NeuralMMux} shows that we can reliably acquire the recorded ensemble for this application using the M-Mux at a smaller rate compared to the cumulative Nyquist rate. The compression factor is expected to drop further as the number of recording sites continue to increase.

\section{Proof of Theorem \ref{thm:CM-exactrec-CM1}: Exact recovery for the M-Mux}
\label{sec:CM-Theory1}

Let 
\begin{align}
	\label{eq:CM-C0-svd}
	\vm{C}_0 = \u\boldsymbol{\Sigma}\v^*
\end{align}
be the SVD of $\vm{C}_0$ and let $T$ be the linear space spanned by rank-one matrices of the form $\uu_r\vm{y}^*$ and $\vm{x}\vv_r^*$, $1 \leq r \leq R$, where $\vm{x}$ and $\vm{y}$ are arbitrary. The orthogonal projection of $\PT$ onto $T$ is defined as
\begin{equation}\label{eq:CM-PT-def}
\PT(\z) = \u\u^*\z + \z\v\v^*  - \u\u^*\z\v\v^*,
\end{equation}
and orthogonal projection $\PTc$ onto the orthogonal complement $T^{\perp}$ of $T$ is then 
\begin{align*}
\mathcal{P}_{T^\perp}(\z) = (\mathcal{I}-\PT)(\vm{Z}) = (\vm{I}_{M} - \u\u^*)(\vm{Z})(\vm{I}_{W} - \v\v^*),
\end{align*}
where $\vm{I}_d$ denotes the $d\times d$ identity matrix. It follows form the definition of $\PT$ that
\[
\PT(\A_\o) = (\u\u^*\vm{d}_\o)\ff_\o^* + \vm{d}_\o(\v\v^*\ff_\o)^*-(\u\u^* \vm{d}_\o)(\v\v^*\ff_\o)^*.
\]
Using \eqref{eq:CM-Aomega-def}, we have 
\begin{align}\label{eq:CM-PTAo-fro}
\left\|\PT(\A_\o)\right\|_{\F}^2  &= \left\langle \PT (\A_\o) , \A_\o\right\rangle\notag \\
& = \<\u\u^*\d_\o\ff_\o^*,\d_\o\ff_\o^*\> + \<\d_\o\ff_\o^*\v\v^*,\d_\o\ff_\o^*\>-\<\u\u^* \d_\o\ff_\o^*\v\v^*,\d_\o\ff_\o^*\>\notag\\
& = \|\ff_\o\|_2^2 \|\u^*\d_\o\|_2^2 + \|\v^*\ff_\o\|_2^2\|\d_\o\|_2^2 - \|\u^*\d_\o\|_2^2 \|\v^*\ff_\o\|_2^2\notag\\
& \leq \frac{W}{\O}\|\u^*\d_\o\|_2^2 + M \|\v^*\ff_\o\|_2^2,
\end{align}
where the last inequality follows from the fact that $\|\u^*\d_\o\|_2^2 \|\v^*\ff_\o\|_2^2 \geq 0$, and  that $\|\vm{d}_{\o}\|_2^2 = M$, $\|\ff_\o\|_2^2 = \frac{W}{\O}$.

Standard results in duality theory for semidefinite programming assert that the sufficient conditions for the uniqueness of the minimizer of \eqref{eq:CM-nuclearnorm_min} are as follows:
\begin{itemize}
\item The linear operator $\cA$ is injective on the subspace $T$
\item $\exists \y \in \mbox{Range}(\cA^*)$, such that 
\begin{equation}\label{eq:CM-uniqueness-cond}
\|\PT(\y)-\u\v^*\|_{\F} \leq \frac{1}{2\sqrt{2}\gamma}, \quad \|\PTc(\y)\| \leq \frac{1}{2},
\end{equation}
\end{itemize}
where $\gamma : = \|\cA\|$. The above conditions are also referred to as inexact duality \cite{candes09ex,candes10ma}. The operator norm $\|\cA\|$ can be bounded with high probability using the matrix Chernoff bound \cite{tropp12us}. In particular, it can be shown-- using an argument similar to Lemma 1 of \cite{ahmed2012blind}-- that for some $\beta > 1$
\begin{equation}\label{eq:CM-gamma-bound}
\gamma \leq \sqrt{M\log(M^2\O W)}
\end{equation} 
with probability at least $1-O((MW)^{-\beta})$.

\subsection{The golfing scheme for the M-Mux}
\label{sec:CM-golfing}

To prove the bounds in \eqref{eq:CM-uniqueness-cond}, we will use the standard golfing scheme \cite{gross11re}. We start with portioning $\O$ into $\k$ disjoint partitions $\{\Gamma_k\}_{ 1\leq k \leq \k}$, each of size $|\Gamma_k| = \Delta$, such that $\O = \Delta\k$. We take $\Gamma_k = \left\{ k+(j-1)\k: j \in \{1,\ldots,\Delta\}\right\}$. As will be shown later, we will be interested in knowing how closely the quantity $\E \cA_k^*\cA_k(\w)$ approximates $\w$. Suppose the measurements indexed by the set $\Gamma_k$ are provided by linear operator $\cA_k$, that is, 
\begin{align}
\cA_k(\w) = \{\tr{\ff_\o\d_\o^*\w}\}_{\o \in \Gamma_k}.\label{eq:CM-cA-iterate}
\end{align}
This means
$$\cA_k^*\cA_k(\w) = \sum_{\o \in \Gamma_k} \d_\o\d_\o^*\w\ff_\o\ff_\o^*,$$
which implies that 
$$\E \cA_k^*\cA_k(\w) = \sum_{\o \in \Gamma_k} \w\ff_\o\ff_\o^* = \frac{1}{\k}\w.$$
The last equality follows using the identities
$$\E \vm{d}_{\o}\vm{d}_{\o}^* = \vm{I}_M, \quad \sum_{\o \in \Gamma_k}\ff_\o\ff_\o^* = \frac{1}{\k}\vm{I},$$ 
where the second identity follows from the fact that the sub-matrix formed by selecting the columns of partial Fourier matrix $\tilde{\vm{F}}$ indexed by the set $\Gamma_k$ has orthogonal rows when $\Delta \geq W$; our later analysis will conform to this choice of partition size $\Delta$. Note that golfing scheme with index sets $\Gamma_k$ only works for the signals under consideration that are composed of first $W$ frequency components; see \eqref{eq:CM-C0-def}. In contrast to the signals with first $W$ active frequency components, we can extend the golfing argument to signals with $W$ active frequency components located anywhere in the set $\{1, \ldots,\O\}$; for details, see the golfing scheme in \cite{ahmed2012blind}. In other words, the M-Mux works equally well for the bandlimited signals regardless of the location of the active band in the total bandwidth $\O$.

We begin by iteratively constructing the dual certificate $\y \in \mbox{Range}(\cA^*)$ as follows. Let $\y_0 = 0$, and setup the iteration 
\begin{equation}
	\label{eq:CM-iteration}
	\y_k = \y_{k-1}+\k\cA_k^*\cA_k\left(\u\v^*-\PT(\y_{k-1})\right) \quad 
	\mbox{Note that} \quad \y_k \in \mbox{Range}(\cA^*),
\end{equation}
from which it follows that
\[
\PT(\y_k) = \PT(\y_{k-1})+\k\PT(\cA_k^*\cA_k)\PT\left(\u\v^*-\PT(\y_{k-1})\right);
\]
furthermore, define
\begin{align}\label{eq:CM-Wk-def}
\w_k :& = \PT (\y_k) -\u\v^*,
\end{align}
which gives an equivalent iteration
\begin{align}
\w_k &= \w_{k-1}-\k\PT\cA_k^*\cA_k\PT(\w_{k-1})\notag\\
&= \left(\PT-\k\PT\cA_k^*\cA_k\PT\right)(\w_{k-1})\label{eq:CM-W-iterate}.
\end{align}
Now the Frobenius norm of the iterates $\w_k$ is 
\[
\|\w_k\|_{\F} \leq \max_{ 1\leq k \leq \k}\|\PT-\k\PT\cA_k^*\cA_k\PT\|\|\w_{k-1}\|_{\F},
\]
which by repeated application of Lemma \ref{lem:CM-injectivity} gives a bound on the Frobenius norm of the iterates $\w_k$
\begin{align}\label{eq:CM-Wk-Fro-norm}
\left\|\w_{k}\right\|_{\F} &\leq \left(\frac{1}{2}\right)^{k}\left\|\u\v^*\right\|_{\F} = 2^{-k}\sqrt{R},~\text{for every} ~ k = 1,2,3,\ldots, \kappa
\end{align}
when $\O \geq c\beta \k R(\mu_0^2 M+W) \log^2(MW)$ with probability at least $1-O(\k(MW)^{-\beta})$. Hence, the final iterate obeys
\begin{align}\label{eq:kappa-size}
\left\|\w_{\k}\right\|_{\F} &\leq \frac{1}{2\sqrt{2}\gamma}, \quad \mbox{when} \quad \k \geq 0.5\log_2 (8\gamma^2 R)
\end{align}
with probability at least $1-O(\k(MW)^{-\beta})$. This proves the first bound in \eqref{eq:CM-uniqueness-cond}. In light of \eqref{eq:CM-coherence}, the coherence $\mu_k^2$ of $k$th iterate $\w_{k}$ is defined as
\begin{align}\label{eq:CM-Wk-coherence}
\mu_{k}^2 := \frac{\O}{R}\max_{\o \in \Gamma_k} \|\w_k\ff_\o\|_2^2. 
\end{align}
Lemma \ref{lem:CM-coherence-iterates} will show that $\mu_k^2 \leq 0.5\mu_{k-1}^2$, for every $k = 1,2 ,3, \ldots, \kappa$, which implies that
\begin{equation}\label{eq:CM-bound-coher}
\mu_k^2 \leq 2^{-k}\mu_0^2, \quad \text{for every}~ k \in \{1, \ldots, \k\}
\end{equation}
holds with probability at least $1-O(\O(MW)^{-\beta})$. The final iterate $\y_{\k} = -\sum_{k = 1}^{\k} \k\cA_{k}^*\cA_{k}\w_{k-1}$ of the iteration \eqref{eq:CM-iteration} will be our choice of the dual certificate. We will now show that $\y_{\k}$ obeys the conditions \eqref{eq:CM-uniqueness-cond}.   
\begin{align*}
\|\PTc(\y_{\k})\| &\leq \sum_{k = 1}^{\k} \|\PTc(\k\cA_k^*\cA_k\w_{k-1})\| = \sum_{k = 1}^{\k} \|\PTc(\k\cA_k^*\cA_k\w_{k-1}-\w_{k-1})\|\\
& \leq\sum_{k = 1}^{\k} \|(\k\cA_k^*\cA_k-\mathcal{I})\w_{k-1}\|_{\F} \leq \sum_{k = 1}^{\k} \max_{1 \leq k \leq \k} \|(\k\cA_k^*\cA_k-\mathcal{I})\w_{k-1}\|_{\F}\\
& \leq \sum_{k =1}^{\k} 2^{-k-1} < \frac{1}{2},
\end{align*}
where the third inequality holds with probability at least $1-O(\k(MW)^{-\beta})$ when
\[
\O \geq c\beta \k R \max(\mu_{0}^2 M,W ) \log^2(MW),
\] 
which is implied by Lemma \ref{lem:CM-concentration}, and Equation \eqref{eq:CM-bound-coher}. We pick $\kappa \leq c \log(MW)$ with a constant $c$ chosen such that \eqref{eq:kappa-size} is satisfied. Combining all these results and the probabilities gives us the conclusion of Theorem \ref{thm:CM-exactrec-CM1} with probability at least $1-O(\O(MW)^{-\beta})$. Since the sampling architectures are only interesting when the sampling rate is sub-Nyquist, i.e., $\O \leq MW$, we will simplify the success probability to  $1-O((MW)^{1-\beta})$.

\subsection{Main lemmas for Theorem~\ref{thm:CM-exactrec-CM1}}
\begin{lem}\label{lem:CM-injectivity}
Let $\cA_k$ be as defined in \eqref{eq:CM-cA-iterate}, and $\k$ be the number of partitions used in the golfing scheme; see Section \ref{sec:CM-golfing}. Then for all $\beta > 1$, 
\[
\max_{1\leq k \leq \k}\|\k\PT\cA_k^*\cA_k\PT-\PT\|\leq \frac{1}{2}
\]
provided $\O \geq c \beta\k R(\mu^2_0 M+W)\log^2(MW)$ with probability at least $1-O(\k(MW)^{-\beta})$.
\end{lem}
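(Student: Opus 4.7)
The plan is to express $\k\PT\cA_k^*\cA_k\PT - \PT$ as a sum of independent, centered, self-adjoint operators on $\comps^{M\times W}$ and then invoke the operator Bernstein inequality. From \eqref{eq:CM-cA-iterate}, for any $\w\in T$,
\[
\k\PT\cA_k^*\cA_k\PT(\w) = \k\sum_{\o\in\Gamma_k}\PT(\A_\o)\,\<\PT(\A_\o),\w\>,
\]
whose expectation equals $\PT(\w)$, since $\E\cA_k^*\cA_k = (1/\k)\,\mathcal{I}$ was already computed in Section \ref{sec:CM-golfing}. Defining the rank-one operators $\cL_\o(\cdot):=\<\PT(\A_\o),\cdot\>\PT(\A_\o)$ and their centered rescalings $\vm{Y}_\o:=\k\cL_\o - \E[\k\cL_\o]$, the task reduces to showing $\|\sum_{\o\in\Gamma_k}\vm{Y}_\o\|\leq 1/2$ with high probability, for each fixed $k$.

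Two quantities drive the Bernstein bound. For the uniform bound $B$, observe that $\|\cL_\o\| = \|\PT(\A_\o)\|_{\F}^2$, which by \eqref{eq:CM-PTAo-fro} satisfies
\[
\|\PT(\A_\o)\|_{\F}^2 \leq \tfrac{W}{\O}\|\u^*\d_\o\|_2^2 + M\|\v^*\ff_\o\|_2^2 \leq \tfrac{W}{\O}\|\u^*\d_\o\|_2^2 + \tfrac{\mu_0^2 RM}{\O}.
\]
The coherence condition \eqref{eq:CM-coherence} took care of the second term, but the first is genuinely random. Since $\d_\o$ is a Rademacher vector and $\u$ has orthonormal columns, a Hanson–Wright bound gives $\|\u^*\d_\o\|_2^2 \lesssim R\log(MW)$ uniformly over $\o\in\Gamma_k$ with probability at least $1-O((MW)^{-\beta})$. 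On this event, $\k\|\PT(\A_\o)\|_{\F}^2 \lesssim R(W+\mu_0^2 M)\log(MW)/\Delta$, so we may set $B \asymp R(W+\mu_0^2 M)\log(MW)/\Delta$ and consequently $\|\vm{Y}_\o\|\leq 2B$. For the variance, each $\k\cL_\o$ is PSD, so $\E[\vm{Y}_\o^2]\preceq \E[(\k\cL_\o)^2]\preceq B\cdot \E[\k\cL_\o]$, which after summing gives $\bigl\|\sum_{\o\in\Gamma_k}\E\vm{Y}_\o^2\bigr\|\leq B\,\|\PT\|=B$.

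With $V\leq B$ and ambient dimension $MW$, the operator Bernstein inequality yields
\[
\Pr\!\left\{\Bigl\|\sum_{\o\in\Gamma_k}\vm{Y}_\o\Bigr\|>\tfrac12\right\} \;\leq\; 2MW\exp\!\left(-\tfrac{c}{B}\right).
\]
Requiring this probability to be below $(MW)^{-\beta}$ forces $B\lesssim 1/(\beta\log(MW))$, i.e. $\Delta\gtrsim \beta R(W+\mu_0^2 M)\log^2(MW)$, which upon taking a union bound over the $\k$ partitions yields $\O=\k\Delta\gtrsim c\beta\k R(W+\mu_0^2 M)\log^2(MW)$ with total failure probability $O(\k(MW)^{-\beta})$, exactly the statement of the lemma.

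The main obstacle will be obtaining the uniform bound $B$ at the stated rate, since the naive deterministic estimate $\|\u^*\d_\o\|_2^2\leq M$ would degrade the bandwidth term from $W$ to $MW/R$ and destroy the balance between $W$ and $\mu_0^2 M$. The Hanson–Wright concentration (or equivalently, a sub-exponential matrix Bernstein that absorbs the random magnitude into log factors) is what restores the clean scaling; once that is in hand the rest of the argument is a routine operator Bernstein calculation coupled with a union bound over $\o\in\Gamma_k$ and $k\in\{1,\dots,\k\}$.
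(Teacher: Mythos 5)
Your proposal is correct and follows the same skeleton as the paper's proof --- the same decomposition into centered rank-one operators $\cL_\o(\cdot) = \<\PT(\A_\o),\cdot\>\PT(\A_\o)$, the same use of the bound $\|\PT(\A_\o)\|_{\F}^2 \leq \frac{W}{\O}\|\u^*\d_\o\|_2^2 + M\|\v^*\ff_\o\|_2^2$ together with the coherence condition, a matrix Bernstein inequality, and a union bound over the $\k$ partitions --- but it differs in the technical device used to tame the unbounded factor $\|\u^*\d_\o\|_2^2$. The paper never truncates: it applies the Orlicz-norm ($\psi_1$) version of matrix Bernstein (Proposition~\ref{prop:matbernpsi} with $\alpha=1$), certifying that $\k\|\cL_\o - \E\cL_\o\|$ is subexponential by writing $\|\u^*\d_\o\|_2^2 = \sum_{r}(\sum_m d_\o[m]U[m,r])^2$ and invoking Lemmas~\ref{lem:CM-Orlicz-independence} and \ref{lem:CM-Orlicz-square} to get $\k\|\cL_\o\|_{\psi_1} \leq \mu_0^2\k RM/\O + cR\k W/\O$; it also computes the variance exactly, by expanding the tensor $\A_\o\otimes\A_\o$ entrywise and obtaining $\|\E\sum_{\o\in\Gamma_k}\|\u^*\d_\o\|_2^2\,\cL_\o\| \leq (R+2)/\k$, yielding $\sigma_Z^2 \leq c\k R(\mu_0^2 M + W)/\O$ with no log factor. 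You instead condition on the Hanson--Wright event $\|\u^*\d_\o\|_2^2 \lesssim R\log(MW)$ and run the bounded matrix Bernstein with $V \leq B$ via the PSD domination $\E[(\k\cL_\o)^2] \preccurlyeq B\,\E[\k\cL_\o]$; this is cruder (your variance carries an extra $\log(MW)$), but since the deviation term governs the final condition in both arguments, you land on the identical rate $\O \gtrsim \beta\k R(W+\mu_0^2 M)\log^2(MW)$. Two caveats, neither fatal: (i) after truncation the summands $\mathbf{1}_{E_\o}\k\cL_\o - \E[\k\cL_\o]$ are no longer mean-zero, so you owe a bound on the mean-shift $\|\sum_\o \E[\mathbf{1}_{E_\o^c}\k\cL_\o]\|$ (standard via Cauchy--Schwarz and $\P{E_\o^c} \leq (MW)^{-c\beta}$, but it must be said, and the Hanson--Wright constant must be chosen so that the union over all $\o \in \Gamma_k$ --- up to $\O$ events, not $\k$ --- stays within the claimed $O(\k(MW)^{-\beta})$); (ii) your parenthetical that the truncation is ``equivalently'' a sub-exponential Bernstein is exactly right --- that is the paper's route --- and your closing observation that the deterministic bound $\|\u^*\d_\o\|_2^2 \leq M$ would destroy the $W$ versus $\mu_0^2 M$ balance correctly identifies why the heavier machinery is needed in either version. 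What your route buys is avoiding the paper's somewhat laborious explicit tensor-expectation computation at the price of a log factor that the final bound absorbs anyway.
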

\begin{lem}\label{lem:CM-concentration}
Let $\mu_{k-1}^2$, as in \eqref{eq:CM-Wk-coherence} be the coherence of the iterate $\w_{k-1}$, defined in \eqref{eq:CM-Wk-def}. Then for all $\beta > 1$ 
\[
\max_{1 \leq k\leq \k} \|\k\cA_k^*\cA_k(\w_{k-1}) - \w_{k-1}\| \leq 2^{-k-1}
\]
with probability at least $1-O(\k(MW)^{-\beta})$ provided $\O \geq c\beta\k\max(W,\mu_0^2 M)\log^2(MW)$, where $\k$ is the total partitions used in the golfing scheme.
\end{lem}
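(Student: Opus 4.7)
The plan is to apply a matrix Bernstein-type concentration inequality to
\[
\k\cA_k^*\cA_k(\w_{k-1}) - \w_{k-1} \;=\; \sum_{\o \in \Gamma_k} \vm{X}_\o, \qquad
\vm{X}_\o := \k\bigl[(\d_\o\d_\o^* - \vm{I}_M)\,\w_{k-1}\ff_\o\bigr]\ff_\o^*,
\]
a sum of rank-one $M\times W$ random matrices that are independent and mean-zero once we condition on $\w_{k-1}$. The key structural observation behind the golfing partition is exactly that $\w_{k-1}$ depends only on $\cA_1,\dots,\cA_{k-1}$ and is therefore independent of the signs $\{\d_\o\}_{\o\in\Gamma_k}$ driving $\cA_k$; the mean-zero property follows from $\E \d_\o\d_\o^* = \vm{I}_M$ combined with $\sum_{\o\in\Gamma_k}\ff_\o\ff_\o^* = \k^{-1}\vm{I}_W$. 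Since $\vm{X}_\o$ is rank one its operator and Frobenius norms coincide, which is convenient because the downstream use in the proof of Theorem~\ref{thm:CM-exactrec-CM1} actually wants the Frobenius version of the bound.

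The first quantitative step is a uniform (with high probability) bound $L$ on $\|\vm{X}_\o\|$. Expanding $(\d_\o\d_\o^*-\vm{I}_M)\w_{k-1}\ff_\o = \d_\o(\d_\o^*\w_{k-1}\ff_\o) - \w_{k-1}\ff_\o$ and applying a Hoeffding estimate to the Rademacher linear form $\d_\o^*(\w_{k-1}\ff_\o)$ yields $|\d_\o^*\w_{k-1}\ff_\o|\lesssim \|\w_{k-1}\ff_\o\|_2\sqrt{\log(MW)}$ with high probability; together with $\|\d_\o\|_2=\sqrt{M}$, the coherence bound $\|\w_{k-1}\ff_\o\|_2^2\leq \mu_{k-1}^2 R/\O$, and $\|\ff_\o\|_2^2 = W/\O$ one obtains $L\lesssim \k\sqrt{\mu_{k-1}^2 M R W \log(MW)}/\O$. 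The second step is the variance proxy $\sigma^2 := \max\{\,\|\sum_\o \E\vm{X}_\o\vm{X}_\o^*\|,\;\|\sum_\o \E\vm{X}_\o^*\vm{X}_\o\|\,\}$: a direct expansion of $\E\vm{X}_\o\vm{X}_\o^*$ uses independence of the $\pm 1$ entries of $\d_\o$ to collapse the fourth-moment tensor into a deterministic matrix whose norm is controlled by $\|\w_{k-1}\ff_\o\|_2^2$, and summing against $\|\ff_\o\|_2^2 = W/\O$ while invoking coherence delivers $\sigma^2\lesssim \k R\max(\mu_{k-1}^2 M, W)/\O$. Plugging $L$ and $\sigma^2$ into Tropp's noncommutative Bernstein inequality
\[
\P{\Bigl\|\sum_\o \vm{X}_\o\Bigr\| \geq t} \leq (M+W)\exp\!\left(-\frac{t^2/2}{\sigma^2 + Lt/3}\right),
\]
and setting $t = 2^{-k-1}$, the hypothesis $\O \geq c\beta\k R\max(\mu_{k-1}^2 M, W)\log^2(MW)$ is exactly the one that forces both $\sigma^2/t^2$ and $L/t$ to be small enough for the tail to be $O((MW)^{-\beta})$. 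A union bound over $k = 1,\dots,\k$ then yields the stated probability $1 - O(\k(MW)^{-\beta})$.

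The main obstacle will be extracting the \emph{correct} dependence on the coherence in the variance: the naive estimate $\|\d_\o\d_\o^* - \vm{I}_M\|\leq M$ is far too pessimistic, so one must instead exploit the Hanson--Wright--type concentration of the quadratic form $\d_\o^*(\w_{k-1}\ff_\o\ff_\o^*\w_{k-1}^*)\d_\o$ around its trace $\|\w_{k-1}\ff_\o\|_2^2$ in order to pull the $M$ factor multiplicatively against $\mu_{k-1}^2 R/\O$ rather than additively. A secondary subtlety is that $\w_{k-1}$ is itself random, so each concentration estimate must be stated conditionally on the $\sigma$-algebra generated by $\cA_1,\dots,\cA_{k-1}$ before the outer union bound is taken.
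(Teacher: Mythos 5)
Your decomposition is identical to the paper's: your $\vm{X}_\o$ is exactly the summand $\z_\o = \k\left(\<\w_{k-1},\A_\o\>\A_\o - \E\,\<\w_{k-1},\A_\o\>\A_\o\right)$ used in the paper's proof (note $\<\w_{k-1},\A_\o\>\A_\o = \d_\o\d_\o^*\w_{k-1}\ff_\o\ff_\o^*$), your variance computation is the paper's Lemma~\ref{lem:CM-Supporting1} (an elementary expansion of the Rademacher fourth moments; no Hanson--Wright machinery is needed, so your anticipated ``main obstacle'' is not one), and the conditional-independence point about the golfing partition is handled the same way. The only cosmetic difference is that the paper feeds Orlicz ($\psi_2$) norms of the summands directly into the Orlicz version of matrix Bernstein (Proposition~\ref{prop:matbernpsi}), where you use a Hoeffding truncation plus the bounded Tropp inequality; either works.

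The genuine gap is in the $k$-dependence. You set $t = 2^{-k-1}$ while freezing $\sigma^2 \lesssim \k R\max(\mu_{k-1}^2 M, W)/\O$ and $L$ at their worst-case values, having silently replaced $\|\w_{k-1}\|_{\F}^2$ by the crude bound $R = \|\w_0\|_{\F}^2$. Under the stated hypothesis on $\O$, the Bernstein tail at step $k$ is then of order $(M+W)\exp\left(-c\,4^{-k}\beta\log^2(MW)\right)$, which is vacuous once $k \gtrsim \log\log(MW)$; but the golfing scheme needs $\k \gtrsim \log(\gamma^2 R) \sim \log(MW)$ partitions to satisfy \eqref{eq:kappa-size}, so your union bound cannot close. (The coherence decay $\mu_k^2 \leq \mu_{k-1}^2/2$ of Lemma~\ref{lem:CM-coherence-iterates} supplies at best a $2^{-k}$ factor on the $\mu_{k-1}^2 M$ branch and nothing on the $W$ branch, so it cannot produce the required $4^{-k}$.) The paper avoids this by keeping $\|\w_{k-1}\|_{\F}$ symbolic throughout: both variance branches and the $\psi_2$ norm of the summands are proportional to $\|\w_{k-1}\|_{\F}$, so the deviation bound \eqref{eq:CM-concentration-bound} scales with $\|\w_{k-1}\|_{\F}$, and only at the end does one invoke $\|\w_{k-1}\|_{\F} \leq 2^{-(k-1)}\sqrt{R}$ from \eqref{eq:CM-Wk-Fro-norm} (a consequence of Lemma~\ref{lem:CM-injectivity}) to get $2^{-k-1}$ with failure probability $(MW)^{-\beta}$ uniform in $k$; this is also where the factor $R$ in the sampling-rate condition actually enters. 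Your proof becomes correct with that single repair --- carry $\|\w_{k-1}\|_{\F}$ rather than $\sqrt{R}$ through $\sigma^2$ and $L$. One further small caution: each summand being rank one makes its operator and Frobenius norms agree, but the \emph{sum} is not rank one, so the Frobenius-norm version of the conclusion does not follow ``for free'' from the operator-norm bound as your opening remark suggests.
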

\begin{lem}\label{lem:CM-coherence-iterates}
Let $\w_k$, and $\mu_{k}^2$ be as in \eqref{eq:CM-Wk-def}, and \eqref{eq:CM-Wk-coherence}.  
If $\O \geq c\k \beta R(\mu_0^2M+W)\log^2(MW)$, then 
\[
\mu_k^2  \leq \frac{1}{2} \mu_{k-1}^2
\]
holds for every $k = 1,2,3,\ldots,\kappa$ with probability at least $1-O(\O(MW)^{-\beta})$. The number $\k$ is the total partitions used in golfing in Section \ref{sec:CM-golfing}. 
\end{lem}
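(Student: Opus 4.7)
\textbf{Proof plan for Lemma \ref{lem:CM-coherence-iterates}.} The plan is to show the coherence decay pointwise in $\o_0$: for each fixed $\o_0\in\{1,\ldots,\O\}$, bound $\|\w_k\ff_{\o_0}\|_2^2$ using a vector Bernstein inequality, then take a union bound over $\o_0$ and $k$. Fix $\o_0$ and condition on $\w_{k-1}$, which depends only on $\cA_1,\ldots,\cA_{k-1}$ and hence is independent of $\cA_k$. Starting from \eqref{eq:CM-W-iterate} and using $\w_{k-1}\in T$, expand
\[
    \w_k\ff_{\o_0}
    \;=\;
    \w_{k-1}\ff_{\o_0}
    - \k\sum_{\o\in\Gamma_k}(\d_\o^*\w_{k-1}\ff_\o)\,\PT(\d_\o\ff_\o^*)\ff_{\o_0}.
\]
Writing $Y_\o := \k(\d_\o^*\w_{k-1}\ff_\o)\,\PT(\d_\o\ff_\o^*)\ff_{\o_0}$ and using that $\E[\k\cA_k^*\cA_k(\w_{k-1})]=\w_{k-1}$, we have $\sum_{\o\in\Gamma_k}\E Y_\o = \w_{k-1}\ff_{\o_0}$, so
\[
    \w_k\ff_{\o_0} \;=\; -\sum_{\o\in\Gamma_k}\bigl(Y_\o-\E Y_\o\bigr)
\]
is a sum of $|\Gamma_k|=\Delta$ mean-zero, independent random vectors in $\comps^M$.

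Next, I would estimate the two quantities needed to invoke the standard vector Bernstein inequality. For the uniform bound, I would use Cauchy-Schwarz to write
\[
    \|Y_\o\|_2 \;\leq\; \k\,|\d_\o^*\w_{k-1}\ff_\o|\,\|\PT(\d_\o\ff_\o^*)\|_\F\,\|\ff_{\o_0}\|_2,
\]
invoke $|\d_\o^*\w_{k-1}\ff_\o|\leq \sqrt{M}\,\|\w_{k-1}\ff_\o\|_2\leq \sqrt{M}\sqrt{R\mu_{k-1}^2/\O}$ deterministically, bound $\|\PT(\d_\o\ff_\o^*)\|_\F^2$ by \eqref{eq:CM-PTAo-fro}, and use $\|\ff_{\o_0}\|_2^2=W/\O$. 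For the variance $\sigma^2=\sum_{\o}\E\|Y_\o-\E Y_\o\|_2^2$, I would expand $\E\|Y_\o\|_2^2$ directly: $Y_\o$ is quadratic in $\d_\o$, so the fourth-moment expectation over Rademacher vectors reduces by Wick contractions to a combination of $\|\w_{k-1}\|_\F^2$, $\|\w_{k-1}\ff_\o\|_2^2$, $\|\v^*\ff_\o\|_2^2$, and (through $\PT$) terms involving $\|\u\|_\F^2=R$. After summing over $\o\in\Gamma_k$ and using $\sum_{\o\in\Gamma_k}\ff_\o\ff_\o^*=\k^{-1}\vm{I}$, the variance collapses to a bound of the form
\[
    \sigma^2 \;\lesssim\; \frac{\k^2}{\O}\bigl(\mu_0^2 M + W\bigr)\,\bigl(R\mu_{k-1}^2/\O\bigr)\,\|\w_{k-1}\|_\F^2 + \text{(lower-order)}.
\]

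Vector Bernstein then yields, with probability $1-O((MW)^{-\beta})$,
\[
    \|\w_k\ff_{\o_0}\|_2 \;\leq\; c\sqrt{\sigma^2\log(MW)} + cB\log(MW),
\]
where $B$ is the uniform bound. Plugging in $\|\w_{k-1}\|_\F^2\leq 2^{-2(k-1)}R$ from the induction (Lemma~\ref{lem:CM-injectivity}) and the standing hypothesis $\O\gtrsim \k\beta R(\mu_0^2 M + W)\log^2(MW)$, one verifies that both the variance and the uniform-bound contributions are at most $\tfrac12\sqrt{R\mu_{k-1}^2/\O}$, giving $\|\w_k\ff_{\o_0}\|_2^2\leq\tfrac12 R\mu_{k-1}^2/\O$, i.e.\ $\mu_k^2\leq\tfrac12\mu_{k-1}^2$. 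Finally, union-bounding over the $\O$ possible values of $\o_0$ and the $\k\leq c\log(MW)$ iterations absorbs only $\log$ factors and yields the claimed failure probability $O(\O(MW)^{-\beta})$.

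The main obstacle is the variance calculation in step two: unlike the operator-norm argument used in Lemma \ref{lem:CM-concentration}, here the Rademacher randomness of $\d_\o$ appears \emph{simultaneously} in the scalar $\d_\o^*\w_{k-1}\ff_\o$ and inside $\PT(\d_\o\ff_\o^*)$, so the fourth-order expectation must be handled carefully. Concretely, since no incoherence assumption is imposed on $\u$, one cannot control $\|\u^*\d_\o\|_2^2$ deterministically; I would either (i) use a Hanson-Wright inequality to show $\|\u^*\d_\o\|_2^2\lesssim R\log(MW)$ uniformly in $\o$ on a high-probability event and condition on this event, or (ii) absorb this dependence into the variance by exploiting that after taking expectation only traces $\tr{\u\u^*}=R$ survive. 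Once the $\d_\o$ moments are correctly evaluated, all remaining steps are routine applications of the vector Bernstein inequality and union bounds paralleling the arguments for Lemmas~\ref{lem:CM-injectivity} and \ref{lem:CM-concentration}.
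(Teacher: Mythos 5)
Your plan is sound and reaches the paper's conclusion by a genuinely different decomposition. You fix one Fourier column $\ff_{\o_0}$ and control the whole $M$-vector $\w_k\ff_{\o_0}=-\sum_{\o\in\Gamma_k}(Y_\o-\E Y_\o)$ by a vector-valued Bernstein inequality; the paper instead fixes an \emph{entry}, writing $\mu_k^2=\frac{\O}{R}\max_{\o}\sum_{m=1}^M\<\w_k,\vm{e}_m\ff_\o^*\>^2$ and bounding each scalar $\<\w_k,\vm{e}_m\ff_\o^*\>$ by scalar Bernstein, after expanding $\<\PT(\A_{\op}),\vm{e}_m\ff_\o^*\>$ into three terms ($Y_1,Y_2,Y_3$ in the paper's notation), controlling fourth moments of the Rademacher chaos through Lemma \ref{lem:CM-Supporting1}, and treating products of subgaussians as subexponential via Lemma \ref{lem:CM-Orlicz-1and2}. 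The accounting that lets both arguments avoid any incoherence assumption on $\u$ is the same: the paper carries per-row factors $\|\bar{\uu}_m\|_2^2$ through every bound and only at the end uses $\sum_m\|\bar{\uu}_m\|_2^2=R$, whereas in your version the same $R$ emerges in one shot from $\tr{\u\u^*}$ inside the trace/second-moment computation (your option (ii)), which is the correct one of your two proposed fixes. Your route has lighter bookkeeping and a union bound over $\O\k$ columns rather than over entries, and note that Proposition \ref{prop:matbernpsi} with $N=1$ already supplies exactly the Hilbert-space Bernstein you need, with $\sigma_Z^2$ equal to the full trace second moment $\sum_\o\E\|Y_\o\|_2^2$ that you compute; the extra factor of $\k$ (and the mild double counting of $\|\w_{k-1}\ff_\o\|_2^2$ against $\|\w_{k-1}\|_\F^2$) in your stated variance is absorbed by the $\log^2$ slack in the hypothesis since $\k\lesssim\log(MW)$.

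One step as literally written would fail, and it is broader than the issue you flag. Your uniform bound uses the \emph{deterministic} estimates $|\d_\o^*\w_{k-1}\ff_\o|\leq\sqrt{M}\,\|\w_{k-1}\ff_\o\|_2$ and, implicitly through \eqref{eq:CM-PTAo-fro}, $\|\u^*\d_\o\|_2^2\leq M$. Even after repairing the second with Hanson--Wright ($\|\u^*\d_\o\|_2^2\lesssim R+\log$), the first $\sqrt{M}$ survives, and the $Bt$ term of Bernstein then forces roughly $\O\gtrsim\k\beta\sqrt{MRW(W+\mu_0^2M)}\log(MW)$, which exceeds the claimed rate $\k\beta R(\mu_0^2M+W)\log^2(MW)$ whenever $M\gg R\log^2(MW)$ --- exactly the regime of interest. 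The cure is your own suggested mechanism applied to \emph{both} random factors: use the $\psi_1$-Orlicz form of Bernstein (as in Proposition \ref{prop:matbernpsi} and as the paper does for its scalar summands), with
\[
\bigl\|\,\|Y_\o\|_2\bigr\|_{\psi_1}\;\lesssim\;\k\,\|\w_{k-1}\ff_\o\|_2\left(\sqrt{R}\,|\ff_\o^*\ff_{\o_0}|+\sqrt{M}\,|\ff_\o^*\v\v^*\ff_{\o_0}|\right),
\]
obtained from the $\psi_2\times\psi_2\to\psi_1$ product rule (the paper's Lemmas \ref{lem:CM-Orlicz-independence}--\ref{lem:CM-Orlicz-1and2}), so that $|\d_\o^*\w_{k-1}\ff_\o|$ enters at its subgaussian scale $\|\w_{k-1}\ff_\o\|_2$ rather than $\sqrt{M}\|\w_{k-1}\ff_\o\|_2$. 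With that substitution the Orlicz term is dominated by the variance term under the stated sampling rate, and the halving $\mu_k^2\leq\frac12\mu_{k-1}^2$ with failure probability $O(\O(MW)^{-\beta})$ follows as you outline.
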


Finally, we will use a specialized version of the matrix Bernstein-type inequality \cite{tropp12us,koltchinskii10nu} to bound the operator norm of the random matrices in this paper. The version of Bernstein listed below depends on the Orlicz norms $\|\z\|_{\psi_\alpha},~ \alpha \geq 1$ of a matrix $\z$. The Orlicz norms are defined as 
\begin{equation}\label{eq:matpsinorm}
\|\z\|_{\psi_\alpha} = \inf \{ u>0: \E \exp (\frac{\|\z\|^\alpha}{u^\alpha}) \leq 2\}, \quad \alpha \geq 1.
\end{equation}
Suppose that, for some constant $U_{\alpha} > 0, \|\z_q\|_{\psi_\alpha} \leq U_{(\alpha)}, q = 1, \ldots, Q$ then the following proposition holds. 
\begin{prop}\label{prop:matbernpsi}
Let $\vm{Z}_1,\vm{Z}_2,\ldots,\vm{Z}_Q$ be iid random matrices with dimensions $M \times N$ that satisfy $\E (\vm{Z}_q) = 0$. Suppose that $\|\z\|_{\psi_\alpha} < \infty$ for some $\alpha \geq 1 $. Define 
\begin{align}\label{eq:matbernsigma}
&\sigma_Z  = \max \left\{\left\|\sum_{q = 1}^Q (\E \vm{Z}_q\vm{Z}_q^*)\right\|^{1/2},\left\|\sum_{q = 1}^Q (\E \vm{Z}_q^*\vm{Z}_q)\right\|^{1/2} \right\}
\end{align}
Then $\exists$ a constant $C > 0$ such that , for all $t>0$, with probability at least $1-\er^{-t}$
\begin{align}\label{eq:matbernpsi}
&\left\|\vm{Z}_1+\cdots+\vm{Z}_Q\right\| \leq C \max\left\{\sigma_Z\sqrt{t+\log(M+N)},U_{\alpha}\log^{1/\alpha}\left(\frac{QU_\alpha^2}{\sigma_Z^2}\right)(t+\log(M+N))\right\}
\end{align}
\end{prop}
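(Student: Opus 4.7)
The strategy is the standard truncation argument of \cite{koltchinskii10nu}: split each $\vm{Z}_q$ into a bounded piece and a tail piece, apply the bounded-variance matrix Bernstein inequality of \cite{tropp12us} to the centered truncated sum, and absorb the tail contribution into the overall failure probability. Fix the truncation level $R := U_\alpha \log^{1/\alpha}(QU_\alpha^2/\sigma_Z^2)$, and decompose
\[
	\vm{Z}_q ~=~ \vm{Z}_q\mathbf{1}_{\{\|\vm{Z}_q\|\leq R\}} + \vm{Z}_q\mathbf{1}_{\{\|\vm{Z}_q\|> R\}} ~=:~ \vm{Z}_q^{\flat} + \vm{Z}_q^{\sharp}.
\]
By Markov's inequality applied to $\er^{\|\vm{Z}_q\|^\alpha/U_\alpha^\alpha}$ together with the definition \eqref{eq:matpsinorm}, one has $\P{\|\vm{Z}_q\| > r} \leq 2\er^{-r^\alpha/U_\alpha^\alpha}$ for every $r > 0$. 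A union bound across $q = 1,\ldots,Q$ then shows that the event $\mathcal{E} = \{\vm{Z}_q^{\sharp} = 0 \text{ for all } q\}$ holds with probability at least $1 - \sigma_Z^2/U_\alpha^2$, and integrating the same Orlicz tail above the level $R$ verifies that the deterministic centering correction $\|\sum_q \E \vm{Z}_q^{\sharp}\|$ is of lower order than the bound we are after.

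On the event $\mathcal{E}$ the random sum $\sum_q \vm{Z}_q$ coincides with $\sum_q(\vm{Z}_q^{\flat} - \E\vm{Z}_q^{\flat})$ up to the negligible correction above. Each centered summand has operator norm at most $2R$, and because truncation can only shrink second moments, the matrix variance parameter of the sequence $\{\vm{Z}_q^{\flat} - \E \vm{Z}_q^{\flat}\}$ remains bounded by $\sigma_Z^2$ as defined in \eqref{eq:matbernsigma}. Tropp's bounded matrix Bernstein inequality \cite{tropp12us} then gives
\[
	\P{\left\|\sum_q (\vm{Z}_q^{\flat} - \E\vm{Z}_q^{\flat})\right\| > s} ~\leq~ (M+N)\exp\!\left(-\frac{s^2/2}{\sigma_Z^2 + 2Rs/3}\right).
\]
Inverting this tail bound in the two regimes where either the variance term $\sigma_Z^2$ or the uniform-bound term $2Rs/3$ dominates the denominator produces the two branches $s \lesssim \sigma_Z\sqrt{t+\log(M+N)}$ and $s \lesssim R(t+\log(M+N))$, and taking their maximum yields precisely \eqref{eq:matbernpsi} after substituting our chosen $R$.

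The main technical obstacle is the choice of the truncation threshold. If $R$ is too small then the event $\mathcal{E}$ fails with probability exceeding $\er^{-t}$, violating the claimed confidence level; if $R$ is too large then the tail-dominated branch of \eqref{eq:matbernpsi} is inflated beyond the stated scaling. The value $R = U_\alpha\log^{1/\alpha}(QU_\alpha^2/\sigma_Z^2)$ is essentially the unique scaling that balances these two effects, and it is the origin of the logarithmic ratio appearing in \eqref{eq:matbernpsi}. Once this calibration is fixed, the remaining checks---that the truncation correction is of lower order and that truncation does not inflate the variance proxy---are standard exercises in integrating $\psi_\alpha$ tails and monotonicity of second moments under conditioning, and the proof concludes by combining the two failure events via a union bound.
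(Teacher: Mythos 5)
Note first that the paper never proves Proposition~\ref{prop:matbernpsi}: it imports it from \cite{tropp12us,koltchinskii10nu}, and your truncation-plus-bounded-Bernstein strategy is indeed the route taken in that literature. However, as written your argument has a genuine gap: the truncation level $R = U_\alpha\log^{1/\alpha}(QU_\alpha^2/\sigma_Z^2)$ is fixed, independent of $t$. Your union bound then gives $\P{\mathcal{E}^c} \leq 2Q\,\er^{-R^\alpha/U_\alpha^\alpha} = 2\sigma_Z^2/U_\alpha^2$, a constant floor on the failure probability that does not decay with $t$. Consequently the argument certifies the conclusion only with probability $1 - c\,\sigma_Z^2/U_\alpha^2$, not $1-\er^{-t}$ for all $t>0$; for any $t \gtrsim \log(U_\alpha^2/\sigma_Z^2)$ the claimed confidence level is simply out of reach. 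This regime is not exotic --- every invocation of the proposition in this paper takes $t = \beta\log(MW)$ with $\beta>1$, typically well above that threshold. Your closing discussion of the calibration of $R$ acknowledges the tension but the proposed value does not resolve it, and no single $t$-independent threshold can: pushing $\P{\mathcal{E}^c}$ below $\er^{-t}$ forces $R \gtrsim U_\alpha(t+\log Q)^{1/\alpha}$, which inflates the second branch to $U_\alpha(t+\log Q)^{1/\alpha}(t+\log(M+N))$, strictly worse than the stated bound for large $t$ (quadratic in $t$ when $\alpha=1$, the case used throughout the paper).

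The standard repair, as in Koltchinskii's proof, is to treat the unbounded remainder quantitatively rather than forcing it to vanish: either peel over dyadic truncation layers $\{2^jR \leq \|\vm{Z}_q\| < 2^{j+1}R\}$, applying the bounded matrix Bernstein inequality on each layer with uniform bound $2^{j+1}R$ and a variance proxy decaying like $\er^{-c\,2^{j\alpha}(R/U_\alpha)^\alpha}$, then summing the tails; or control $\bigl\|\sum_q(\vm{Z}_q^{\sharp}-\E\vm{Z}_q^{\sharp})\bigr\|$ through the scalar exceedances $\sum_q\|\vm{Z}_q\|\mathbf{1}_{\{\|\vm{Z}_q\|>R\}}$ via a Fuk--Nagaev/Bernstein bound, exploiting that their mean and variance (though not their $\psi_\alpha$ norm) are exponentially small in $(R/U_\alpha)^\alpha$. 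A secondary, smaller issue: your claim that the centering correction $\|\sum_q\E\vm{Z}_q^{\sharp}\|$ is of lower order is asserted, not checked; integrating the Orlicz tail gives $\|\sum_q\E\vm{Z}_q^{\sharp}\| \lesssim (R+U_\alpha)\sigma_Z^2/U_\alpha^2$, which is not dominated by $\sigma_Z\sqrt{t+\log(M+N)}$ in all parameter regimes (e.g.\ when $\sigma_Z/U_\alpha$ is comparable to $\sqrt{Q}$), so this step also needs the finer bookkeeping. The ingredients you do have --- the Markov/Orlicz tail bound, the observation that $\E\,\vm{Z}_q^{\flat}(\vm{Z}_q^{\flat})^* \preceq \E\,\vm{Z}_q\vm{Z}_q^*$ so the variance proxy survives truncation, and the two-regime inversion of the bounded Bernstein tail --- are exactly the right ones; the missing idea is the $t$-adaptive, multi-level handling of the tails.
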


\subsection{Proof of Lemma \ref{lem:CM-injectivity}}
In this section, we are concerned with bounding the centered random process 
\begin{align*}
\k\PT\cA_k^*\cA_k \PT - \PT  &= \k\PT\cA_k^*\cA_k \PT- \E \k\PT (\cA_k^*\cA_k) \PT\\
&= \k\sum_{\o \in \Gamma_k} (\PT(\A_\o)\otimes\PT(\A_\o)-\E \PT(\A_\o)\otimes\PT(\A_\o)),
\end{align*}
where we have used the fact 
$$\k \E \PT\cA_k^*\cA_k \PT = \k\PT\E (\cA_k^*\cA_k) \PT = \PT$$
The last equality follows from the fact that 
$$\E (\cA_k^*\cA_k) = \frac{1}{\k}\mathcal{I}.$$
Now define $\cL_\o$, which maps $\vm{C}$ to $\< \PT(\A_\o),\vm{C}\> \PT(\A_\o)$. This operator is rank-1 with operator norm $\|\cL_\o\| = \|\PT(\A_\o)\|_{\F}^2$, and we are interested in bounding the operator norm
\[
\|\k\PT\cA_k^*\cA_k\PT-\PT\| = \k\sum_{\o \in \Gamma_k} (\cL_\o-\E\cL_\o)
\] 
For this purpose, we will use matrix Bernstein's bound in Proposition \ref{prop:matbernpsi}. Since $\cL_\o$ is symmetric, we only need to calculate the following for variance 
\begin{align}\label{eq:CM-variance-injectivity}
\k^2\|\sum_{\o \in \Gamma_k} \E \cL_\o^2 -(\E\cL_\o)^2\| \leq \k^2\|\sum_{\o \in \Gamma_k} \E \cL^2_\o\| = \k^2\|\E \sum_{\o \in \Gamma_k}\|\PT(\A_\o)\|_{\F}^2 \cL_\o\|,
\end{align}
where the inequality follows from the fact that $\E \cL_\o^2$, and $(\E\cL_\o)^2$ are symmetric positive-semidefinite (PSD) matrices, and for PSD matrices $\A$, and $\vm{B}$, we have $\|\A-\vm{B}\| \leq \max\{\|\A\|,\|\vm{B}\|\}$. Plugging in the definition of $\cL_\o$ and using \eqref{eq:CM-PTAo-fro}, we have 
\begin{align}
\left\|\E \sum_{\o \in \Gamma_k}\|\PT(\A_\o)\|_{\F}^2 \cL_\o\right\| &\leq \left\|\sum_{\o \in \Gamma_k} \E\left\{\left(\frac{W}{\O}\|\u^*\d_\o\|_2^2+M \|\v^*\ff_\o\|_2^2\right)\cL_\o\right\}\right\|\notag\\
&\leq \frac{W}{\O}\left\|\E \sum_{\o \in \Gamma_k} \|\u^*\d_\o\|_2^2\cL_\o\right\|+M\left(\max_{1 \leq \o \leq \O}\|\v^*\ff_\o\|_2^2\right)\left\|\sum_{\o \in \Gamma_k} \E \cL_\o \right\|\notag\\
&\leq \frac{W}{\O}\left\|\E \sum_{\o \in \Gamma_k} \|\u^*\d_\o\|_2^2\cL_\o\right\|+\mu_0^2R\frac{M}{\O}\left\|\sum_{\o \in \Gamma_k} \E \cL_\o \right\|.\label{eq:CM-Main}
\end{align}
The last inequality follows form the definition of the coherence \eqref{eq:CM-coherence}. Before proceeding further, we write out the tensor $\A_{\o} \otimes \A_{\o}$ in the matrix form:
\begin{align*}
\d_\o\ff_\o^*\otimes \d_\o\ff_\o^* &= \begin{bmatrix}
d_\o[1]d_\o[1] \ff_\o\ff_\o^* & d_\o[1]d_\o[2]\ff_\o\ff_\o^* & \cdots & d_\o[1]d_\o[M]\ff_\o\ff_\o^*\\
d_\o[2]d_\o[1] \ff_\o\ff_\o^* & d_\o[2]d_\o[2]\ff_\o\ff_\o^* & \cdots & d_\o[2]d_\o[M]\ff_\o\ff_\o^*\\
\vdots & \vdots & \ddots & \vdots\\
d_\o[M]d_\o[1] \ff_\o\ff_\o^* & d_\o[M]d_\o[2]\ff_\o\ff_\o^* & \cdots & d_\o[M]d_\o[M]\ff_\o\ff_\o^*\\
\end{bmatrix}\\
 & = \{d_\o[\alpha]d_\o[\beta]\ff_\o\ff_\o^*\}_{(\alpha,\beta)}.
\end{align*}
We will use $\bar{\uu}_{\alpha}$ to denote the $\alpha$th row of the matrix $\u$, and $\delta_{x}$ is the indicator function when the condition $x$ is true. Using these notations, we can simplify the following quantity of interest  
\begin{align*}
\|\E \|\u^*\d_\o\|_2^2 (\PT(\A_\o) \otimes \PT(\A_\o))\| &\leq \|\PT\| \|\E \|\u^*\d_\o\|_2^2 (\A_\o \otimes \A_\o)\| \|\PT\|\\
& \leq \|\E \|\u^*\d_\o\|_2^2 \{d_\o[\alpha]d_\o[\beta]\ff_\o\ff_\o^*\}_{\alpha,\beta}\|\\
&= \left\|\{ \|\u\|_{\F}^2\ff_\o\ff_\o^*\delta_{(\alpha=\beta)}+2\<\bar{\uu}_\alpha,\bar{\uu}_\beta\> \ff_\o\ff_\o^*\delta_{(\alpha\neq \beta)}\}_{(\alpha,\beta)}\right\|,
\end{align*}
where second inequality follows form the fact that $\|\PT\|\leq 1$ and the third equality follows by expanding and taking expectation on each entry of the matrix. Summing over $\o \in \Gamma_k$ gives 
\begin{align*}
\left\|\E \sum_{\o \in \Gamma_k} \|\u^*\d_\o\|_2^2 \PT(\A_\o) \otimes \PT(\A_\o)\right\| &\leq \left\|\sum_{\o \in \Gamma_k} \{\|\u\|_{\F}^2\ff_\o\ff_\o^*\delta_{(\alpha=\beta)}+2\<\bar{\uu}_\alpha,\bar{\uu}_\beta\>\ff_\o\ff_\o^*\delta_{(\alpha\neq\beta)}\}_{(\alpha,\beta)}\right\|\\
& =  \left\|\{ \|\u\|_{\F}^2\sum_{\o \in \Gamma_k} \ff_\o\ff_\o^*\delta_{(\alpha=\beta)}+2\<\bar{\uu}_\alpha,\bar{\uu}_\beta\> \sum_{\o \in \Gamma_k}\ff_\o\ff_\o^*\delta_{(\alpha\neq\beta)}\}_{(\alpha,\beta)}\right\| \\
& = \left\|\{ \|\u\|_{\F}^2\frac{1}{\k}\vm{I}_W\delta_{(\alpha=\beta)}\}_{(\alpha,\beta)}+\{2\<\bar{\uu}_\alpha,\bar{\uu}_\beta\> \frac{1}{\k}\vm{I}_W\delta_{(\alpha\neq\beta)}\}_{(\alpha,\beta)}\right\|.
\end{align*}
Now, it follows by simple linear algebra 
\begin{align*}
\left\|\E \sum_{\o \in \Gamma_k} \|\u^*\d_\o\|_{\F}^2 \PT(\A_\o) \otimes \PT(\A_\o)\right\| \leq \frac{1}{\k}(\|\u\|_{\F}^2 + 2\|\u\u^*\|) \leq \frac{R+2}{\k}.
\end{align*}
Plugging the above result, together with \eqref{eq:CM-Main} in \eqref{eq:CM-variance-injectivity}, we obtain  
\begin{equation}\label{eq:CM-var-injectivity}
\sigma_Z^2 = \k^2\left\|\sum_{\o \in \Gamma_k} \E [(\cL_\o -\E\cL_\o )^2]\right\| \leq c\k R\frac{\mu_0^2M+W}{\O}.
\end{equation}
Using the definition matrix Orlicz norm \eqref{eq:matpsinorm}, and the fact that $\cL_\o$, and $\E\cL_\o$ are positive semidefinite matrices, it follows
\begin{align}\label{eq:CM-orliczbound}
\k\|\cL_\o - \E [\cL_\o]\|_{\psi_1} & \leq \k\max\{\|\cL_\o\|_{\psi_1}, \|\E\cL_\o\|_{\psi_1}\} 
\end{align}
As shown earlier, we have $\|\cL_{\o}\| = \|\PT(\A_{\o})\|_{\F}^2$, and also it is easy to show that $\|\E\cL_{\o}\| = W/\O$. Using it together with \eqref{eq:matpsinorm}, and \eqref{eq:CM-PTAo-fro}, we obtain the Orlicz-1 norm 
\begin{align*}
\k\|\cL_{\o}\|_{\psi_1} &\leq \mu_0^2\k R\frac{M}{\O} + \k\frac{W}{\O}\left\|\sum_{r = 1}^R \left(\sum_{m = 1}^M d_\o[m]U[m,r]\right)^2\right\|_{\psi_1}\\
& \leq \mu_0^2\k R\frac{M}{\O} + \k\frac{W}{\O}\sum_{r = 1}^R \left\|\left(\sum_{m = 1}^M d_\o[m]U[m,r]\right)^2\right\|_{\psi_1}
\end{align*}
It can easily be shown that random variable: 
\[ 
Y = \sum_{m = 1}^M d_\o[m] U[m,r]
\]
is subgaussian, which implies that $Y^2$ is a sub-exponential random variable; see Lemma \ref{lem:CM-Orlicz-square}. In addition, by the independence of $\{d_{\o}[m]\}_{1\leq m \leq M}$ and using Lemma \ref{lem:CM-Orlicz-independence}, we have
\begin{align*}
\sum_{r = 1}^R\left\| \left(\sum_{m = 1}^M d_\o[m] U[m,r]\right)\right\|_{\psi_2}^2& \leq  c\sum_{r=1}^R \sum_{m = 1}^M \|d_\o[m]U[m,r]\|_{\psi_2}^2\leq cR.\\
\end{align*}
Hence, 
\[
\k\|\cL_{\o}\|_{\psi_1} \leq \mu_0^2\k R\frac{M}{\O} + cR\k\frac{W}{\O},
\]
which dominates the maximum in \eqref{eq:CM-orliczbound}, and thus $\k\|\cL_\o - \E [\cL_\o]\|$ is sub-exponential; hence, $\alpha = 1$ in \eqref{eq:matbernpsi}. Let $\Lambda = \mu_0^2M+W$, and as defined earlier that $|\Gamma_k|= \Delta$, and $\k = \O/\Delta$. Then
\begin{equation}\label{eq:CM-orlicznorm-injectivity}
U_{1}\log\left(\frac{|\Gamma_k| U_1^2}{\sigma^2_Z}\right) \leq c\k R\frac{\Lambda}{\O}\log(R\Lambda) 
\end{equation}
Plugging \eqref{eq:CM-var-injectivity}, and \eqref{eq:CM-orlicznorm-injectivity} in \eqref{eq:matbernpsi}, we have
\begin{align*}
\|\k\PT\cA_k^*\cA_k\PT-\PT\| \leq c\max\left\{\sqrt{\frac{\k R \Lambda\beta\log(MW)}{\O}},  \frac{\k R \Lambda}{\O}\log(R\Lambda)\beta \log(MW) \right\}.
\end{align*} 
The result of the Lemma \ref{lem:CM-injectivity} follows by taking  $\O \geq c\beta \k R \Lambda \log(MW)\log(R\Lambda)$, $t = \beta\log(MW)$, and using the union bound over $\k$ independent partitions.

\subsection{Proof of Lemma \ref{lem:CM-concentration}} 
We are interested in controlling the operator norm of 
\begin{equation}\label{eq:CM-objective}
\k\cA_k^*\cA_k(\w_{k-1}) -\w_{k-1} = \sum_{\o \in \Gamma_k} \k(\<\w_{k-1},\A_\o\> \A_\o - \E \<\w_{k-1},\A_\o\>\A_\o).
\end{equation}
To control the operator norm of the sum of random matrices 
\[
\z_\o = \k( \<\w_{k-1},\A_\o\>\A_\o - \E \<\w_{k-1},\A_\o\>  \A_\o)
\]
on the r.h.s. of \eqref{eq:CM-objective}, we will again refer to Proposition \ref{prop:matbernpsi}. We begin by evaluating the first variance term 
\begin{align*}
\left\|\sum_{\o \in \Gamma_k} \E \z_\o\z_\o^* \right\| &\leq \k^2\left\|\sum_{\o \in \Gamma_k} \E |\<\w_{k-1},\A_\o\>|^2 \A_\o\A_\o^*\right\| = \k^2\max_{1 \leq \o \leq \Omega}\|\ff_\o\|^2 \left\|\sum_{\o \in \Gamma_k} \E |\<\w_{k-1},\A_\o\>|^2 \vm{d}_\o \vm{d}_\o^*\right\|,
\end{align*}
where last equality follows form \eqref{eq:CM-Aomega-def}. Lemma \ref{lem:CM-Supporting1} shows that 
\[
\E |\<\w_{k-1},\A_\o\>|^2\vm{d}_\o \vm{d}_\o^* \preccurlyeq 3\|\w_{k-1}\ff_{\o}\|_2^2 \vm{I}_M.
\]
Summation over $\o \in \Gamma_k$ gives 
\[
\sum_{\o \in \Gamma_k} \E |\<\w_{k-1},\A_\o\>|^2\vm{d}_\o \vm{d}_\o^* \preccurlyeq \frac{3}{\k}\|\w_{k-1}\|_{\F}^2 \vm{I}_M,
\]
which implies that 
\begin{equation}\label{eq:CM-conc-var1}
\left\|\sum_{\o \in \Gamma_k}\E \z_\o\z_\o^*\right\| \leq 3\k\frac{W}{\O}\|\w_{k-1}\|_{\F}^2 \leq 3\kappa R\frac{W}{\Omega} 2^{-2(k-1)},
\end{equation} 
where the last inequality is the result of \eqref{eq:CM-Wk-Fro-norm}. The second variance term needs  
\begin{align*}
\left\|\sum_{\o \in \Gamma_k} \E \z_\o^*\z_\o\right\| &\leq \k^2\left\|\sum_{\o \in \Gamma_k} \E |\<\w_{k-1},\A_\o\>|^2\vm{A}_\o^*\vm{A}_\o\right\| \leq M \k^2\|\sum_{\o \in \Gamma_k} \ff_\o\ff_\o^*\| \max_{\o}\E |\<\w_{k-1},\A_\o\>|^2.
\end{align*}
Using the facts that $\E |\<\w_{k-1},\A_\o\>|^2 = \|\w_{k-1}\ff_{\o}\|_2^2,$ and $\sum_{\o \in \Gamma_k} \ff_\o\ff_\o^* = (1/\k)\vm{I}_{W}$ gives  
\begin{align}\label{eq:CM-conc-var2}
\left\|\sum_{\o \in \Gamma_k} \E \z_\o^*\z_\o\right\| &\leq \mu_{k-1}^2 \k R\frac{M}{\O} \leq \mu_{0}^2 \k R\frac{M}{\O}2^{-2(k-1)},
\end{align}
which follows by \eqref{eq:CM-Wk-coherence}. Plugging \eqref{eq:CM-conc-var1}, and \eqref{eq:CM-conc-var2} in \eqref{eq:matbernsigma}, we obtain 
\begin{equation}\label{eq:CM-variance-conc}
\sigma_{Z} \leq c 2^{-(k-1)}\cdot\max\left\{\sqrt{\mu_{0}^2\k R\frac{M}{\O}} , \sqrt{3\k R\frac{W}{\O}}\right\}.
\end{equation}
The fact that $\z_{\o}$ are subgaussian can be proven by showing that $\|\z_{\o}\|_{\psi_2} < \infty$. First, note that \begin{align*}
\|\z_\o\|_{\psi_2} &\leq 2 \|\k\<\w_{k-1},\A_\o\> \A_\o\|_{\psi_2}.
\end{align*}
Second, the operator norm of the matrix under consideration is 
\[
\|\<\w_{k-1},\A_\o\> \A_\o\| \leq \sqrt{\frac{MW}{\O}}|\<\w_{k-1},\A_\o\>|.
\]
Using the definition \eqref{eq:matpsinorm}, we obtain
\begin{align*}
\|\z_\o\|_{\psi_2} &\leq 2\k\sqrt{\frac{MW}{\O}} \|\<\w_{k-1},\A_\o\> \|_{\psi_2}.
\end{align*}
Let $\vm{w}_m^*$ denote the rows of the $M \times W$ matrix $\w_{k-1}$. We can write
\[
\<\w_{k-1},\A_\o\> = \sum_{m = 1}^M d_\o[m]\vm{w}_m^*\ff_{\o},
\]
and using the independence of $d_{\o}[m]$ with Lemma \ref{lem:CM-Orlicz-independence}, we see that
\begin{align*}
\|\<\w_{k-1},\A_\o\>\|_{\psi_2}^2 \leq c \sum_{m=1}^M \|\vm{w}_m^*\ff_\o\|_{\psi_2}^2 \leq c\mu_{k-1}^2\frac{R}{\O} \leq c 2^{-2(k-1)} \mu_0^2 \frac{R}{\Omega}. 
\end{align*}
Hence, $U_2$ in Proposition \ref{prop:matbernpsi} is 
\begin{align*}\label{eq:CM-Zw-orlicz-concentration}
U_2 = \|\z_\o\|_{\psi_2} \leq c  2^{-k+1}\left(\k^2\mu^2_0 R\frac{MW}{\O^2}\right)^{1/2} \leq c 2^{-k+1} \sqrt{\frac{\kappa}{\Omega}}\max\left\{ \sqrt{\mu_0^2 \kappa R\frac{M}{\Omega}}, \sqrt{\kappa R\frac{W}{\Omega}}\right\},
\end{align*}
and using the fact that $\k = \O/\Delta$, and $|\Gamma_k| = \Delta$, we obtain
\begin{align}
U_2\log^{1/2}\left(\frac{|\Gamma_k| U_2^2}{\sigma_Z^2} \right)  \leq c 2^{-k+1} \max\left\{ \sqrt{\mu_0^2 \kappa^2 R\frac{M}{\Omega}}, \sqrt{\kappa^2 R\frac{W}{\Omega}}\right\}\log^{1/2}\Omega.
\end{align}
Using \eqref{eq:CM-variance-conc}, and \eqref{eq:CM-Zw-orlicz-concentration} in \eqref{eq:matbernpsi} with $t = \beta\log(MW)$, we have 
\begin{align}
&\|\k\cA_k^*\cA_k(\w_{k-1})-\w_{k-1}\| \notag \\
&\quad \leq  c 2^{-k+1} \max\left\{ \sqrt{\mu_0^2 \kappa R\frac{M}{\Omega}}, \sqrt{\kappa R\frac{W}{\Omega}}\right\}\cdot \max\left\{\log^{1/2}(WM), \sqrt{\frac{\kappa}{\Omega} }\log^{1/2}\Omega \log (WM)\right\}.\label{eq:CM-concentration-bound}
\end{align}
Using \eqref{eq:CM-Wk-Fro-norm}, we can select $\O \geq c\beta\k R\max(W,\mu_{0}^2 M)\log^2(MW)$ with appropriate constant $c$ to ensure the desired bound. The result holds with probability $1-O(\k(MW)^{-\beta})$, which follows by using the value of $t$ specified above and then by the union bound over $\k$ independent partitions. 

\subsection{Proof of Lemma \ref{lem:CM-coherence-iterates}}
Let $\w_k$ be as defined in \eqref{eq:CM-W-iterate}, and $\vm{e}_m$ be the length-$M$ standard basis vector with $1$ in the $m$th location. The coherence in \eqref{eq:CM-Wk-coherence} can equivalently be written using trace inner product as 
\begin{align}\label{eq:CM-coher-iter} 
\mu^2_k = \frac{\O}{R}\max_{\o \in \Gamma_k} \sum_{m = 1}^M \<\w_{k},\vm{e}_m\ff_{\o}^*\>^2,
\end{align}
which using iterate relation in \eqref{eq:CM-W-iterate} gives  
\begin{align*}
\mu_k^2 &= \frac{\O}{R}\max_{1 \leq \o \leq\O} \sum_{m = 1}^M \<(\k\PT\cA_k^*\cA_k\PT-\PT)\w_{k-1},\vm{e}_m\ff_{\o}^*\>^2.
\end{align*}
In the rest of the proof, we will be concerned with bounding the summands 
\[
\<(\k\PT\cA_k^*\cA_k\PT-\PT)\w_{k-1},\vm{e}_m\ff_{\o}^*\>,
\]
which can be expanded as 
\begin{align*} 
&\<(\k\PT\cA_k^*\cA_k\PT-\PT)\w_{k-1},\vm{e}_m\ff_{\o}^*\> = \sum_{\op \in \Gamma_k} \k\<\PT(\A_{\op}),\vm{e}_m\ff_{\o}^* \>\<\w_{k-1},\A_{\op}\>-\< \w_{k-1},\vm{e}_m\ff_{\o}^*\>\notag\\
&\quad\quad\quad= \sum_{\op \in \Gamma_k} \k\<\PT(\A_{\op}),\vm{e}_m\ff_{\o}^* \>\<\w_{k-1},\A_{\op}\>-\E\k \<\PT(\A_{\op}),\vm{e}_m\ff_{\o}^* \>\<\w_{k-1},\A_{\op}\>.
\end{align*}
To control the deviation of the above sum, we will use the scalar Bernstein inequality. Let 
\[
Z_{\op} = \k(\<\PT(\A_{\op}),\vm{e}_m\ff_{\o}^* \>\<\w_{k-1},\A_{\op}\>-\E \<\PT(\A_{\op}),\vm{e}_m\ff_{\o}^*\>\<\w_{k-1},\A_{\op}\>).
\]
The variance $\sum_{{\op} \in \Gamma_k} \E Z_\op Z_\op^*$ is upper bounded by 
\begin{align}
\sum_{\op \in \Gamma_k} \E Z_\op Z_\op^* & \leq \k^2\sum_{\op \in \Gamma_k} \E\<\PT(\A_{\op}),\vm{e}_m\ff_{\o}^*\>\<\PT(\A_{\op}),\vm{e}_m\ff_{\o}^*\>^* \<\w_{k-1},\A_{\op}\>\<\w_{k-1},\A_{\op}\>^*\notag \\
& = \k^2\sum_{\op \in \Gamma_k} \E|\<\PT(\A_{\op}),\vm{e}_m\ff_{\o}^*\>|^2 |\<\w_{k-1},\A_{\op}\>|^2\label{eq:CM-var-upperbound}
\end{align}
Let $\bar{\uu}_m^*$ denote the $m$th row of the matrix $\u$. The term $\< \PT \A_{\op},\vm{e}_m\ff_{\o}^*\>$ can be expanded using \eqref{eq:CM-PT-def} as follows:
\begin{align}
\< \PT(\A_{\op}),\vm{e}_m\ff_{\o}^*\> &= \tr{\PT (\A_{\op})\ff_{\o}\vm{e}_m^*}\notag\\
& = \< \u\u^*\vm{d}_{\op}\vm{f}_{\op}^*,\vm{e}_m\ff_{\o}^*\> + \< \vm{d}_{\op}\vm{f}_{\op}^*\v\v^*,\vm{e}_m\ff_{\o}^*\> - \< \u\u^*\vm{d}_{\op}\vm{f}_{\op}^*\v\v^*, \vm{e}_m\ff_{\o}^*\>\notag\\
 &= \<\bar{\uu}_m,\u^*\vm{d}_{\op}\>(\vm{f}_{\op}^*\vm{f}_{\o})+ \<\v^*\ff_{\op},\v^*\vm{f}_{\o}\>d_{\op}[m] - \<\bar{\uu}_m,\u^*\d_{\op}\>\<\v^*\ff_\op,\v^*\ff_\o\>\label{eq:CM-var-quant}
\end{align}
Let $Y_1 = \<\bar{\uu}_m,\u^*\vm{d}_{\op}\>(\vm{f}_{\op}^*\vm{f}_{\o})$, $Y_2 = \<\v^*\ff_{\op},\v^*\vm{f}_{\o}\>d_{\op}[m]$, and $Y_3 = \<\bar{\uu}_m,\u^*\d_{\op}\>\<\v^*\ff_\op,\v^*\ff_\o\>$. Using this notation and combining \eqref{eq:CM-var-upperbound}, \eqref{eq:CM-var-quant}, and expanding the square, it is clear that 
\begin{align}\label{eq:CM-var-interm}
\sum_{\op \in \Gamma_k}\E Z_\op Z_\op^* &\leq \k^2\E\sum_{\op \in \Gamma_k} 3(|Y_1|^2+|Y_2|^2+|Y_3|^2)|\<\w_{k-1},\A_{\op}\>|^2.
\end{align}
Therefore, the term required to calculate the variance are the following: first,
\begin{align*}
\sum_{\op \in \Gamma_k} \E |Y_1|^2|\<\w_{k-1},\A_{\op}\>|^2
 &\leq \bar{\uu}_m^*\u^*\max_{\op}\E( \<\w_{k-1},\A_{\op}\>^2\d_{\op}\d_{\op}^*)\u\bar{\uu}_m\cdot\vm{f}_{\o}^*\sum_{\op \in \Gamma_p}(\ff_\op\ff_{\op}^*)\vm{f}_{\o},
\end{align*}
and the result of Lemma \ref{lem:CM-Supporting1} shows that
 
\begin{align*}
\E( |\<\w_{k-1},\A_{\op}\>|^2\d_{\op}\d_{\op}^*) \preccurlyeq 3\|\w_{k-1}\ff_{\op}\|_2^2 \vm{I}_M.
\end{align*}
Thus,
\begin{align}
\sum_{\op \in \Gamma_k} \E |Y_1|^2|\<\w_{k-1},\A_{\op}\>|^2
  & \leq 3\bar{\uu}_m^*\u^*\u \bar{\uu}_m \left\|\w_{k-1}\ff_{\op}\right\|_2^2 \cdot \frac{1}{\k}\|\ff_\o\|_2^2 \leq 3\|\bar{\uu}_m\|_2^2 \mu_{k-1}^2 \frac{WR}{\k\O^2};\label{eq:CM-subterm1}
\end{align}
second, 
\begin{align*}
\sum_{\op \in \Gamma_k} |Y_2|^2 & =   \vm{f}_{\o}^*\v\v^*\sum_{\op \in \Gamma_k}(\vm{f}_{\op}\vm{f}_{\op}^*)\v\v^*\vm{f}_{\o} = \frac{1}{\k}\|\v^*\vm{f}_{\o}\|_2^2 \leq \mu_0^2\frac{R}{\k\O};
\end{align*}
and hence 
\begin{align}
\E \sum_{\op \in \Gamma_k} |Y_2|^2 |\<\w_{k-1},\A_{\op}\>|^2 &\leq \max_{\op}\E |\<\w_{k-1},\A_{\op}\>|^2 \cdot \sum_{\op \in \Gamma_k} |Y_2|^2\leq \mu_0^2\mu_{k-1}^2 \frac{R^2}{\k\O^2};\label{eq:CM-subterm2}
\end{align}
third, since $|Y_3|^2 = |Y_1|^2|Y_2|^2/|\vm{f}_{\op}^*\vm{f}_{\o}|^2$, we can combine the first two terms to obtain
\begin{align}
\E \sum_{\op \in \Gamma_k} |Y_3|^2 |\<\w_{k-1},\A_{\op}\>|^2 \leq 3\|\bar{\uu}_m\|_2^2 \mu_0^2\mu_{k-1}^2 \frac{R^2}{\k\O^2}.\label{eq:CM-subterm3}
\end{align}
Plugging \eqref{eq:CM-subterm1},\eqref{eq:CM-subterm2}, and \eqref{eq:CM-subterm3} in \eqref{eq:CM-var-interm}, 
\begin{align*}
\sigma_Z^2 &= \sum_{\op \in \Gamma_k} \E Z_\op Z_\op^* \leq  3\k \left(\mu_0^2\mu_{k-1}^2 \frac{R^2}{\O^2} +3\left\|\bar{\uu}_m\right\|_2^2\mu_{k-1}^2\frac{WR}{\O^2}+ 3\left\|\bar{\uu}_m\right\|_2^2\mu_0^2\mu_{k-1}^2 \frac{R^2}{\O^2}\right)\\
&= 3\k \left(4\mu_0^2\mu_{k-1}^2 \frac{R^2}{\O^2} +3\left\|\bar{\uu}_m\right\|_2^2\mu_{k-1}^2\frac{WR}{\O^2}\right),
\end{align*}
where the last inequality follows by using the fact that $\|\bar{\uu}_m\|_2 \leq 1$.  Using $t = \beta\log(MW)$, we obtain the first quantity in the maximum in \eqref{eq:matbernpsi}
\begin{align}\label{eq:CM-variance-Zv}
\sigma_Z^2\beta\log(MW) \leq 3\k \left(4\mu_0^2\mu_{k-1}^2 \frac{R^2}{\O^2} +3\left\|\bar{\uu}_m\right\|_2^2\mu_{k-1}^2\frac{WR}{\O^2}\right)\beta\log(MW).
\end{align}
Now, we will show that the variable
\[ 
Z_{\op} = (Y_1 + Y_2 - Y_3)\<\w_{k-1},\A_{\op}\>
\]
is a subexponential random variable. It is easy to show that 
\[
\|Y_1\|_{\psi_2}^2 \leq c\|\bar{\uu}_m\|_2^2(\ff_{\op}^*\ff_{\o})^2 \leq c\|\bar{\uu}_m\|_2^2 \frac{W^2}{\O^2}, 
\]
\[
\|Y_2\|_{\psi_2}^2 \leq c\<\v^*\ff_{\op},\v^*\vm{f}_{\o}\>^2 \leq c\mu_0^4 \frac{R^2}{\O^2},
\]
and 
\[
\|Y_3\|_{\psi_2}^2 \leq  c\|\bar{\uu}_m\|_2^2\<\v^*\ff_{\op},\v^*\vm{f}_{\o}\>^2 \leq c\|\bar{\uu}_m\|_2^2\mu_0^4 \frac{R^2}{\O^2}.
\]
Then the fact $\|Y_1 + Y_2 -Y_3 \|_{\psi_2} \leq \|Y_1\|_{\psi_2} +\|Y_2\|_{\psi_2}+ \|Y_3\|_{\psi_2}$ implies that the sum $Y_1+Y_2-Y_3$ is also a subgaussian. Using another standard calculation, it can be shown that 
\[
\|\<\w_{k-1},\A_{\op}\>\|_{\psi_2}^2 \leq c\|W_{k-1}\ff_{\op}\|_2^2 \leq c\mu_{k-1}^2\frac{R}{\O}.
\]
It is shown in Lemma \ref{lem:CM-Orlicz-1and2} that product $X$ of two subgaussian random variables $X_1$, and $X_2$ is subexponential and $\|X\|_{\psi_1} \leq c\|X_1\|_{\psi_2}\|X_2\|_{\psi_2}$. This fact now implies that $Z_{\op}$ is a subexponential random variable with Orlicz-1 norm 
\begin{align*}
\left\|Z_{\op}\right\|_{\psi_1}^2 &\leq \k^2\mu_0^4\mu_{k-1}^2\frac{R^3}{\O^3}+ 3\k^2\mu_{k-1}^2\frac{W^2R}{\O^3}\|\bar{\vm{u}}_m\|_2^2+ 3\|\bar{\vm{u}}_m\|_2^2\k^2\mu_0^4\mu_{k-1}^2\frac{R^3}{\O^3}\\
& \leq 4\k^2\mu_0^4\mu_{k-1}^2\frac{R^3}{\O^3}+3\k^2\mu_{k-1}^2\frac{W^2R}{\O^3}\|\bar{\vm{u}}_m\|_2^2,
\end{align*}
where the last inequality follows from $\|\bar{\vm{u}}_m\|_2^2 \leq 1$.
Choosing $t = \beta\log(MW)$, as before, gives the second quantity in the maximum in \eqref{eq:matbernpsi}
\begin{equation}\label{eq:CM-orlicz-norm-Zv}
U_1^2\log^2 \left(|\Gamma_k|\frac{U_1^2}{\sigma_Z^2}\right)\beta^2\log^2(MW) \leq \k^2\mu_{k-1}^2\frac{4\mu_0^4R^3+3\|\bar{\vm{u}}_m\|_2^2W^2R}{\O^3}\beta^2\log^4(MW).
\end{equation}
Using Bernstein bound, it follows that $|\<\w_k,\vm{e}_m\ff_\o^*\>|$ is dominated by the maximum of \eqref{eq:CM-variance-Zv}, and \eqref{eq:CM-orlicz-norm-Zv} with probability at least $1-(MW)^{-\beta}$. Using this bound in \eqref{eq:CM-coher-iter}, and using the fact that $\sum_{m = 1}^M \|\bar{\uu}_m\|_2^2 = R$, we obtain the following bound on $\mu_k^2$ with probability (using the union bound) at least $1-O(|\Gamma_k|(MW)^{-\beta})$ 
\[
\mu_{k}^2 \leq c\mu_{k-1}^2\max\left\{3\k \frac{4\mu_0^2MR +3\mu_{k-1}^2 WR}{\O}\beta\log(MW),\k^2\frac{4\mu_0^4MR^2+3W^2R}{\O^2}\beta^2\log^4(MW)\right\}.
\]
Now taking $\O \geq c\beta\k R(\mu_0^2M+W)\log^2(MW)$ gives us the desired bound on the coherence $\mu_{k}^2$ for a fixed value of $k$ with probability $1-O(|\Gamma_k|(MW)^{-\beta})$. Using union bound over $\k$ independent partitions, the failure probability becomes $1-O(\O(MW)^{-\beta})$.

\begin{lem}\label{lem:CM-Supporting1}
Let $\vm{d}_{\o} \in \{-1,1\}^M$ denote the binary length-$M$ random vectors as defined in \eqref{eq:CM-Aomega-def}. Then 
\[
\E |\<\vm{C},\A_{\o}\>|^2 \vm{d}_{\o}\vm{d}_{\o}^*  \preccurlyeq 3\|\vm{C}\ff_{\o}\|_2^2 \vm{I}_M
\]
\end{lem}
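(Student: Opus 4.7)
The plan is a direct fourth-moment computation, after reducing the quadratic form to a single vector. First I would observe that since $\vm{A}_\o = \vm{d}_\o\ff_\o^*$ and $\vm{d}_\o$ is real, the inner product is
\[
\langle \vm{C},\vm{A}_\o\rangle = \operatorname{tr}(\vm{C}\ff_\o\vm{d}_\o^*) = \vm{d}_\o^{\T}\vm{g},\qquad \vm{g} := \vm{C}\ff_\o\in\comps^{M},
\]
so that $|\langle\vm{C},\vm{A}_\o\rangle|^2\vm{d}_\o\vm{d}_\o^* = (\vm{d}_\o^{\T}\vm{g})(\vm{g}^*\vm{d}_\o)\,\vm{d}_\o\vm{d}_\o^*$. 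All the randomness is now carried by $\vm{d}_\o$, whose entries are iid Rademacher.

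Next I would take expectations entry by entry. The $(k,l)$ entry of the left-hand side is $\sum_{i,j}g_i\overline{g_j}\,\E\bigl[d_\o[i]d_\o[j]d_\o[k]d_\o[l]\bigr]$. The key combinatorial identity for independent $\pm 1$ variables is
\[
\E\bigl[d_\o[i]d_\o[j]d_\o[k]d_\o[l]\bigr] = \delta_{ij}\delta_{kl}+\delta_{ik}\delta_{jl}+\delta_{il}\delta_{jk}-2\delta_{i=j=k=l},
\]
where the last term removes the double over-count when all four indices coincide. Summing gives, in matrix form,
\[
\E\bigl[|\langle\vm{C},\vm{A}_\o\rangle|^2\vm{d}_\o\vm{d}_\o^*\bigr] \;=\; \|\vm{g}\|_2^2\,\vm{I}_M \;+\; 2\operatorname{Re}(\vm{g}\vm{g}^*) \;-\; 2\operatorname{diag}\bigl(|g_1|^2,\ldots,|g_M|^2\bigr).
\]
Writing $\vm{g}=\vm{a}+\j\vm{b}$ with $\vm{a},\vm{b}\in\reals^M$ shows that $2\operatorname{Re}(\vm{g}\vm{g}^*)=2(\vm{a}\vm{a}^{\T}+\vm{b}\vm{b}^{\T})$ is a real PSD matrix of operator norm at most $2(\|\vm{a}\|_2^2+\|\vm{b}\|_2^2)=2\|\vm{g}\|_2^2$.

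Finally, to get the claimed bound I would subtract the right-hand side and verify positivity:
\[
3\|\vm{g}\|_2^2\vm{I}_M - \E\bigl[|\langle\vm{C},\vm{A}_\o\rangle|^2\vm{d}_\o\vm{d}_\o^*\bigr] \;=\; \bigl(2\|\vm{g}\|_2^2\vm{I}_M - 2\operatorname{Re}(\vm{g}\vm{g}^*)\bigr) \;+\; 2\operatorname{diag}\bigl(|g_k|^2\bigr).
\]
Both summands are PSD (the first by the spectral-norm bound just derived, the second trivially), so their sum is PSD, which is precisely $\E[\,|\langle\vm{C},\vm{A}_\o\rangle|^2\vm{d}_\o\vm{d}_\o^*\,]\preccurlyeq 3\|\vm{C}\ff_\o\|_2^2\vm{I}_M$. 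There is no real obstacle here; the only point requiring care is the $-2\delta_{i=j=k=l}$ correction in the fourth-moment identity and the observation that $2\operatorname{Re}(\vm{g}\vm{g}^*)$ (which is not itself a rank-one Hermitian matrix in the complex sense) is genuinely PSD when unpacked into real and imaginary parts.
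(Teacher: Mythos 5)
Your proposal is correct and follows essentially the same route as the paper: both reduce to $\vm{g}=\vm{C}\ff_\o$, evaluate the Rademacher fourth moment entrywise to get $\|\vm{g}\|_2^2\vm{I}_M$ plus an outer-product term minus twice a diagonal correction, and then conclude via $\vm{x}\vm{x}^*\preccurlyeq\|\vm{x}\|_2^2\vm{I}$ and the positive semidefiniteness of the discarded diagonal. If anything, your version is slightly more careful than the paper's, which writes the cross term as $2\vm{C}\ff_\o\ff_\o^*\vm{C}^*$ where the exact expectation is the real part $2\operatorname{Re}(\vm{g}\vm{g}^*)$ as you have it; the bound is unaffected.
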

\begin{proof}
Let $\{\vm{c}_m^*\}_{1 \leq m \leq M}$ denote the rows of the matrix $\vm{C}\in \comps^{M\times W}$, $\{\x\}_{(\alpha,\beta)}$ denote the $(\alpha,\beta)$th entry of $\x$, and $\vm{A}_{\o}$ as defined in \eqref{eq:CM-Aomega-def}. Then we can write
\begin{align*} 
\{\E( |\<\vm{C},\A_{\o}\>|^2\d_{\o}\d_{\o}^*)\}_{(\alpha,\beta)} &= \E \left|\sum_{m = 1}^M d_{\o}[m] \vm{c}_m^*\ff_{\o}\right|^2\{\d_{\o}\d_{\o}^*\}_{(\alpha,\beta)}\\
&= \sum_{m = 1}^M  |\vm{c}_{\alpha}^*\ff_{\o}|^2\delta_{\alpha = \beta}  + 2\<\vm{c}_{\alpha}^*\ff_{\o},\vm{c}_{\beta}^*\ff_{\o}\>\delta_{\alpha \neq \beta},
\end{align*}
where $\delta_{\alpha = \beta}$ is $1$ when $\alpha = \beta$ and is $0$ otherwise. Similarly  $\delta_{\alpha \neq \beta}$ is $1$ when $\alpha \neq \beta$ and is $0$ otherwise. This implies that 
\begin{align*}
\E( |\<\vm{C},\A_{\o}\>|^2\d_{\o}\d_{\o}^*) &= \left\|\vm{C}\ff_{\o}\right\|_2^2 \vm{I}_M+ 2\vm{C}\ff_{\o}\ff_{\o}^*\vm{C}^* -2 \mbox{diag}(\vm{C}\ff_{\o}\ff_{\o}^*\vm{C}^*)\\
& \prec \left\|\vm{C}\ff_{\o}\right\|_2^2 \vm{I}_M+ 2\vm{C}\ff_{\o}\ff_{\o}^*\vm{C}^* \preccurlyeq 3\|\vm{C}\ff_{\o}\|_2^2 \vm{I}_M
\end{align*}
where the first inequality follows from the fact that $\mbox{diag}(\vm{C}\ff_{\o}\ff_{\o}^*\vm{C}^*)$ is a positive-semidefinite matrix, and the last inequality is valid because for a vector $\vm{x}$, we have $\left\|\vm{x}\right\|_2^2\vm{I} \succcurlyeq \vm{x}\vm{x}^*$.
\end{proof}
\begin{lem}[Lemma 5.9 in \cite{vershynin10in}]\label{lem:CM-Orlicz-independence}
Consider a finite number $Q$ of independent subgaussian random variable $X_q$. Then, 
\[
\left\|\sum_{q = 1}^Q X_q\right\|_{\psi_2}^2 \leq c\sum_{q =1 }^Q \|X_q\|_{\psi_2}^2,
\]
where $c$ is an absolute constant.
\end{lem}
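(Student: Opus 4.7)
The plan is to reduce the statement to a moment generating function (MGF) calculation using the well-known equivalence between the $\psi_2$ Orlicz norm and exponential MGF bounds. Throughout, I will assume the $X_q$ are mean zero; this is the natural regime since otherwise a sum of $Q$ independent variables with a common nonzero mean yields a $\psi_2$-norm of order $Q$ rather than $\sqrt{Q}$, and a quick preliminary step can absorb the deterministic means into the $\psi_2$ constants.

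The first step is to establish (or invoke) the equivalence: there exist absolute constants $c_1, c_2 > 0$ such that for any centered random variable $X$,
\begin{equation*}
\|X\|_{\psi_2} \leq K \quad \Longleftrightarrow \quad \E \exp(\lambda X) \leq \exp(c_1 \lambda^2 K^2)\ \text{for all } \lambda \in \R,
\end{equation*}
with the reverse implication yielding $\|X\|_{\psi_2} \leq c_2 K$. The forward direction comes from Taylor expanding $e^{\lambda X}$, bounding the moments via $\E|X|^p \leq p!\, K^p$ (a consequence of the $\psi_2$ definition and Markov's inequality on $\E\exp(X^2/K^2) \leq 2$), and resumming. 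The reverse direction is Chernoff plus an integration-of-tails argument. This equivalence is the technical workhorse and is, in my view, the main obstacle to writing a fully self-contained proof, since it forces one to track absolute constants through three equivalent formulations (tail, moment, MGF) of subgaussianity.

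With the MGF characterization in hand, the rest is a one-line computation exploiting independence:
\begin{equation*}
\E \exp\!\left(\lambda \sum_{q=1}^Q X_q\right) = \prod_{q=1}^Q \E \exp(\lambda X_q) \leq \prod_{q=1}^Q \exp\!\bigl(c_1 \lambda^2 \|X_q\|_{\psi_2}^2\bigr) = \exp\!\left(c_1 \lambda^2 \sum_{q=1}^Q \|X_q\|_{\psi_2}^2\right).
\end{equation*}
Applying the reverse direction of the equivalence to the random variable $S = \sum_q X_q$, we conclude
\begin{equation*}
\|S\|_{\psi_2}^2 \leq c_2^2 \sum_{q=1}^Q \|X_q\|_{\psi_2}^2,
\end{equation*}
which is the claim with $c = c_2^2$.

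If one wants to handle the non-centered case explicitly, the final cleanup step is to write $X_q = (X_q - \E X_q) + \E X_q$, observe that $|\E X_q| \leq c_3 \|X_q\|_{\psi_2}$ (since the $\psi_2$ norm controls the first moment), and note that the centered pieces have $\psi_2$ norms comparable to the original $\|X_q\|_{\psi_2}$. In the applications of this lemma within the paper (see the bounds on $\|\langle \w_{k-1}, \A_\o\rangle\|_{\psi_2}$ and on $\|Y\|_{\psi_2}$ in the proofs of Lemmas \ref{lem:CM-concentration} and \ref{lem:CM-coherence-iterates}), the summands $d_\o[m]\, U[m,r]$ and $d_\o[m]\, \vm{w}_m^* \ff_\o$ are already centered by virtue of the Rademacher multiplier $d_\o[m]$, so the centered version of the lemma suffices.
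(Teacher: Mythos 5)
Your proof is correct and follows essentially the same route as the paper, which offers no proof of its own but cites Lemma 5.9 of Vershynin, whose argument is precisely your MGF factorization over independent summands combined with the standard equivalence of subgaussian characterizations (one small imprecision in your sketch of that equivalence: the moment bound $\E|X|^p \leq p!\,K^p$ only yields the MGF bound for $|\lambda| \lesssim 1/K$ after resumming, and one needs the sharper $(\E|X|^p)^{1/p} \leq cK\sqrt{p}$ or a Young's-inequality step $\lambda X \leq \lambda^2K^2/2 + X^2/(2K^2)$ to cover all $\lambda \in \R$ — immaterial here since you invoke the equivalence as known). You also correctly flag that Vershynin's lemma assumes centered summands, a hypothesis the paper's restatement silently drops, and your observation that the Rademacher multipliers $d_\o[m]$ center the summands in both of the paper's applications is exactly why that omission is harmless.
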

\begin{lem}[Lemma 5.14 in \cite{vershynin10in}]\label{lem:CM-Orlicz-square} A random variable $X$ is subgaussian iff $X^2$ is subexponential. Furthermore, 
\[
\|X\|_{\psi_2}^2 \leq \|X^2\|_{\psi_1} \leq 2\|X\|_{\psi_2}^2.
\]
\end{lem}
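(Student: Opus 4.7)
The plan is to derive both bounds and the stated equivalence directly from the Orlicz-norm definition \eqref{eq:matpsinorm}. Specializing to scalars,
\[
\|X\|_{\psi_2} = \inf\{u>0 : \E \exp(X^2/u^2) \le 2\}, \qquad \|X^2\|_{\psi_1} = \inf\{u>0 : \E \exp(X^2/u) \le 2\},
\]
where I used $|X^2|=X^2$ in the second expression. The key observation is that these two infima range over essentially the same family of exponential-moment inequalities, tied together by the substitution $u \leftrightarrow u^2$, so both directions should reduce to a change of variables.

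For the lower bound $\|X\|_{\psi_2}^2 \le \|X^2\|_{\psi_1}$, I would take any $u > \|X^2\|_{\psi_1}$, which guarantees $\E \exp(X^2/u) \le 2$. Substituting $v=\sqrt{u}$ gives $\E \exp(X^2/v^2) \le 2$, whence by the $\psi_2$-definition $\|X\|_{\psi_2} \le \sqrt{u}$; squaring and letting $u \downarrow \|X^2\|_{\psi_1}$ proves the inequality. The same argument shows $\|X\|_{\psi_2} < \infty$ whenever $\|X^2\|_{\psi_1} < \infty$, settling one direction of the ``iff'': $X^2$ subexponential implies $X$ subgaussian.

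The upper bound $\|X^2\|_{\psi_1} \le 2\|X\|_{\psi_2}^2$ is handled by the symmetric substitution. Take $u > \|X\|_{\psi_2}$, so that $\E \exp(X^2/u^2) \le 2$; setting $s = u^2$ yields $\E \exp(X^2/s) \le 2$, hence $\|X^2\|_{\psi_1} \le u^2$, and passing to the infimum gives in fact $\|X^2\|_{\psi_1} \le \|X\|_{\psi_2}^2$ — sharper than the stated bound. The factor of $2$ in the lemma is built-in slack that accommodates the several equivalent Orlicz-norm definitions (e.g.\ via moment growth $\|X\|_{L^p}/\sqrt{p}$) that are known to be mutually comparable up to a universal constant; this is worth noting because other references may define the norm with a slightly different constant. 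The reverse implication of the ``iff'' is then automatic.

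I do not expect any genuine obstacle here — the proof is a bookkeeping exercise around the definitions. The only care needed is when the defining infimum is not attained, in which case one uses an approximating sequence $u_n$ decreasing to the infimum together with monotone convergence of $\E \exp(X^2/u_n)$, or equivalently argues on the open constraint set. The substantive content is the simple observation that writing both norms in the form $\inf\{u: \E \exp(X^2/\text{(something)}) \le 2\}$ makes the equivalence ``subgaussian $\Leftrightarrow$ squared subexponential'' transparent.
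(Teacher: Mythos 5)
Your proof is correct, and it is genuinely different from what the paper does: the paper offers no argument at all for this lemma, importing it verbatim as Lemma 5.14 of \cite{vershynin10in}, where it is proved for the \emph{moment-based} Orlicz norms $\|X\|_{\psi_2}=\sup_{p\geq 1}p^{-1/2}(\E|X|^p)^{1/p}$ and $\|X\|_{\psi_1}=\sup_{p\geq 1}p^{-1}(\E|X|^p)^{1/p}$; there the proof is the substitution $p\mapsto 2p$ inside the supremum, $\|X^2\|_{\psi_1}=\sup_{p\geq 1}p^{-1}(\E|X|^{2p})^{1/p}=2\sup_{q\geq 2}q^{-1}(\E|X|^q)^{2/q}$, and the factor $2$ is genuinely produced by that change of variables rather than being slack. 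You instead work directly with the definition \eqref{eq:matpsinorm} actually adopted in this paper, under which the two constraint sets $\{u>0:\E\exp(X^2/u^2)\leq 2\}$ and $\{u>0:\E\exp(X^2/u)\leq 2\}$ are carried onto each other by $u\leftrightarrow u^2$; since both sets are upward closed, any $u$ strictly above the infimum is feasible (this also disposes of your non-attainment worry without invoking monotone convergence), and you correctly obtain the exact identity $\|X^2\|_{\psi_1}=\|X\|_{\psi_2}^2$, of which the lemma's two-sided bound is a weakening. Your observation that the constant $2$ exists to absorb the discrepancy between equivalent Orlicz-norm definitions is precisely the right reconciliation with the cited source: under \eqref{eq:matpsinorm} the sharper equality holds, while under Vershynin's definitions the factor $2$ is needed, and the two norms agree up to universal constants, so every downstream use in Lemmas \ref{lem:CM-injectivity}--\ref{lem:CM-coherence-iterates} (which only track constants $c$) is unaffected. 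What your route buys is a self-contained, definition-faithful proof; what the citation buys is portability across the various $\psi_\alpha$ conventions in the literature.
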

\begin{lem}\label{lem:CM-Orlicz-1and2}
Let $X_1$, and $X_2$ be two subgaussian random variables, i.e., $\|X_1\|_{\psi_2} < \infty$, and $\|X_2\|_{\psi_2} < \infty$. Then the product $X_1X_2$ is a subexponential random variable with 
\[
\|X_1X_2\|_{\psi_1} \leq c\|X_1\|_{\psi_2}\|X_2\|_{\psi_2}.
\]
\end{lem}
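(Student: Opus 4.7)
My plan is to prove the lemma with the explicit constant $c = 1$, via a Young-type inequality followed by Cauchy--Schwarz applied to the Orlicz-2 moment generating bounds for $X_1$ and $X_2$.

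First I would normalize. Set $\sigma_i = \|X_i\|_{\psi_2}$ for $i = 1,2$, which are finite by hypothesis, and let $Y_i = X_i / \sigma_i$. By the definition of the Orlicz-2 norm in \eqref{eq:matpsinorm}, each $Y_i$ satisfies $\E[\exp(Y_i^2)] \leq 2$. The key point of normalizing is that the inequality we want, namely $\|X_1 X_2\|_{\psi_1} \leq \sigma_1 \sigma_2$, is equivalent to proving $\|Y_1 Y_2\|_{\psi_1} \leq 1$, i.e., $\E[\exp(|Y_1 Y_2|)] \leq 2$. This reduces the problem to a clean statement about two unit-Orlicz-2 random variables.

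Next I would apply the elementary AM--GM inequality $|ab| \leq \tfrac{1}{2}(a^2 + b^2)$ with $a = Y_1$, $b = Y_2$, giving $|Y_1 Y_2| \leq \tfrac{1}{2}(Y_1^2 + Y_2^2)$. Exponentiating pointwise and taking expectations yields
\[
\E\bigl[\exp(|Y_1 Y_2|)\bigr] \;\leq\; \E\bigl[\exp(Y_1^2/2)\exp(Y_2^2/2)\bigr].
\]
Now I would apply the Cauchy--Schwarz inequality to the product on the right, which splits the expectation into the product $\sqrt{\E[\exp(Y_1^2)]}\cdot\sqrt{\E[\exp(Y_2^2)]}$. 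Each factor is at most $\sqrt{2}$ by the normalization, so the product is bounded by $2$, which is exactly the condition defining $\|Y_1 Y_2\|_{\psi_1} \leq 1$. Substituting back the definition of $Y_i$ recovers the desired bound $\|X_1 X_2\|_{\psi_1} \leq \sigma_1 \sigma_2 = \|X_1\|_{\psi_2}\|X_2\|_{\psi_2}$, with absolute constant $c = 1$.

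I do not expect any serious obstacle here: the argument is a textbook three-line chain (normalize, AM--GM in the exponent, Cauchy--Schwarz), and both ingredients --- finiteness of the Orlicz-2 moment generating function and the quadratic AM--GM inequality --- are elementary. The only minor care point is keeping the powers of $2$ matched in the Cauchy--Schwarz step (the factor $\tfrac{1}{2}$ in the exponent is exactly what allows the Cauchy--Schwarz bound to recover $\E[\exp(Y_i^2)]$ inside the square roots), and ensuring the Orlicz-1 defining threshold of $2$ is not exceeded. No union bounds, no truncation, no chaining arguments are needed.
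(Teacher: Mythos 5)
Your proof is correct, and it takes a genuinely different route from the paper. The paper argues through tail probabilities: it starts from the subgaussian tail bound $\P{|X| > t} \leq \er\cdot\exp(-ct^2/\|X\|_{\psi_2}^2)$, splits the event $\{|X_1X_2| > \lambda\}$ via the union bound $\P{|X_1X_2| > \lambda} \leq \P{|X_1| > t} + \P{|X_2| > \lambda/t}$, optimizes $t^2 = \lambda\|X_1\|_{\psi_2}/\|X_2\|_{\psi_2}$ to get a subexponential tail, and then invokes Lemma 2.2.1 of van der Vaart and Wellner to convert the tail decay back into an Orlicz-1 norm bound. Your argument instead stays entirely at the level of the defining expectation: the AM--GM bound $|Y_1Y_2| \leq \tfrac{1}{2}(Y_1^2+Y_2^2)$ in the exponent followed by Cauchy--Schwarz gives $\E[\exp(|Y_1Y_2|)] \leq \sqrt{\E[\exp(Y_1^2)]}\sqrt{\E[\exp(Y_2^2)]} \leq 2$ directly. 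What your route buys is the explicit sharp constant $c=1$, avoiding the two lossy conversions (norm to tail, tail back to norm) in the paper's chain; what the paper's route buys is modularity, since the tail-bound lemmas it cites are reused elsewhere. Note that neither proof requires independence of $X_1$ and $X_2$, which matters since the lemma is applied in the proof of Lemma~\ref{lem:CM-coherence-iterates} to dependent factors. One cosmetic point in your write-up: the infimum in the $\psi_2$ norm need not obviously be attained, so the claim $\E[\exp(Y_i^2)] \leq 2$ deserves a word --- either invoke monotone convergence as $u \downarrow \sigma_i$, or simply run your argument with any $u_i > \sigma_i$ and let $u_i \to \sigma_i$ at the end; either fix is one line and the proof stands.
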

\begin{proof}
For a subgaussian random variable, the tail behavior is
\[
\P{|X| > t} \leq \er\cdot\exp\left(\frac{-ct^2}{\|X\|_{\psi_2}^2}\right)\quad \forall t >0;
\]
see, for example, \cite{vershynin10in}. We are interested in 
\begin{align*}
\P{|X_1X_2| > \lambda} &\leq \P{|X_1| > t}+\P{|X_2| > \lambda/t}\\
& \leq \er\cdot\exp\left(-ct^2/\|X_1\|_{\psi_2}^2\right)+\er\cdot\exp\left(-c\lambda^2/t^2\|X_2\|_{\psi_2}^2\right).
\end{align*}
Select $t^2 = \lambda \|X_1\|_{\psi_2}/\|X_2\|_{\psi_2}$, which gives 
\[
\P{|X_1X_2| > \lambda} \leq 2\er\cdot\exp\left(-c\lambda/\|X_1\|_{\psi_2}\|X_2\|_{\psi_2}\right).
\]
Now Lemma 2.2.1 in \cite{vandervaart96we} implies that if a random variable $Z$ obeys $\P{|Z| > u} \leq \alpha \er^{-\beta u}$, then $\|Z\|_{\psi_1} \leq (1+\alpha)/\beta$. Using this result, we obtain 
\[
\|X_1X_2\|_{\psi_1} \leq c\|X_1\|_{\psi_2}\|X_2\|_{\psi_2},
\]
which proves the result. 
\end{proof}

\section{Proof of Theorem \ref{thm:CM-stablerec-CM1}: Stability of the M-Mux}
\label{sec:CM-Theory2}

Given the contaminated measurements, as in \eqref{eq:CM-noisy-meas}, and the linear operator $\cA^*$, which is the adjoint $\cA$, defined in \eqref{eq:CM-M-Mux-meas}, we have 
\begin{align}
\|\cA^*(\vm{y})-\E\cA^*(\vm{y})\| &\leq \|(\cA^*\cA-\mathcal{I})(\vm{C}_0)\|+\|\cA^*(\vm{\xi})\|\notag\\
&= \theta_1 + \theta_2\label{eq:CM-Lambda-bound}
\end{align}
The result of Theorem \ref{thm:CM-stablerec-CM1} can be considered as the corollary of the following result in \cite{koltchinskii10nu}. 
\begin{thm}\label{thm:CM-KLTthm}\cite{koltchinskii10nu}
Let $\tilde{\vm{C}} \in \comps^{M \times W}$ be the estimate of rank-$R$ matrix $\vm{C}_0$, defined in \eqref{eq:CM-C0-def}, from the measurements $\vm{y}$ in \eqref{eq:CM-noisy-meas} using the estimator in \eqref{eq:CM-KLT-est}. If $\lambda \geq 2\|\cA^*\vm{y}\|$, then
\begin{equation}\label{eq:CM-KLTbound}
\|\tilde{\vm{C}}-\vm{C}_0\|_{\F}^2 \leq \min \{ 2\lambda\|\vm{C}_0\|_*, 1.5\lambda^2 R\}
\end{equation}
\end{thm}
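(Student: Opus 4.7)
The plan is to verify this oracle inequality (quoted from \cite{koltchinskii10nu}) by the standard proximal-estimator argument. The key reformulation: completing the square in the objective \eqref{eq:CM-KLT-est} and dropping the constant $\|\cA^*\vm{y}\|_{\F}^2$ turns the problem into
\[
\tilde{\vm{C}} \;=\; \argmin_{\vm{C}} \;\|\vm{C}-\cA^*\vm{y}\|_{\F}^2 + \lambda\|\vm{C}\|_*,
\]
so $\tilde{\vm{C}}$ is precisely the singular-value soft-thresholding $S_{\lambda/2}(\cA^*\vm{y})$. Throughout I will read the hypothesis on $\lambda$ as a bound on the centered noise matrix $\cA^*\vm{y}-\vm{C}_0$; this is exactly the quantity split in \eqref{eq:CM-Lambda-bound} and is what Koltchinskii's machinery actually needs for the dualities below to close.

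The first step is the basic inequality obtained by testing optimality of $\tilde{\vm{C}}$ against $\vm{C}_0$: after expanding the squared Frobenius norms, cancelling $\|\cA^*\vm{y}\|_{\F}^2$, and writing $\vm{\Delta}:=\tilde{\vm{C}}-\vm{C}_0$, I get
\[
\|\vm{\Delta}\|_{\F}^2 + \lambda\|\tilde{\vm{C}}\|_* \;\le\; 2\langle \cA^*\vm{y}-\vm{C}_0,\vm{\Delta}\rangle + \lambda\|\vm{C}_0\|_*.
\]
By operator--nuclear duality together with the hypothesis, the inner product is bounded by $\lambda\|\vm{\Delta}\|_*$. Combining with the triangle inequality $\|\vm{\Delta}\|_*\le\|\tilde{\vm{C}}\|_*+\|\vm{C}_0\|_*$ cancels the two $\lambda\|\tilde{\vm{C}}\|_*$ terms and produces the first bound $\|\vm{\Delta}\|_{\F}^2 \le 2\lambda\|\vm{C}_0\|_*$ immediately.

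For the second bound I would refine this by splitting $\vm{\Delta}$ with the tangent-space projections $\PT,\PTc$ from \eqref{eq:CM-PT-def}. The sharp subgradient inequality
\[
\|\vm{C}_0+\vm{\Delta}\|_* \;\ge\; \|\vm{C}_0\|_* + \langle\u\v^*,\vm{\Delta}\rangle + \|\PTc\vm{\Delta}\|_*
\]
converts $\lambda(\|\vm{C}_0\|_*-\|\tilde{\vm{C}}\|_*)$ into $-\lambda\langle\u\v^*,\vm{\Delta}\rangle-\lambda\|\PTc\vm{\Delta}\|_*$, and bounding $\|\vm{\Delta}\|_*\le\|\PT\vm{\Delta}\|_*+\|\PTc\vm{\Delta}\|_*$ inside the inner-product term then cancels the $\|\PTc\vm{\Delta}\|_*$ contribution entirely, leaving
\[
\|\vm{\Delta}\|_{\F}^2 \;\le\; \lambda\bigl(\|\PT\vm{\Delta}\|_* - \langle\u\v^*,\vm{\Delta}\rangle\bigr).
\]
Since $\PT\vm{\Delta}$ has rank at most $2R$ and $\|\u\v^*\|_{\F}=\sqrt{R}$, the right-hand side is at most $\lambda(\sqrt{2R}+\sqrt{R})\|\PT\vm{\Delta}\|_{\F}\le\lambda(\sqrt{2}+1)\sqrt{R}\,\|\vm{\Delta}\|_{\F}$, yielding $\|\vm{\Delta}\|_{\F}^2 \le C\lambda^2 R$.

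The hard part is pinning the constant down to exactly $3/2$: the naive chain just described only gives $(\sqrt{2}+1)^2\approx 5.83$. Closing this gap requires a more careful treatment of the scalar $\|\PT\vm{\Delta}\|_*-\langle\u\v^*,\vm{\Delta}\rangle$ via Von Neumann's trace inequality applied to the SVD of $\PT\vm{\Delta}$ (rather than separately invoking rank and Cauchy--Schwarz), together with an optimization over the $\u\v^*$-aligned versus $\u\v^*$-orthogonal components inside $T$. This is the delicate bookkeeping carried out in \cite{koltchinskii10nu}, which I would import verbatim rather than reproduce in full to obtain the $1.5\lambda^2 R$ side of \eqref{eq:CM-KLTbound}.
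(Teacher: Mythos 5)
First, a point of reference: the paper does not prove this theorem at all --- it is quoted verbatim from \cite{koltchinskii10nu}, so the benchmark for your attempt is Koltchinskii--Lounici--Tsybakov's own argument. Your skeleton matches theirs: the completion-of-the-square identification of $\tilde{\vm{C}}$ as singular-value soft thresholding, the basic inequality, trace duality, and the sharp subgradient inequality with the $\PT/\PTc$ splitting are all exactly the right ingredients. You are also right to read the hypothesis as $\lambda \geq 2\|\cA^*(\vm{y})-\vm{C}_0\|$: since $\E\,\cA^*\cA = \mathcal{I}$ and the noise is centered, $\E\,\cA^*(\vm{y}) = \vm{C}_0$, which is precisely the centered quantity the paper bounds in \eqref{eq:CM-Lambda-bound}; the literal condition $\lambda\geq 2\|\cA^*\vm{y}\|$ would force $\tilde{\vm{C}}=0$ (every singular value of $\cA^*\vm{y}$ falls below the threshold $\lambda/2$) and the claimed bound would then fail for small $\lambda$, so this is indeed a typo in the statement. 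Your derivation of the first branch $\|\vm{\Delta}\|_{\F}^2\leq 2\lambda\|\vm{C}_0\|_*$ is correct.

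The genuine gap is in the second branch, and your diagnosis of how to close it is off target. The factor-of-four loss ($(1+\sqrt{2})^2\approx 5.83$ versus $(1+\sqrt{2})^2/4\approx 1.46\leq 1.5$) does not come from a cruder use of trace inequalities on $\PT\vm{\Delta}$; your rank-$2R$ bound $\|\PT\vm{\Delta}\|_*\leq\sqrt{2R}\,\|\PT\vm{\Delta}\|_{\F}$ and the estimate $|\<\u\v^*,\vm{\Delta}\>|\leq\sqrt{R}\,\|\PT\vm{\Delta}\|_{\F}$ are already the ones KLT use, and no Von Neumann refinement or alignment optimization is needed. What you discarded is the strong convexity of the data-fit term: comparing objective \emph{values} only yields $f(\tilde{\vm{C}})\leq f(\vm{C}_0)$, whereas the objective $\|\vm{C}-\cA^*\vm{y}\|_{\F}^2+\lambda\|\vm{C}\|_*$ is $2$-strongly convex, so its minimizer satisfies $f(\vm{C}_0)\geq f(\tilde{\vm{C}})+\|\vm{\Delta}\|_{\F}^2$. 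Equivalently, use the stationarity condition $2(\tilde{\vm{C}}-\cA^*\vm{y})+\lambda\hat{\vm{V}}=0$ for some $\hat{\vm{V}}\in\partial\|\tilde{\vm{C}}\|_*$ together with monotonicity of the subdifferential, $\<\hat{\vm{V}}-\vm{V},\vm{\Delta}\>\geq 0$. Either route doubles the left-hand side of your final display:
\begin{equation*}
2\|\vm{\Delta}\|_{\F}^2 \;\leq\; \lambda\bigl(\|\PT\vm{\Delta}\|_* - \<\u\v^*,\vm{\Delta}\>\bigr) \;\leq\; \lambda\left(\sqrt{2}+1\right)\sqrt{R}\,\|\vm{\Delta}\|_{\F},
\end{equation*}
whence $\|\vm{\Delta}\|_{\F}\leq \tfrac{1+\sqrt{2}}{2}\lambda\sqrt{R}$ and $\|\vm{\Delta}\|_{\F}^2\leq \tfrac{(1+\sqrt{2})^2}{4}\lambda^2 R\leq 1.5\,\lambda^2 R$, which is exactly the constant in \eqref{eq:CM-KLTbound}. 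So the ``delicate bookkeeping'' you proposed to import verbatim is in fact a one-line substitution in your own argument, and with it your proof is complete and self-contained.
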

To prove Theorem \ref{thm:CM-stablerec-CM1}, we only need to compute a bound on the operator norm in \eqref{eq:CM-Lambda-bound}. The bound on $\theta_1$ in \eqref{eq:CM-Lambda-bound} is provided by the following corollary of Lemma \ref{lem:CM-concentration}. With out loss of generality, we will assume that $\|\vm{C}_0\|_{\F} = 1$.
\begin{cor}\label{cor:CM-concentration}
Let $\mu_{0}^2$, defined in \eqref{eq:CM-coherence}, be the coherence of rank-$R$ matrix $\vm{C}_0$ in \eqref{eq:CM-C0-def}. Then for all $\beta \geq 1$ 
$$\|\cA^*\cA(\vm{C}_0) - \vm{C}_0\|^2 \leq c\sqrt{\frac{\beta \max(\mu_0^2 M,W)\log(MW)}{\O}}\left\|\vm{C}_0\right\|_{\F}$$
with probability at least $1-(MW)^{-\beta}$ provided $\O \geq c\beta\min(\mu_0^2 M,W) \log^2(MW)$. 
\end{cor}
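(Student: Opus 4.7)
The plan is to recognize that Corollary~\ref{cor:CM-concentration} is essentially a specialization of Lemma~\ref{lem:CM-concentration} to the single-partition case $\k=1$, with the iterate $\w_{k-1}$ in the golfing scheme replaced by the target matrix $\vm{C}_0$. In that specialization the operator $\cA_k$ becomes the full sampling operator $\cA$, and the role of the iterate coherence $\mu_{k-1}^2$ is played by $\mu_0^2$. So the task is not to reprove anything from scratch but to inspect the proof of Lemma~\ref{lem:CM-concentration} and track constants.

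The more informative output of that proof is not the final bound $2^{-k-1}$ but the preceding display \eqref{eq:CM-concentration-bound}. First I would invoke the matrix Bernstein inequality (Proposition~\ref{prop:matbernpsi}) exactly as in Lemma~\ref{lem:CM-concentration}: the variance computations \eqref{eq:CM-conc-var1}--\eqref{eq:CM-conc-var2} and the Orlicz-$\psi_2$ bound on each summand $\z_\o$ depend only on the coherence and Frobenius norm of the argument of $\cA^*\cA$, and not on its membership in the subspace $T$ or on its relation to the golfing iterates. Setting $\k = 1$, using $\|\vm{C}_0\|_{\F} = 1$, and taking $t = \beta\log(MW)$ produces
\begin{equation*}
\|\cA^*\cA(\vm{C}_0) - \vm{C}_0\| \;\leq\; c\sqrt{\frac{\beta\log(MW)}{\O}}\,\max\!\left\{\sqrt{\Lambda_{\max}},\;\sqrt{\frac{\beta\,\mu_0^2 MW\log(\Lambda_{\min})\log(MW)}{\O}}\right\},
\end{equation*}
where $\Lambda_{\max} = \max(\mu_0^2 M, W)$ and $\Lambda_{\min} = \min(\mu_0^2 M, W)$. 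This is exactly \eqref{eq:CM-concentration-bound} with $\k = 1$ and $\w_{k-1} \to \vm{C}_0$.

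Next I would show that the hypothesis $\O \geq c\beta\,\Lambda_{\min}\log^2(MW)$ forces the first term in the maximum to dominate. The key algebraic observation is $\Lambda_{\max}\Lambda_{\min} = \mu_0^2 MW$, so the sampling condition yields
\begin{equation*}
\frac{\beta\,\mu_0^2 MW\log(\Lambda_{\min})\log(MW)}{\O} \;=\; \frac{\beta\,\Lambda_{\max}\Lambda_{\min}\log(\Lambda_{\min})\log(MW)}{\O} \;\leq\; \frac{\Lambda_{\max}}{c},
\end{equation*}
for a suitable choice of absolute constant $c$. Thus the maximum is $\sqrt{\Lambda_{\max}}$, and plugging back gives the claimed bound $c\sqrt{\beta\max(\mu_0^2 M, W)\log(MW)/\O}\,\|\vm{C}_0\|_{\F}$. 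The failure probability of Proposition~\ref{prop:matbernpsi} at $t = \beta\log(MW)$ is $(MW)^{-\beta}$; since there is no union bound over partitions (we have $\k=1$ here), this is the probability stated in the corollary.

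There is no real obstacle beyond careful bookkeeping: the one thing I would double-check is that nothing in the derivation of \eqref{eq:CM-concentration-bound} secretly used $\w_{k-1}\in T$. Inspecting the argument, the variance bounds and the Orlicz estimate go through for any matrix with the stated coherence and Frobenius norm, so this specialization is legitimate. The only genuinely substantive step of the proposal is the elementary inequality $\Lambda_{\max}\Lambda_{\min} = \mu_0^2 MW$ used to collapse the second term in the maximum under the stated sampling hypothesis.
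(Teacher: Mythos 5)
Your proposal matches the paper's own proof: the paper likewise obtains the corollary by specializing the bound \eqref{eq:CM-concentration-bound} from Lemma~\ref{lem:CM-concentration} to a single partition ($\k=1$) with $\w_{k-1}$ replaced by $\vm{C}_0$, and observes that the first term in the maximum dominates under the hypothesis $\O \geq c\beta\min(\mu_0^2 M, W)\log^2(MW)$. Your explicit verification that the argument never uses $\w_{k-1}\in T$ and the identity $\Lambda_{\max}\Lambda_{\min} = \mu_0^2 MW$ simply fill in bookkeeping the paper leaves implicit.
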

The proof of the corollary follows from Lemma \ref{lem:CM-concentration}. In particular, the corollary is a direct result of the bound \eqref{eq:CM-concentration-bound} by taking $k = 1$. The first term in \eqref{eq:CM-concentration-bound} dominates when $\O \geq c\beta\min(\mu_0^2 M,W) \log^2(MW)$. 

The upper bound on $\theta_2$ follows from the following Lemma. 
\begin{lem}\label{lem:CM-theta2bound}
Let $\cA^*: \reals^{\O} \rightarrow \comps^{M\times W}$be the adjoint of the linear operator $\cA$ defined in \eqref{eq:CM-M-Mux-meas}, and $\vm{\xi}$ be the noise random variable with statistics given in \eqref{eq:CM-noise-stats}, and $\|\vm{\xi}\|_{\psi_2} \leq \delta$. Then for $\beta \geq 1$, the conclusion: 
$$\|\cA^*(\vm{\xi})\|^2 \leq c \|\vm{\xi}\|_{\psi_2}\sqrt{\frac{\beta\max(W,M)\log(MW)}{\O}}$$
holds with probability at least $1-(MW)^{-\beta}$, when $\O \geq c\beta\min(W,M)\log^2(MW)$.
\end{lem}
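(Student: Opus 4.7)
The plan is to express $\cA^*(\vm{\xi})$ as a sum of independent zero-mean random matrices and apply the matrix Orlicz--Bernstein inequality of Proposition~\ref{prop:matbernpsi}. From the adjoint of \eqref{eq:CM-M-Mux-meas},
\[
\cA^*(\vm{\xi}) = \sum_{\o=1}^{\O} \xi[\o]\,\A_\o = \sum_{\o=1}^{\O} \xi[\o]\,\d_\o\ff_\o^* \;=:\; \sum_{\o=1}^{\O} \vm{Z}_\o,
\]
and each summand is mean zero because $\xi[\o]$ is independent of $\d_\o$ with $\E\d_\o = 0$. The iid assumption on $\vm{\xi}$ combined with \eqref{eq:matpsinorm} gives the per-coordinate bounds $\E\xi[\o]^2 \lesssim \|\vm{\xi}\|_{\psi_2}^2/\O$ and $\|\xi[\o]\|_{\psi_2} \lesssim \|\vm{\xi}\|_{\psi_2}/\sqrt{\O}$, which is the key link between the vector-level noise parameter in the hypothesis and the scalar moments needed by Bernstein.

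For the variance parameter in \eqref{eq:matbernsigma}, I would use $\E\d_\o\d_\o^* = \vm{I}_M$, $\|\d_\o\|_2^2 = M$, $\|\ff_\o\|_2^2 = W/\O$, and $\sum_\o \ff_\o\ff_\o^* = \tilde{\vm{F}}\tilde{\vm{F}}^* = \vm{I}_W$ to compute
\[
\Big\|\sum_\o \E\vm{Z}_\o\vm{Z}_\o^*\Big\| \le c\,\frac{W}{\O}\|\vm{\xi}\|_{\psi_2}^2, \qquad \Big\|\sum_\o \E\vm{Z}_\o^*\vm{Z}_\o\Big\| \le c\,\frac{M}{\O}\|\vm{\xi}\|_{\psi_2}^2,
\]
so that $\sigma_Z \lesssim \|\vm{\xi}\|_{\psi_2}\sqrt{\max(W,M)/\O}$. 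For the per-summand Orlicz constant, $\|\vm{Z}_\o\| = |\xi[\o]|\sqrt{MW/\O}$, hence $\|\vm{Z}_\o\|_{\psi_2} \lesssim \sqrt{MW}\|\vm{\xi}\|_{\psi_2}/\O$, identifying $U_2$ in Proposition~\ref{prop:matbernpsi}, and the log ratio $\log^{1/2}(\O U_2^2/\sigma_Z^2)$ is readily $\lesssim \log^{1/2}(MW)$.

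Putting these ingredients into \eqref{eq:matbernpsi} with $\alpha = 2$ and $t = \beta\log(MW)$ yields, with probability at least $1-(MW)^{-\beta}$,
\[
\|\cA^*(\vm{\xi})\| \le c\max\!\Big\{\|\vm{\xi}\|_{\psi_2}\sqrt{\tfrac{\beta\max(W,M)\log(MW)}{\O}},\; \|\vm{\xi}\|_{\psi_2}\tfrac{\sqrt{MW}\,\beta\log^{3/2}(MW)}{\O}\Big\}.
\]
Comparing the two terms shows the sub-Gaussian branch dominates exactly when $\O \ge c\beta\min(W,M)\log^2(MW)$, which is precisely the hypothesis of the lemma, and the stated inequality follows. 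The only real obstacle is the bookkeeping that transfers the vector-level Orlicz bound $\|\vm{\xi}\|_{\psi_2}\le\delta$ into per-coordinate moments and Orlicz norms; structurally the argument is a stripped-down version of the proof of Lemma~\ref{lem:CM-concentration}, with the random probe $\vm{w}_{k-1}$ there replaced here by the random coefficient vector $\vm{\xi}$.
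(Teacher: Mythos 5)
Your proposal is correct and follows essentially the same route as the paper's proof: the same decomposition $\cA^*(\vm{\xi}) = \sum_{\o=1}^{\O}\xi[\o]\A_\o$ into zero-mean summands, the same two variance bounds $c\,W\|\vm{\xi}\|_{\psi_2}^2/\O$ and $c\,M\|\vm{\xi}\|_{\psi_2}^2/\O$ via $\E\,\d_\o\d_\o^* = \vm{I}_M$ and $\sum_\o \ff_\o\ff_\o^* = \vm{I}_W$, the same per-summand bound $U_2 \leq c\|\vm{\xi}\|_{\psi_2}\sqrt{MW}/\O$, and the same application of Proposition~\ref{prop:matbernpsi} with $t = \beta\log(MW)$, ending with the identical comparison showing the subgaussian branch dominates precisely when $\O \geq c\beta\min(W,M)\log^2(MW)$. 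Even your per-coordinate bookkeeping $\|\xi[\o]\|_{\psi_2} \lesssim \|\vm{\xi}\|_{\psi_2}/\sqrt{\O}$ mirrors the step the paper takes implicitly from \eqref{eq:CM-noise-stats}, so there is no substantive difference between the two arguments.
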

Combining the above bounds with \eqref{eq:CM-Lambda-bound} gives
\begin{equation}\label{eq:CM-lambda-bound} 
\|\cA^*(\vm{y})-\E\cA^*(\vm{y})\| \leq c\sqrt{\frac{\beta\{\max(W,\mu_0^2 M)+\|\vm{\xi}\|_{\psi_2}^2\max(W,M)\}\log(MW)}{\O}}
\end{equation}
with high probability. The second term is meaningful in the minimum in \eqref{eq:CM-KLTbound} in Theorem \ref{thm:CM-KLTthm} when we select the sampling rate $\O$ large enough that makes $\lambda^2 \ll 1$. Theorem \ref{thm:CM-KLTthm}, and \eqref{eq:CM-lambda-bound} assert that
\[ 
\|\tilde{\vm{C}}-\vm{C}_0\|_{\F}^2 \leq c\|\vm{\xi}\|_{\psi_2} \leq c\delta,
\]
when $\O \geq c\beta R\max(W,\mu_0^2M)\log^2(MW)$, which does not violate the upper bounds on $\Omega$ in Corollary \ref{cor:CM-concentration}, and Lemma \ref{lem:CM-theta2bound}. This proves Theorem \ref{thm:CM-stablerec-CM1}.
\subsection{Proof of Lemma \ref{lem:CM-theta2bound}}
We will use the orlicz version of the matrix Bernstein's inequality \ref{prop:matbernpsi}. 
\begin{proof}
We are interested in bounding $\cA^*(\vm{\xi}) = \sum_{\o = 1}^\O \xi[\o]\A_\o$. Let $\z_\o = \xi[\o]\A_\o$. It is clear that $\E\z_\o = \vm{0}$, which follows by the independence of $\xi[\o]$, and $\A_\o$, and by the fact that $\E\xi[\o] = 0$. To use the Bernstein bound, we need to calculate the variance \eqref{eq:matbernsigma}. We begin with
\begin{align*} 
\|\sum_{\o = 1}^\O \E\z_\o\z_\o^*\| &= \left\|\E\sum_{\o = 1}^\O  \xi[\o]^2 \ff_\o\d_\o^*\d_\o\ff_\o^*\right\| =  \left\|\E\sum_{\o=1}^\O \xi[\o]^2\|\d_\o\|_2^2\ff_\o\ff_\o^*\right\|\\
& \leq M\max_\o \E\xi[\o]^2 \left\|\sum_{\o=1}^\O\ff_\o\ff_\o^*\right\| = M\max_{\o}\|\xi[\o]\|_{\psi_2}^2  \leq c\frac{M}{\O} \|\vm{\xi}\|_{\psi_2}^2, ~~~ \mbox{Using}~\eqref{eq:CM-noise-stats}
\end{align*}
Similarly, 
\begin{align*} 
\left\|\sum_{\o = 1}^\O \E\z_\o^*\z_\o\right\| &= \left\|\sum_{\o = 1}^\O\E\xi[\o]^2 \ff_\o^*\E (\d_\o\d_\o^*)\ff_\o\right\| \leq \max_{\o}\|\xi[n]\|_{\psi_2}^2\sum_{\o=1}^\O \|\ff_\o\|_2^2\\
&\leq c\frac{W}{\O}\|\vm{\xi}\|_{\psi_2}^2, ~~~ \mbox{Using}~\eqref{eq:CM-noise-stats}
\end{align*}
Then, we obtain 
\[
\sigma_Z^2 \leq c\|\vm{\xi}\|_{\psi_2}^2\frac{\max(W,M)}{\O}.
\]
Since $\|\z_\o\| = |\xi[\o]|\|\A_\o\| \leq |\xi[\o]|(MW)/\O$, we have
\begin{align*}
\|\z_\o\|_{\psi_2} & \leq c\|\vm{\xi}\|_{\psi_2}\sqrt{\frac{MW}{\O^2}}.
\end{align*}
Thus, 
\[
U_2\log^{1/2}\left(\frac{\O U_2^2 }{\sigma_Z^2}\right) \leq c\|\vm{\xi}\|_{\psi_2}\sqrt{\frac{MW}{\O^2}}\log^{1/2}(MW).
\]
Now using $t = \beta\log(MW)$, we obtain 
\[
\|\cA^*(\vm{\xi})\| \leq c\|\vm{\xi}\|_{\psi_2}\max\left\{\sqrt{\frac{\beta\max(W,M)\log(MW)}{\O}}, \sqrt{\beta^2\frac{MW\log^3(MW)}{\O^2}}\right\}
\]
with probability at least $1-(MW)^{-\beta}$. The first term in the minimum dominates when $\O \geq c\beta \min(W,M)\log^2(MW)$. This proves the Lemma.  
\end{proof}

\section{Proof of Theorem \ref{thm:CM-exactstablerec-CM2}: Matrix RIP for the FM-Mux}
\label{sec:CM-Theory3}

In this section, we will establish the matrix RIP for the operator $\cB$ defined in \eqref{eq:CM-PhiD}. The measurements in \eqref{eq:CM-PhiD} can be expressed as
\begin{align}
\vm{y} = \cB(\vm{C}_0) = \vm{\Phi}\vm{D}\cdot\mbox{vec}(\vm{C}_0\tilde{\vm{F}})\label{eq:CM-cB-def2},
\end{align}
where $\vm{\Phi} = [\vm{H}_1, \cdots, \vm{H}_M]$ is a block-circulant matrix, and $\vm{D}: \O M \times \O M$ is a large diagonal matrix formed by cascading smaller $\O \times \O$ diagonal matrices $\{\vm{D}_m\}_{ 1\leq m\leq M}$, defined earlier, along the diagonal. The proof of Theorem \ref{thm:CM-exactstablerec-CM2} is then just a combination of three existing results in the literature.

\begin{enumerate}	
	\item In \cite{candes11ti}, it is shown the matrix RIP for an operator $\mathcal{T}:\C^{M\times W}\rightarrow\R^\Omega$ follows immediately from establishing a concentration inequality.  In particular, if for any fixed $M \times W$ matrix $\vm{C}$,
\begin{equation}\label{eq:Yaniv-result1}
		\P{\left| \|\mathcal{T}(\vm{C})\|_2^2 - \|\vm{C}\|_{\F}^2 \right| > \delta\|\vm{C}\|_{\F}^2}
		~\leq~
		2\er^{-\Omega/t}
\end{equation} 
	for $\delta = 0.3/2$, then the linear operator $\mathcal{T}$ satisfies the low-rank RIP when 
\begin{equation}\label{eq:Yaniv-result2}
		\Omega ~\gtrsim~ t R(W+M) \Rightarrow \delta_{2R}(\cB) ~\leq~0.3,
\end{equation}
with probability at least $1-c\er^{-d \O}$ for fixed constants $c,d >0$, and an appropriately chosen $t$ that depends on $\delta$.  
	
	\item In \cite{krahmer2010new}, it is shown that if a matrix $\vm{\Phi}$ obeys the sparse RIP, 
	\[
		0.85\|\vm{x}\|_2^2 ~\leq~ \|\vm{\Phi}\vm{x}\|_2^2 ~\leq~ 1.15\|\vm{x}\|_2^2,
	\]
	for all length-$\O M$, $K$-sparse vectors $\vm{x}$, then for an arbitrary fixed $\vm{x}$, the matrix $\vm{\Phi}\vm{D}$ obeys the concentration inequality
	\[
		\P{\left|\|\vm{\Phi}\vm{D}\vm{x}\|_2^2 - \|\vm{x}\|_2^2\right| > 0.15\|\vm{x}\|_2^2} ~\leq~
		2\er^{-K/c_1}
	\]
	for a fixed constant $c_1 >0$. We can just as well take $\vm{x} = \vc(\vm{C}\tilde{\vm{F}})$ for a fixed $M \times W$ matrix $\vm{C}$ to obtain 
	\begin{align*}
\P{|\|\mathbf{\Phi}\vm{D}\vc(\vm{C}\tilde{\vm{F}})\|_2^2-\|\vc(\vm{C}\tilde{\vm{F}})\|_2^2|> 0.15\|\vc(\vm{C}\tilde{\vm{F}})\|_2^2} \leq 2\er^{-K/c_1}.
\end{align*}
Obviously, $\|\vc(\vm{C}\tilde{\vm{F}})\|_2^2 = \|\vm{C}\tilde{\vm{F}}\|_{\F}^2$, and using the fact that the rows of $\tilde{\vm{F}}$ are orthonormal vectors, we have $\|\vm{C}\tilde{\vm{F}}\|_{\F}^2 = \|\vm{C}\|_{\F}^2$, and by the definition of $\cB$, we have the concentration inequality for the linear operator $\cB$
\begin{align}\label{eq:conc1}
\P{|\|\cB(\vm{C})\|_2^2-\|\vm{C}\|_{\F}^2|\geq 0.15\| \vm{C}\|_{\F}^2} \leq 2\er^{-K/c_1}.
\end{align}

\item In \cite{romberg2010sparse,romberg09mu}, the sparse RIP for $\O \times M\O$ random matrix $\vm{\Phi}$ was established for all length-$\O M$, $K$-sparse vectors $\vm{x}$ when  
	\[
		K~\leq~ c_2\Omega/\log^4(M\Omega)
	\]
 with probability at least $1 - c\er^{-d\O}$ for fixed constants $c,c_1,d >0$. This means
\begin{align}\label{eq:conc2}
\P{|\|\cB(\vm{C})\|_2^2-\|\vm{C}\|_{\F}^2|\geq 0.15\| \vm{C}\|_{\F}^2} \leq 2 \er^{-\Omega /t\log^{4}(M\Omega)},
\end{align} 
where $t = c_2/c_1$ and depends on the isometry constant $\delta$ in \eqref{eq:Yaniv-result1}.
 \item Combining the concentration result in \eqref{eq:conc2} with \eqref{eq:Yaniv-result1}, \eqref{eq:Yaniv-result2}, and taking  $\Omega \geq t\beta R(W+M)\log^5(M\O)$ establishes the matrix RIP, which proves Theorem \ref{thm:CM-exactstablerec-CM2}.
\end{enumerate}

In a very similar manner, we can also prove an RIP result and the sampling theorem for the FM-Mux in Figure \ref{fig:CM-FM-Mux2}. The measurements $\vm{y} \in \reals^\O$ in $t \in [0,1]$ can be written as 
\begin{align*}
\vm{y} = [\vm{D}_1\vm{H}_1, \ldots, \vm{D}_M\vm{H}_M]\cdot\mbox{vec}(\vm{C}_0\vm{F}),
\end{align*}
where the matrix $\vm{F}$ now represents a $W \times W$ DFT matrix and, as before, the $\tilde{\vm{F}}$ is the $W \times \O$ partial DFT matrix. The $\O \times \O$ matrices $\{\vm{D}_m\}_{m = 1}^M$ are for modulators but unlike the previous case the circulant filter matrices are now 
\[
\vm{H}_m = \tilde{\vm{F}}^*\hat{\vm{H}}_m\vm{F}, ~ m = 1, \ldots, M,
\]
where $\{\hat{\vm{H}}_m\}_{m = 1}^M$, as before, are $W \times W$ independent diagonal matrices containing independent subgaussian random variables along the diagonal. Define 
\[
\vm{\Phi} = [\vm{D}_1\tilde{\vm{F}}^*, \ldots, \vm{D}_M\tilde{\vm{F}}^*].
\] 
The results in \cite{romberg2010sparse,romberg09mu} also imply that the matrix $\vm{\Phi}$ above obeys an RIP property for sparse vectors, which can be extended to a concentration result when the columns of $\vm{\Phi}$ above are modulated by the independent random variables in the diagonal matrices $\{\hat{\vm{H}}_m\}_{m = 1}^M$. The concentration result then yields a low-rank RIP exactly as before, which says that the FM-Mux in Figure \ref{fig:CM-FM-Mux2} successfully reconstructs the signal ensemble when the ADC is operated at a rate $\O \sim R(W+M)\log^5(MW)$ samples per second. 

\bibliographystyle{IEEEtran}
\bibliography{CMux-references}

\begin{thebibliography}{10}
\providecommand{\url}[1]{#1}
\csname url@samestyle\endcsname
\providecommand{\newblock}{\relax}
\providecommand{\bibinfo}[2]{#2}
\providecommand{\BIBentrySTDinterwordspacing}{\spaceskip=0pt\relax}
\providecommand{\BIBentryALTinterwordstretchfactor}{4}
\providecommand{\BIBentryALTinterwordspacing}{\spaceskip=\fontdimen2\font plus
\BIBentryALTinterwordstretchfactor\fontdimen3\font minus
  \fontdimen4\font\relax}
\providecommand{\BIBforeignlanguage}[2]{{%
\expandafter\ifx\csname l@#1\endcsname\relax
\typeout{** WARNING: IEEEtran.bst: No hyphenation pattern has been}%
\typeout{** loaded for the language `#1'. Using the pattern for}%
\typeout{** the default language instead.}%
\else
\language=\csname l@#1\endcsname
\fi
#2}}
\providecommand{\BIBdecl}{\relax}
\BIBdecl

\bibitem{fazel02ma}
M.~Fazel, ``Matrix rank minimization with applications,'' Ph.D. dissertation,
  Stanford University, March 2002.

\bibitem{recht10gu}
B.~Recht, M.~Fazel, and P.~Parrilo, ``Guaranteed minimum-rank solutions of
  linear matrix equations via nuclear norm minimization,'' \emph{SIAM Review},
  vol.~52, no.~3, pp. 471--501, 2010.

\bibitem{candes09ex}
E.~Cand\`es and B.~Recht, ``Exact matrix completion via convex optimization,''
  \emph{Found. Comput. Math.}, vol.~9, no.~6, pp. 717--772, 2009.

\bibitem{keshavan10ma}
R.~Keshavan, A.~Montanari, and S.~Oh, ``Matrix completion from a few entries,''
  \emph{{IEEE} Trans. Inform. Theory}, vol.~56, no.~6, pp. 2980--2998, 2010.

\bibitem{fazel2008compressed}
{M. Fazel and E. Cand\`es and B. Recht and P. Parrilo}, ``Compressed sensing
  and robust recovery of low rank matrices,'' in \emph{Proc. {IEEE} Asilomar
  Conf. on Sig. Syst. and Comp.}, Pacific Grove, CA, 2008, pp. 1043--1047.

\bibitem{gross11re}
D.~Gross, ``Recovering low-rank matrices from few coefficients in any basis,''
  \emph{{IEEE} Trans. Inform. Theory}, vol.~57, no.~3, pp. 1548--1566, 2011.

\bibitem{candes10ma}
E.~Cand\`es and Y.~Plan, ``Matrix completion with noise,'' \emph{Proc. IEEE},
  vol.~98, no.~6, pp. 925--936, 2010.

\bibitem{laska07th}
{J. Laska and S. Kirilos and M. Duarte and T. Raghed and R. Baraniuk and Y.
  Massoud}, ``Theory and implementation of an analog-to-information converter
  using random demodulation,'' in \emph{Proc. {IEEE} Int. Symp. Circuits
  Syst.}, 2007, pp. 1959--1962.

\bibitem{yoo12a100}
J.~Yoo, S.~Becker, M.~Loh, M.~Monge, and E.~Cand\`es, ``A {100MHz-2GHz} 12.5x
  sub-{N}yquist rate receiver in 90nm {CMOS},'' in \emph{Proc. {IEEE} Radio
  Freq. Integr. Circuits Symp. (RFIC)}, 2012.

\bibitem{yoo12co}
J.~Yoo, C.~Turnes, E.~Nakamura, C.~Le, S.~Becker, E.~Sovero, M.~Wakin,
  M.~Grant, J.~Romberg, A.~Emami-Neyestanak, and E.~Cand\`es, ``A compressed
  sensing parameter extraction platform for radar pulse signal acquisition,''
  \emph{Submitted to {IEEE} J. Emerg. Sel. Topics Circuits Syst.}, February
  2012.

\bibitem{mishali11xasub}
{M. Mishali and Y. Eldar and O. Dounaevsky and E. Shoshan}, ``Xampling: Analog
  to digital at sub-{Nyquist} rates,'' \emph{{IET} Circuits Devices Syst.},
  vol.~5, no.~1, pp. 8--20, 2011.

\bibitem{murray11de}
T.~Murray, P.~Pouliquen, A.~Andreou, and K.~Lauritzen, ``Design of a {CMOS A2I}
  data converter: Theory, architecture and implementation,'' in \emph{Proc.
  {IEEE} Annu. Conf. Inform. Sci. Syst. (CISS)}, Baltimore, MD, 2011, pp. 1--6.

\bibitem{schmidt86mu}
R.~O. Schmidt, ``Multiple emitter location and signal parameter estimation,''
  \emph{IEEE Trans. Antennas Propag.}, vol. AP-34, pp. 276--280, 1986.

\bibitem{dataset}
T.~Blanche, ``Multi-neuron recordings in primary visual cortex.
  \url{CRCNS.org.}'' \url{http://dx.doi.org/10.6080/K0MW2F2J}, 2009.

\bibitem{blanche2005polytrodes}
{T. Blanche and M. Spacek and J. Hetke and N. Swindale}, ``Polytrodes:
  high-density silicon electrode arrays for large-scale multiunit recording,''
  \emph{J. Neurophysiology}, vol.~93, no.~5, pp. 2987--3000, 2005.

\bibitem{frey07ce}
U.~Frey, C.~Sanchez-Bustamante, T.~Ugniwenko, F.~Heer, J.~Sedivy, S.~Hafizovic,
  B.~Roscic, M.~Fussenegger, A.~Blau, U.~Egert \emph{et~al.}, ``Cell recordings
  with a {CMOS} high-density microelectrode array,'' in \emph{Conf. Proc.
  {IEEE} Eng. Med. Biol. Soc. (EMBS)}, 2007, pp. 167--170.

\bibitem{imfeld08la}
K.~Imfeld, S.~Neukom, A.~Maccione, Y.~Bornat, S.~Martinoia, P.~Farine,
  M.~Koudelka-Hep, and L.~Berdondini, ``Large-scale, high-resolution data
  acquisition system for extracellular recording of electrophysiological
  activity,'' \emph{IEEE Trans. Biomed. Eng.}, vol.~55, no.~8, pp. 2064--2073,
  2008.

\bibitem{haas08pr}
A.~Haas, ``Programmable high density cmos microelectrode array,'' in
  \emph{Proc. {IEEE} Conf. Sensors}, Lecce, Italy, 2008, pp. 890--893.

\bibitem{gray04di}
D.~Gray, J.~Tan, J.~Voldman, and C.~Chen, ``Dielectrophoretic registration of
  living cells to a microelectrode array,'' \emph{J. Biosens. and
  Bioelectron.}, vol.~19, no.~7, pp. 771--780, 2004.

\bibitem{slavinsky11co}
J.~Slavinsky, J.~Laska, M.~Davenport, and R.~Baraniuk, ``The compressive
  multiplexer for multi-channel compressive sensing,'' in \emph{Proc. {IEEE}
  Int. Conf. Acoust., Speech, and Sig. Process. (ICASSP)}, Prague, Czech
  Republic, May 2011, pp. 3980--3983.

\bibitem{romberg2010sparse}
{J. Romberg and R. Neelamani}, ``Sparse channel separation using random
  probes,'' \emph{Inverse Problems}, vol.~26, no.~11, p. 115015, 2010.

\bibitem{tropp2010beyond}
{J. Tropp and J. Laska and M. Duarte and J. Romberg, and R. Baraniuk}, ``Beyond
  nyquist: Efficient sampling of sparse bandlimited signals,'' \emph{{IEEE}
  Trans. Inform. Theory}, vol.~56, no.~1, pp. 520--544, 2010.

\bibitem{mishali2009blind}
{M. Mishali and Y. Eldar}, ``Blind multiband signal reconstruction: Compressed
  sensing for analog signals,'' \emph{{IEEE} Trans. Sig. Process.}, vol.~57,
  no.~3, pp. 993--1009, 2009.

\bibitem{candes2013phaselift}
E.~Cand\`es, T.~Strohmer, and V.~Voroninski, ``Phaselift: Exact and stable
  signal recovery from magnitude measurements via convex programming,''
  \emph{Commun. Pure and Appl. Math.}, vol.~66, no.~8, pp. 1241--1274, 2013.

\bibitem{candes2012solving}
E.~Cand\`es and X.~Li, ``Solving quadratic equations via phaselift when there
  are about as many equations as unknowns,'' \emph{Found. Comput. Math.}, pp.
  1--10, 2012.

\bibitem{demanet2014stable}
L.~Demanet and P.~Hand, ``Stable optimizationless recovery from phaseless
  linear measurements,'' \emph{J. Fourier Anal. and Appl.}, vol.~20, no.~1, pp.
  199--221, 2014.

\bibitem{ahmed2012blind}
{A. Ahmed and B. Recht and J. Romberg}, ``Blind deconvolution using convex
  programming,'' \emph{{IEEE} Trans. Inform. Theory}, vol.~60, no.~3, pp.
  1711--1732, 2014.

\bibitem{tropp2006random}
{J. Tropp and M. Wakin and M. Duarte and D. Baron and R. Baraniuk}, ``Random
  filters for compressive sampling and reconstruction,'' in \emph{Proc. {IEEE}
  Int. Conf. Acoust., Speech, and Sig. Process. (ICASSP)}, Toulouse, France,
  2006, pp. 872--875.

\bibitem{romberg2009compressive}
{J. Romberg}, ``Compressive sensing by random convolution,'' \emph{SIAM J.
  Imag. Sci.}, vol.~2, no.~4, pp. 1098--1128, 2009.

\bibitem{haupt10to}
{J. Haupt and W. Bajwa and G. Raz and R. Nowak}, ``Toeplitz compressed sensing
  matrices with applications to sparse channel estimation,'' \emph{{IEEE}
  Trans. Inform. Theory}, vol.~56, no.~11, pp. 5862--5875, 2010.

\bibitem{rauhut12re}
{H. Rauhut and J. Romberg and J. Tropp}, ``Restricted isometries for partial
  random circulant matrices,'' \emph{Appl. Comput. Harmonic Anal.}, vol.~32,
  no.~2, pp. 242--254, 2012.

\bibitem{becker2012tfocs}
S.~Becker, E.~J. Candes, and M.~Grant, ``Tfocs v1. 1 user guide,'' 2012.

\bibitem{becker2010templates}
{S. Becker and E. Cand{\`e}s and M. Grant}, ``Templates for convex cone
  problems with applications to sparse signal recovery,'' \emph{Math. Prog.
  Comput.}, pp. 1--54, 2010.

\bibitem{MINFUNC}
M.~Schmidt, ``{minFunc}: unconstrained differentiable multivariate optimization
  in {Matlab},'' \url{http://www.di.ens.fr/~mschmidt/Software/minFunc.html},
  2012.

\bibitem{recht11pa}
B.~Recht and C.~R{\'e}, ``Parallel stochastic gradient algorithms for
  large-scale matrix completion,'' \emph{Math. Prog. Comput.}, pp. 1--26, 2011.

\bibitem{Lee10}
J.~Lee, B.~Recht, N.~Srebro, R.~Salakhutdinov, and J.~Tropp, ``Practical
  large-scale optimization for max-norm regularization,'' in \emph{Adv. Neural
  Inform. Process. Syst. (NIPS)}, 2010, pp. 1297--1305.

\bibitem{candes11ti}
E.~Cand\`es and Y.~Plan, ``Tight oracle bounds for low-rank matrix recovery
  from a minimal number of random measurements,'' \emph{{IEEE} Trans. Inform.
  Theory}, vol.~57, no.~4, pp. 2342--2359, 2011.

\bibitem{recht11si}
B.~Recht, ``A simpler approach to matrix completion,'' \emph{J. Mach. Learn.
  Res.}, vol.~12, no.~12, pp. 3413--3430, December 2011.

\bibitem{koltchinskii10nu}
V.~Koltchinskii, K.~Lounici, and A.~Tsybakov, ``Nuclear-norm penalization and
  optimal rates for noisy low-rank matrix completion,'' \emph{Ann. Stat.},
  vol.~39, no.~5, pp. 2302--2329, 2011.

\bibitem{mohan10ne}
K.~Mohan and M.~Fazel, ``New restricted isometry results for noisy low-rank
  recovery,'' in \emph{Proc. {IEEE} Int. Symp. Inform. Theory (ISIT)}, Austin,
  Texas, June 2010.

\bibitem{malvar89lo}
H.~Malvar and D.~Staelin, ``The {LOT}: Transform coding without blocking
  effects,'' \emph{{IEEE} Trans. Acoust., Speech, Sig. Process.}, vol.~37, pp.
  553--559, April 1989.

\bibitem{asif13sp}
M.~S. Asif and J.~Romberg, ``Sparse recovery of streaming signals using
  $\ell_1$-homotopy,'' \emph{{IEEE} Trans. Sig. Process.}, vol.~PP, p.~1, 2014.

\bibitem{tropp12us}
J.~Tropp, ``User-friendly tail bounds for sums of random matrices,''
  \emph{Found. Comput. Math.}, vol.~12, no.~4, pp. 389--434, 2012.

\bibitem{vershynin10in}
R.~Vershynin, \emph{Compressed sensing: theory and applications}, Y.~C. Eldar
  and G.~Kutyniok, Eds.\hskip 1em plus 0.5em minus 0.4em\relax Cambridge
  University Press, 2012.

\bibitem{vandervaart96we}
A.~V. der Vaart and J.~Wellner, \emph{Weak Convergence and Empirical
  Processes}.\hskip 1em plus 0.5em minus 0.4em\relax Springer, 1996.

\bibitem{krahmer2010new}
F.~Krahmer and R.~Ward, ``New and improved johnson-lindenstrauss embeddings via
  the restricted isometry property,'' \emph{{SIAM} J. Math. Anal.}, vol.~43,
  no.~3, pp. 1269--1281, 2011.

\bibitem{romberg09mu}
J.~Romberg, ``Multiple channel estimation using spectrally random probes,'' in
  \emph{Proc. {SPIE} Conf. Wavelets XIII}, vol. 7446, San Diego, CA, August
  2009, pp. 744\,606--1--6.

\end{thebibliography}

\end{document}